\documentclass[reqno]{amsart}
\usepackage[foot]{amsaddr}
\usepackage{graphicx}
\graphicspath{ {./figures/} }
\usepackage[margin=3cm]{geometry}
\usepackage{amsmath, amssymb,amsthm}
\usepackage[scr=rsfs]{mathalpha}
\usepackage[shortlabels]{enumitem}
\usepackage{mlmodern}
\usepackage{mathtools} % multlined
\usepackage{stmaryrd} % to get [[ ]] brackets
\usepackage{tikz}
\usetikzlibrary{decorations.pathreplacing,patterns,math,external,decorations.pathmorphing,arrows.meta}
\usepackage{upref} % \ref is upright even in italicized environments
\usepackage[colorlinks]{hyperref}
\hypersetup{citecolor=blue,filecolor=blue,linkcolor=blue,urlcolor=navyblue}
\definecolor{navyblue}{rgb}{0.0, 0.0, 0.5}
\usepackage{CJKutf8}
\DeclareRobustCommand{\rc}[1]{\begin{CJK*}{UTF8}{gbsn}#1\end{CJK*}}

\newtheorem{thm}{Theorem}[section]
\newtheorem{prop}[thm]{Proposition}
\newtheorem{lem}[thm]{Lemma}
\newtheorem{cor}[thm]{Corollary}
\theoremstyle{definition}

\newtheorem{rmk}{Remark}[section]

\newtheorem{ex}{Example}[section]

\newtheorem*{claim*}{Claim}

\newcommand{\N}{\mathbb{N}}
\newcommand{\Z}{\mathbb{Z}}
\newcommand{\R}{\mathbb{R}}
\newcommand{\C}{\mathbb{C}}

\renewcommand{\Im}{\operatorname{\mathrm{Im}}}
\renewcommand{\Re}{\operatorname{\mathrm{Re}}}
\renewcommand{\mod}{\,\operatorname{mod}\,}

\newcommand{\intbrr}[1]{\llbracket#1\rrbracket}

\newcommand{\E}{\mathbf{E}}
\newcommand{\oneb}{\mathbf{1}}
\renewcommand{\P}{\mathbf{P}}

\newcommand{\RN}{\mathcal{N}}

\newcommand{\Tr}{\operatorname{Tr}}
\newcommand{\Var}{\operatorname{Var}}
\newcommand{\dist}{\operatorname{dist}}

\newcommand{\tlog}{\gamma}
\newcommand{\dsim}{\sim}
\newcommand{\vsim}{\sim}

\newcommand{\udm}[2]{U^{\langle \frac{#1}{M}\backslash#2\rangle}}
\newcommand{\udmnom}[2]{U^{\langle#1\backslash#2\rangle}} 
\newcommand{\budm}[2]{\bar U^{\langle \frac{#1}{M}\backslash#2\rangle}}
\newcommand{\splm}[2]{Z_{\langle \frac{#1}{M}\rangle,#2}}
\newcommand{\splmnom}[2]{Z_{\langle #1\rangle,#2}}
\newcommand{\hs}{\mathrm{hs}}
\newcommand{\id}{\operatorname{id}}
\newcommand{\TV}{\mathrm{TV}}
\newcommand{\srw}{\mathrm{srw}}

\newcommand{\ui}[1]{u^{[#1]}}
\newcommand{\bui}[1]{{\bar u^{[#1]}}}
\newcommand{\Bi}[1]{B^{[#1]}}

\newcommand{\bj}[1]{b^{[#1]}}
\newcommand{\Ti}[1]{T^{[#1]}}
\newcommand{\spi}[1]{P^{[#1]}}
\newcommand{\pii}[1]{\pi^{[#1]}}
\DeclareMathAlphabet{\mathcalboondox}{U}{BOONDOX-calo}{m}{n}
\newcommand{\utwo}{\mathcalboondox{u}}

\newcommand{\tmixs}{t_\mathrm{mix}^*}
\newcommand{\tmix}{t_\mathrm{mix}}
\newcommand{\spl}[1]{Z_{#1}}
\newcommand{\pt}[1]{p^{\langle#1\rangle}}
\newcommand{\tmeet}{t_\mathrm{meet}}
\newcommand{\tmeets}{t_\mathrm{meet}^*}
\newcommand{\Pxy}{\P_{xy}}
\newcommand{\fakepair}{\mathcal P_{12}}

\newcommand{\U}{\mathrm{U}}
\newcommand{\parentheses}[1]{\left(#1\right)}

\newcommand\numberthis{\stepcounter{equation}\tag{\theequation}}
\numberwithin{equation}{section}

\begin{document}

\title[]{Entanglement and circuit complexity in finite-depth random linear optical networks}

\author{\hbox{%
Laura Shou$^1$,
Joseph T. Iosue$^{1,2}$,
Yu-Xin Wang (\rc{王语馨})$^2$,
Victor Galitski$^1$,
Alexey V. Gorshkov$^{1,2}$
}\vspace{-\baselineskip}}

\address{\normalfont$^1$Joint Quantum Institute, Department of Physics, University of Maryland, College Park, MD 20742, USA}
\address{\normalfont$^2$Joint Center for Quantum Information and Computer Science,
NIST/University of Maryland, College Park, MD, 20742, USA}

\begin{abstract}
We study the growth of entanglement and circuit complexity in random passive linear optical networks as a function of the circuit depth. For entanglement dynamics, we start with an initial Gaussian state with all $n$ modes squeezed. For random brickwall circuits, we show that entanglement, as measured by the R\'enyi-2 entropy, grows at most diffusively as a function of the depth. In the other direction, for arbitrary circuit geometries we prove bounds on depths which ensure the average subsystem entanglement reaches within a constant factor of the maximum value in all subsystems, and bounds which ensure closeness of the random linear optical unitary to a Haar random unitary in $L^2$ Wasserstein distance. We also consider robust circuit complexity for random one-dimensional brickwall circuits, as measured by the minimum number of gates required in any circuit that approximately implements the linear optical unitary. Viewing this as a function of the number of modes and the circuit depth, we show the robust circuit complexity for random one-dimensional brickwall circuits scales at most diffusively in the depth with high probability. The corresponding Gaussian unitary $\tilde{\mathcal U}$ for the approximate implementation retains high output fidelity $|\langle\psi|\mathcal U^\dagger \tilde{\mathcal U}|\psi\rangle|^2$ for pure states $|\psi\rangle$ with constrained expected photon-number.
\end{abstract}

\maketitle

\tableofcontents

\section{Introduction}

A prominent class of quantum devices are those that utilize continuous variable resources, thereby allowing one to programmably perform computation or explore quantum phenomena with infinite dimensional (\textit{a.k.a.}~bosonic) Hilbert spaces \cite{braunstein2005quantum-informa,adesso2007entanglement-in,adesso2014continuous-vari}.
Typical physical realization involves optical or photonic devices, which naturally host linear optical (also known as Gaussian) components
\cite{slussarenko2019photonic-quantu,kok2007linear-optical-}.
Despite the promise of continuous variable devices \cite{albert2022bosonic-coding:}, 
much is known about discrete variable quantum devices that is still not understood in continuous variable systems.
One important such example is the dynamics of entanglement and circuit complexity in random circuits.
Random quantum circuits are a crucial component of many tasks that are thought to be efficiently solvable by quantum devices but not by classical devices \cite{hangleiter2023computational-a}.
In finite dimensional (i.e., discrete variable) quantum systems, random circuits have found applications to many areas, including black hole physics \cite{brown2018second-law-of-q,susskind2018black-holes-and,haferkamp2022linear-growth-o} and statistical mechanics \cite{fisher2023random-quantum-}.
Indeed, entanglement and circuit complexity as a function of circuit depth are well-understood in discrete variable systems \cite{nahum2017quantum,haferkamp2022linear-growth-o,fisher2023random-quantum-,chen2024incompressibility}.
In this work, we prove bounds on entanglement dynamics and circuit complexity in random linear optical circuits, thereby partially closing the gap between what is known about discrete and continuous variable systems.

A well-known continuous variable sampling task which is expected to be classically hard is boson sampling \cite{aa}, along with its variant Gaussian boson sampling (GBS) \cite{hamilton2017gaussian,kruse2019detailed}, which has been implemented in several large-scale experiments \cite{zhong2020quantum,zhong2021phase,madsen2022quantum,deng2023gaussian,liu2025robust}. In both protocols, an initial state on $n$ photonic modes is inserted into an $n$-mode passive linear optical network consisting of beamsplitters and phaseshifters, whose overall structure is described by an $n\times n$ unitary matrix $U$. The photons interfere in the linear optical network, and their resulting output state is measured in the photon-number basis. 
Sampling from this output distribution for Haar-random $U$ is argued (assuming some complexity theoretic assumptions) to be classically hard \cite{aa}, relying on classical hardness of exactly calculating \emph{permanents} or \emph{hafnians} of matrices \cite{valiant1979complexity}.

Such complexity theoretic arguments effectively assume a random, infinite-depth linear optical circuit, in which the linear optical unitary $U$ is given by the aforementioned Haar random $n\times n$ unitary matrix.
However, there is much interest in the behavior of \emph{finite-depth} linear optical networks, in particular in understanding the depth dependence of entanglement, complexity, thermalization, and their relationships.
In this paper, we study the first of these questions, and one type of the second, by proving rigorous upper and lower bounds on the growth of entanglement and circuit complexity in random linear optical circuits.
We measure entanglement using the R\'enyi-2 entropy, and starting from a Gaussian input state with all modes squeezed, which is directly relevant for experimental realizations of GBS \cite{madsen2022quantum}.
\begin{itemize}[leftmargin=*]
\item For random circuit elements in an arbitrary geometry, we prove an exact relation between expectation values of $|\langle x|U|y\rangle|^2$ and associated classical random walk probabilities (Theorem~\ref{thm:boson-rw}). 
\item Using this relation, we then consider one-dimensional brickwall circuits (Figure~\ref{fig:bricks}) and prove that, in contrast to random qubit brickwall circuits where entanglement grows ballistically \cite{nahum2017quantum}, the entanglement in random linear optical brickwall circuits grows at most diffusively in the depth with probability $1-1/\text{poly(depth)}$ (Theorem~\ref{thm:up}).
\item Additionally, for arbitrary circuit geometries, which allow for all-to-all connectivity, we prove bounds on the depth required to attain maximal order average entanglement, i.e. within a constant factor of the maximum possible value, in all subsystems, as well as on the depth required for $L^2$ Wasserstein closeness of $U$ to Haar random (Theorems~\ref{thm:lower} and \ref{thm:gen}). 
\end{itemize}

Using some of the above results, we also obtain bounds on the circuit complexity for $U$.
Informally, we take the exact circuit complexity of a unitary matrix $U$ to be the minimum number of gates required to construct $U$. Approximate or robust circuit complexity is then the minimum number of gates required to construct a unitary $\tilde U$ that approximates $U$ in some norm. 
For random qubit circuits, it was shown in \cite{haferkamp2022linear-growth-o} that exact circuit complexity for random brickwall circuits, among a class including other circuit architectures, grows linearly in the number of gates until the saturation time. Linear growth of robust, or approximate, circuit complexity was shown in \cite{chen2024incompressibility}. The linear growth implies such circuits are incompressible, since they cannot be implemented (even approximately) with significantly fewer gates. 

In the continuous variable setting, the notion of circuit complexity is a bit more complicated, since the unitary evolution occurs on infinite-dimensional Hilbert space. We can still consider the circuit complexity of the $n\times n$ linear optical unitary $U$, in the sense of the minimum number of beamsplitter-phaseshifter gates required in any circuit $\tilde U$ which approximates $U$ in Hilbert--Schmidt norm.
Then using \cite{becker2021energy}, this approximation lifts to the infinite-dimensional Hilbert space and implies high fidelity between $\mathcal U|\psi\rangle$ and $\tilde{\mathcal{U}}|\psi\rangle$, where $\mathcal U$ and $\tilde{\mathcal U}$ are the corresponding Gaussian unitaries acting on the (infinite-dimensional) Hilbert space,  for pure states $|\psi\rangle$ with constrained expected photon-number. 
We consider random one-dimensional brickwall linear optical circuits, and consider the robust circuit complexity as a function of the number of modes $n$ and the depth $d$.
As with entanglement, we find different behavior in circuit (gate count) complexity growth as a function of the depth $d$ compared to random qubit circuits:

\begin{itemize}[leftmargin=*]
\item Random one-dimensional brickwall linear optical circuits \emph{are} compressible, in the sense that the robust circuit complexity for the linear optical unitary $U$ scales at most diffusively in the depth $d$ with high probability (Theorem~\ref{thm:complexity}). 
In particular, a depth-$d$ random one-dimensional brickwall linear optical unitary $U$, which uses $\Theta(nd)$ beamsplitter-phaseshifter gates, can with high probability be implemented approximately by a linear optical network with only $O(n\sqrt{d\log d})$ gates.
\end{itemize}

\subsection{Set-up}
We consider passive linear optical networks consisting of $2$-mode beamsplitters and phaseshifters which are arranged in some geometry.
Our main example will be the brickwall circuit illustrated in Figure~\ref{fig:bricks}, although in some cases we will also consider arbitrary circuit geometries, which allow for all-to-all connectivity. In the linear optical circuit drawings, each $2$-mode gate depicts a $2\times2$ unitary matrix representing a beamsplitter combined with phaseshifters, and each $1$-mode gate depicts a $1\times1$ unitary matrix representing a phaseshifter. In the one-dimensional brickwall circuit drawn in Figure~\ref{fig:bricks}, a single ``step'' $U^{(i)}$ in the circuit consists of two adjacent layers of beamsplitters and phaseshifters, and the number of steps is the depth $d$ of the circuit. The matrix structure of a single brickwall step $U^{(i)}$ is defined below and shown in Figure~\ref{fig:blocks}.

\begin{figure}[htb]
\begin{tikzpicture}
\foreach \mode in {1,...,8}{
\draw (.5,-\mode/2) --++ (2.6,0);
\draw (3.75,-\mode/2) --++(1.4,0);
\ifnum\mode < 4
\node[left] at (.55,-\mode/2) {$\mode$};
\fi
}
\node[left] at (.55,-4) {$n$};
\node[left] at (.55,-2) {$\vdots$};
\begin{scope}[yshift=-.3cm,xshift=-.2cm]
\foreach \layer in {1,2,4}{
\begin{scope}[xshift=\layer*.05cm]
\foreach \lbrick in {0,...,3}{ % make left layer U(2) blocks
	\draw[fill=gray!30] (\layer,-\lbrick) rectangle (\layer+.4,-\lbrick-.9);
}
\foreach \rbrick in {0,...,2}{ % make right layer U(2) blocks
	\draw[fill=gray!30,yshift=-.5cm,xshift=.5cm] (\layer,-\rbrick) rectangle (\layer+.4,-\rbrick-.9);
}
% R layer top and bottom U(1) blocks
\draw[fill=gray!30,xshift=.5cm] (\layer,0) rectangle (\layer+.4,-.425);
\draw[fill=gray!30,xshift=.5cm] (\layer,-3.5) rectangle (\layer+.4,-3.5-.425);
% label layers
\draw[decoration={brace,raise=3pt,aspect=.5,amplitude=4pt},decorate] (\layer,0)--(\layer+.9,0);
\ifnum\layer < 4
    \node [above] at (\layer+.5,.25) {$U^{(\layer)}$};
\else
    \node [above] at (\layer+.5,.25) {$U^{(d)}$};
\fi
\end{scope}
}
\node[below] at (1.3,-4) {$U^L$};
\node[below] at (1.8,-4) {$U^R$};
\end{scope}
\node at (3.5,-2.25) {$\cdots$};
\node[rotate=90] at (0,-2) {$n$ input modes};
\end{tikzpicture}
\caption{A one-dimensional depth $d$ brickwall circuit on modes $\{1,\ldots,n\}$, $n\in2\N$, is constructed as the product of $d$ steps: $U=\prod_{i=d}^1U^{(i)}=U^{(d)}\cdots U^{(2)}U^{(1)}$. Each $2$-mode gate corresponds to a $2\times2$ unitary matrix representing a beamsplitter with phaseshifters, and each $1$-mode gate corresponds to a $1\times1$ unitary matrix representing a phase-shifter. 
Note that, in the circuit, $U^{(1)}$ is applied first, then $U^{(2)}$, and so on, so $U^{(1)}$ appears in the right-most position in the product for $U$.
}\label{fig:bricks}
\end{figure}

\begin{figure}[htb]
\begin{tikzpicture}
\def\psize{1.7cm}
\def\msize{1.6cm}
%structure inside
\def\w{.8cm} % block size
\begin{scope}[xshift=3.6cm]
\foreach \b in {0,1,3}{
	\draw[fill=gray!30,xshift=-\msize+\b*\w,yshift=\msize-\b*\w] (0,0)--(\w,0)--(\w,-\w)--(0,-\w)--cycle;
	\node at (-\msize+\b*\w+.4cm,\msize-\b*\w-.4cm) {$U(2)$};
}
\node at (.1,0) [below right] {$\ddots$};
% left and right parenthesis
\node at (-\psize,0) {$\left(\vphantom{\rule{\psize}{\psize}}\right.$};
\node at (\psize,0) {$\left.\vphantom{\rule{\psize}{\psize}}\right)$};
\draw[decoration={brace,raise=3pt,aspect=.5,amplitude=4pt,mirror},decorate] (-\psize,-\psize)--(\psize,-\psize);
\node[below] at (0,-\psize-2mm) {$U^L$};
\end{scope}
\begin{scope}[xshift=0cm] % right matrix
\def\w{.8cm} % block size
\foreach \b in {0.5,2.5}{ %2x2 blocks
	\draw[fill=gray!30,xshift=-\msize+\b*\w,yshift=\msize-\b*\w] (0,0)--(\w,0)--(\w,-\w)--(0,-\w)--cycle;
	\node at (-\msize+\b*\w+.4cm,\msize-\b*\w-.4cm) {$U(2)$};
}
\foreach \b in {0,7}{ % small 1x1 blocks
	\draw[fill=gray!30,xshift=-\msize+\b*\w/2,yshift=\msize-\b*\w/2] (0,0)--(\w/2,0)--(\w/2,-\w/2)--(0,-\w/2)--cycle;
	\node[below,xshift=-\msize+\b*\w/2+.2cm,yshift=\msize-\b*\w/2] at (\w/4cm,\w/4cm) {\scriptsize$U(1)$};
}
\node at (0,0) {$\ddots$};
% left and right parenthesis
\node at (-\psize,0) {$\left(\vphantom{\rule{\psize}{\psize}}\right.$};
\node at (\psize,0) {$\left.\vphantom{\rule{\psize}{\psize}}\right)$};
\draw[decoration={brace,raise=3pt,aspect=.5,amplitude=4pt,mirror},decorate] (-\psize,-\psize)--(\psize,-\psize);
\node[below] at (0,-\psize-2mm) {$U^R$};
\end{scope}
\node at (-2.5,0) {\Large $U^{(i)}=$};
\end{tikzpicture}
\caption{Matrix structure of $U^{(i)}$. A single step $U^{(i)}$ is the product of two mismatching block matrices $U^{R,(i)}=\utwo_0\oplus\left(\bigoplus_{j=1}^{n/2-1}\utwo_j\right)\oplus \utwo_{0}'$ and $U^{L,(i)}=\bigoplus_{j=1}^{n/2}\utwo_j$, where all $\utwo_j$ are $2\times2$ unitary blocks except for $\utwo_0$ and $\utwo_0'$ which are $1\times1$ unitary.
The values of the blocks $\utwo_j^{(i)}$ can differ between different steps $U^{(i)}$. When considering random brickwall circuits, we will always take all $2\times2$ and $1\times1$ blocks to be independent and Haar distributed, and in this case we may formally write $U^{(i)}\dsim U^RU^L$.
}\label{fig:blocks}
\end{figure}

We can formally define a one-dimensional brickwall unitary matrix $U$ as follows.
Let $n\in2\N$. 
Given \( \utwo_1,\dots, \utwo_{n-1} \in \U(2) \), and $\utwo_0,\utwo_0'\in\U(1)$, construct the $n\times n$ matrices
\begin{align}
    U^L(\utwo_1,\dots, \utwo_{n/2}) &= \bigoplus_{j=1}^{n/2} \utwo_j,\\
    U^R(\utwo_1,\dots, \utwo_{\frac{n}{2}-1},\utwo_0,\utwo_0') &= \utwo_0 \oplus \parentheses{\bigoplus_{j=1}^{\frac{n}{2}-1} \utwo_j} \oplus \utwo_0',\\
    f(\utwo_1,\dots, \utwo_{n-1},\utwo_0,\utwo_0') &=U^R(\utwo_{\frac{n}{2}+1}, \dots, \utwo_{n-1},\utwo_0,\utwo_0') U^L(\utwo_1,\dots,\utwo_{n/2}) .\label{eqn:uprod}
\end{align}
Let \( d\in\N \).
For \( j=1,\dots,n-1 \) and \( d'=1,\dots, d \), choose \( \utwo_j^{(d')}\in\U(2) \) and \(\utwo_0^{(d')},\utwo_0'^{(d')}\in\U(1)\) and set
\begin{equation}\label{eqn:u-brickwall}
    U = f(\utwo_1^{(d)},\dots, \utwo_{n-1}^{(d)},\utwo_0^{(d)},\utwo_0'^{(d)})f(\utwo_1^{(d-1)},\dots, \utwo_{n-1}^{(d-1)},\utwo_0^{(d-1)},\utwo_0'^{(d-1)}) \cdots f(\utwo_1^{(1)},\dots, \utwo_{n-1}^{(1)},\utwo_0^{(1)},\utwo_0'^{(1)}).
\end{equation}
For notational convenience, we will usually write \eqref{eqn:u-brickwall} as $U=\prod_{i=d}^1U^{(i)}\equiv U^{(d)}U^{(d-1)}\cdots U^{(1)}$.
The matrix $U$ is a depth-$d$ brickwall unitary matrix, composed of steps $U^{(d')}=f(\utwo_1^{(d')},\dots, \utwo_{n-1}^{(d')},\utwo_0^{(d')},\utwo_0'^{(d')})$, for $d'=1,\ldots,d$.
We will mostly consider the \emph{random} brickwall circuit where all $2\times2$ blocks \( \utwo_j^{(d')} \) are drawn independently from the Haar measure on $\U(2)$. 
In this case, we will write $U^{(i)}\dsim U^RU^L$.

\subsection{Main results}

For entanglement dynamics, we consider the Gaussian boson sampling protocol which starts with an initial Gaussian state with all modes squeezed with some squeezing parameter $s>0$, and then evolves the state according to a linear optical unitary $U$. For a subsystem $\Gamma\subseteq\{1,\ldots,n\}$, the resulting R\'enyi-2 entropy is $S_2=-\log\Tr\rho^2$, where $\rho$ is the density matrix for the reduced (evolved) state corresponding to $\Gamma$.
The R\'enyi-2 entropy can also be written as \cite{Serafini2017Quantum-Continu}
\begin{align}\label{eqn:s2-cov}
S_2(U)&=\frac{1}{2}\log\det\sigma(U)=\frac{1}{2}\Tr\log\sigma(U),
\end{align}
where $\sigma(U)$ is the covariance matrix of $\rho$.
Since the state on $n$ modes is a pure state, the R\'enyi-2 entropy for the subsystem $\Gamma$ is the same as for $\Gamma^c$ \cite{nielsen2010quantum}. Thus, without loss of generality, we will take $|\Gamma|\le n/2$.

We are interested in the behavior of the R\'enyi-2 entropy $S_2(U)$ as a function of the circuit depth $d$.
We first prove upper bounds on its growth in the one-dimensional brickwall circuit.

\begin{thm}[brickwall upper bounds]\label{thm:up}
Let $U=\prod_{i=d}^1U^{(i)}$ be a depth-$d$ one-dimensional brickwall circuit as defined in \eqref{eqn:u-brickwall} or Figure~\ref{fig:bricks}.
Consider the R\'enyi-2 entropy $S_2(U)$ for the subsystem consisting of the first $k$ modes, i.e. $\Gamma=\{1,2,\ldots,k\}$.
\begin{enumerate}[(i)]
\item (Worst-case bound) For any $d\in\N$, 
\begin{align}\label{eqn:worst-bound}
S_2(U)\le 4d\log\cosh(2s).
\end{align}
\item (Average-case bound) Let all the $2\times2$ and $1\times1$ blocks in $U$ be independent and drawn from Haar measure. Then letting $\E$ denote the average over these $U$,
\begin{align}\label{eqn:exp-bound}
\E S_2(U)\le O(\sqrt{d\log d})\sinh^2(2s).
\end{align}
Additionally, for any $\kappa\in\N$, there is a constant $C_\kappa$ so that 
\begin{align}\label{eqn:prob-bound}
\P\left[S_2(U)\le C_\kappa\sinh^2(2s)\sqrt{d\log d}\right]\ge 1-O(d^{-\kappa}).
\end{align} 
\end{enumerate}
\end{thm}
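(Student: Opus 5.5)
The plan is to reduce $S_2(U)$ to a bound in terms of the $k\times n$ block of $U$, and then control that block combinatorially for (i) and probabilistically for (ii).

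\emph{Step 1: a formula for $S_2$.} Start from the all-modes squeezed vacuum and use the complex $(a,a^\dagger)$ basis. The covariance of the output state $\mathcal U|\psi\rangle$ has diagonal blocks $\cosh(2s)I$ and off-diagonal blocks $\sinh(2s)UU^T$ and its conjugate. Restricting to $\Gamma=\{1,\dots,k\}$, write $U_\Gamma$ for the $k\times n$ matrix of the first $k$ rows, and set $A=U_\Gamma U_\Gamma^T$ and $M=AA^*$, so that $0\le M\le I_k$. Then \eqref{eqn:s2-cov} gives
\begin{align*}
S_2(U)=\tfrac12\Tr\log\bigl(I_k+\sinh^2(2s)(I_k-M)\bigr).
\end{align*}
I will use this in two ways. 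For the worst case, each eigenvalue of $I-M$ lies in $[0,1]$, so each contributes at most $\tfrac12\log\cosh^2(2s)=\log\cosh(2s)$. Hence $S_2\le \operatorname{rank}(I-M)\log\cosh(2s)$. For the average case, use $\log(1+x)\le x$. Let $P=U_\Gamma^*U_\Gamma$ (a rank-$k$ projection) and let $R$ be the coordinate projection onto $\Gamma$. Then
\begin{align*}
\Tr(I-M)=k-\Tr(P\bar P)=\|(I-\bar P)P\|_{\hs}^2\le 4\|P-R\|_{\hs}^2.
\end{align*}
Here we use that $R$ is real, so $\|\bar P-R\|_{\hs}=\|P-R\|_{\hs}$. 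Since $\|P-R\|_{\hs}^2=2(k-\Tr PR)=2\sum_{x\le k<y}|U_{xy}|^2$, this yields
\begin{align*}
S_2(U)\le C\sinh^2(2s)\sum_{x\le k,\;y>k}|U_{xy}|^2 .
\end{align*}

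\emph{Step 2: worst case.} In each step $U^{(i)}$ at most one $2\times2$ gate straddles the cut between $k$ and $k+1$. All other gates preserve the splitting $\Gamma\oplus\Gamma^c$. Each straddling gate equals a block-diagonal gate plus a rank-$\le2$ correction. Multiplying out, $U=D+E$ with $D$ block diagonal and $\operatorname{rank}E\le 2d$. Then $A=W+F$, where $W=D_{\Gamma\Gamma}D_{\Gamma\Gamma}^T$ is unitary and $\operatorname{rank}F\le 4d$. For $v\in\ker F^*$ we have $\|A^*v\|=\|v\|$, and since $\|A\|\le1$ this forces $Mv=v$. Therefore $\operatorname{rank}(I-M)\le 4d$, which gives \eqref{eqn:worst-bound}.

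\emph{Step 3: average case.} Theorem~\ref{thm:boson-rw} identifies $\E|U_{xy}|^2$ with the probability that a classical random walk driven by the brickwall geometry moves from $x$ to $y$ in $d$ steps. Each step of this walk is a lazy nearest-neighbour move, with reflection or holding at the ends of the chain. Thus $\E\sum_{x\le k<y}|U_{xy}|^2$ is the expected number of starting points $x\le k$ whose walk crosses the cut. Split this into two parts:
\begin{itemize}
\item starting points within distance $L$ of the cut, which contribute at most $L$ because each row of $U$ has norm $1$;
\item farther starting points, whose crossing probability is at most $e^{-cL^2/d}$ by Hoeffding's inequality for the walk's displacement.
\end{itemize}
Choosing $L=C\sqrt{d\log d}$ makes the far contribution $O(n\,d^{-\kappa'})$. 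I also need the far pairs with $k-x>d$ to contribute nothing: by the light cone, $U_{xy}=0$ whenever $|x-y|>2d$, and this removes the factor $n$. This gives \eqref{eqn:exp-bound}.

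For \eqref{eqn:prob-bound}, the near part is bounded deterministically by $L$. The far part $X=\sum_{x\le k-L,\,y>k}|U_{xy}|^2$ is nonnegative with $\E X\le d^{-\kappa}$ once $C$ is chosen large depending on $\kappa$. Markov's inequality then gives $\P[X>1]\le d^{-\kappa}$.

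The main obstacle is Step 3. I need to check carefully that the walk from Theorem~\ref{thm:boson-rw} has bounded, independent increments with the right laziness near the boundary modes $1$ and $n$, so that the Gaussian tail estimate holds uniformly in $k$. I also need the algebra of Step 1 to hold with the right constants.
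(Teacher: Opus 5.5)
Your proof is correct, and in both parts it takes a genuinely different route from the paper.

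\emph{Part (ii).} The paper keeps the exact identity $k-\Tr W=\sum_{x\le k<y}|\langle x|UU^T|y\rangle|^2$. It therefore has to control entries of $UU^T$, which are quartic in $U$. That is Proposition~\ref{prop:expbound}, proved via \eqref{eqn:uut-alpha}, Cauchy--Schwarz with $|U_{ab}|^4\le|U_{ab}|^2$, and, for the tail bound, a union bound over the intermediate index. Lemma~\ref{lem:moments} and the series \eqref{eqn:s2} then turn the trace bound into an entropy bound.

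Your estimate $k-\Tr(P\bar P)=\|(I-\bar P)P\|_\hs^2\le 4\|P-R\|_\hs^2$ instead gives the deterministic inequality $S_2(U)\le 4\sinh^2(2s)\sum_{x\in\Gamma,\,y\notin\Gamma}|U_{xy}|^2$. This is linear in second moments of $U$, so Theorem~\ref{thm:boson-rw} applies directly, and the high-probability statement needs only a single Markov inequality. Your use of $\log(1+x)\le x$ is the same estimate that Lemma~\ref{lem:moments} plus the series produce. Your route loses only a constant and holds for every geometry and every $\Gamma$, so Corollary~\ref{cor:upperbound2}(ii) follows the same way. It also shows that $\E S_2(U)$ is at most $4\sinh^2(2s)$ times the classical walk's flow across the cut.

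What the paper's route buys is exactness. The same identity for $k-\Tr W$ drives the lower bound in Theorem~\ref{thm:lowerbound}(i), where a one-sided inequality like yours is of no use. It also gives entrywise information on the effective band of $UU^T$.

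\emph{Part (i).} The paper uses that $UU^T$ has band width $4d$, so rows of $V$ more than $4d$ from the cut stay orthonormal. This yields the general bound $|\partial^-_{UU^T}\Gamma|\log\cosh(2s)$ of Remark~\ref{rmk:worst}. Your rank-perturbation argument instead bounds $S_2$ by $4\log\cosh(2s)$ times the number of gates straddling the cut. Both give $4d$ for the brickwall, and neither bound dominates the other in general geometries.

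Two details should be made explicit.
\begin{enumerate}
\item In Step 2, the block-diagonal replacement of each straddling gate $g$ must itself be unitary. For example, take the identity on modes $k,k+1$, with correction $g-I_2$, which still has rank at most $2$. Merely zeroing the off-diagonal entries would not make $D_{\Gamma\Gamma}D_{\Gamma\Gamma}^T$ unitary.
\item In Step 3, the walk does \emph{not} have independent increments, because the law of each step depends on the parity of the current site. The Gaussian tail you need is nonetheless exactly \eqref{eqn:rwprob2}. The paper obtains it by splitting into the mutually independent even steps and odd steps. The boundary is harmless because the reflecting map \eqref{eqn:R} is $1$-Lipschitz, so the displacement of $Z_t$ is dominated by that of the unreflected walk.
\end{enumerate}
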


\begin{rmk}
\begin{enumerate}
\item While the bounds are formulated for any depth $d$, the entropy $S_2(U)$ of a size $k$ subsystem is always bounded above by $k\log\cosh(2s)$, as can be seen from \eqref{eqn:s2-tanh}.
Thus in practice one should only consider these bounds for depths $d$ before the upper bounds reach the trivial upper bound.

\item Note that the bounds do not depend on $k$ or $n$, because they only depend on the size of the ``boundary'', i.e. the modes in $\Gamma$ that can talk to the outside modes at time $d$ (or in the average-case bound, the modes that can talk to the outside modes with large enough probability at time $d$). This isn't affected by $k$ (or $n$) until the boundary reaches close to the entire subsystem size $k$, at which point the trivial bound in (1) starts to take over.

\item While the coefficient 4 in \eqref{eqn:worst-bound} is likely not optimal, we do have at least worst-case linear-in-$d$ behavior,
\[
\sup_{\substack{U\in\U(n):\text{depth-$d$}\\\text{brickwall}}}S_2(U)\ge d\log\cosh(2s),
\]
for $d\le k \;(\le n/2)$. 
To see this, note that one can use \cite{clements2016optimal} to construct any $2d\times 2d$ unitary matrix in a brickwall circuit of depth $d$ (which has $2d$ layers), so we can construct the worst-case unitary from \cite[Appendix B]{iosue2023page} in depth $d$ and place it on the middle $2d$ modes around the subsystem cut. Additionally, we expect a brickwall network of fixed 50-50 complex beamsplitters to generate ballistic entanglement growth.

\item The probability bound \eqref{eqn:prob-bound} is not very useful for small depth $d$. However, as it numerically appears the variance of $S_2(U)$ may be roughly constant even at small depth $d$ (cf. Figure~\ref{fig:numerics}), we should perhaps not expect a very good typicality bound unless $d\to\infty$. 

\item The proofs or results of Theorem~\ref{thm:up}, as well as those of Theorems~\ref{thm:lower} and \ref{thm:gen}, can also apply to modified settings. For example, the same results (up to a constant factor) hold for the brickwall circuit with periodic boundary conditions.
Using \cite[Eq.~(B11)]{youm2025average}, one can also extend these results to R\'enyi-$\alpha$ entropies for $\alpha$ an integer $\ge2$.
\end{enumerate}
\end{rmk}

Since the expected number of photons in the system is $n\sinh^2(s)$, we see the bound \eqref{eqn:exp-bound} is very much a random walk bound in the sense of Figure~\ref{fig:paths}. 
Theorem~\ref{thm:up}(ii) shows that entanglement growth in random brickwall bosonic circuits grows at most diffusively (Figure~\ref{fig:paths}), in contrast to the linear entanglement growth in random brickwall qubit circuits \cite{nahum2017quantum}.

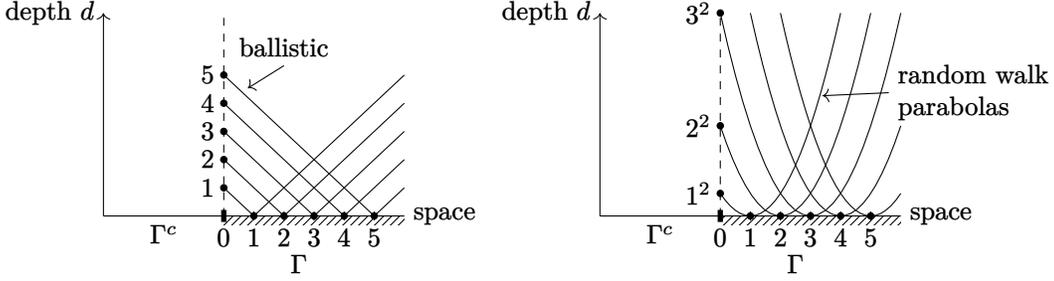
\begin{figure}[htb]
\begin{tikzpicture}[scale=.4]
\def\pscale{1.25}
\def\slopescale{.75} % make smaller for shallower lines
\def\height{.75*9/1.25} % put \pscale value in denominator, \psacle*\height from parabola in numerator
\draw (-4,0)--(6,0) node[right] {space};
\foreach \kmode in {1,...,5}{
\draw (\kmode,.15)--++(0,-.3) node[below] {$\kmode$};
\draw[scale =1, domain = 0:\kmode]  plot ({\x},{-\slopescale*\pscale*(\x-\kmode)});
\draw[scale =1, domain = \kmode:6]  plot ({\x},{\slopescale*\pscale*(\x-\kmode)});
\draw[fill=black] (0,\slopescale*\pscale*\kmode) circle (3pt) node[left] {$\kmode$};       
\draw[fill=black] (\kmode,0) circle (3pt);               
}
\draw[line width=2] (0,.2)--++(0,-.4);
\node[below] at (0,-.15) {$0$};
\draw[->] (-4,0)--++(0,\height*\pscale) node[left] {depth $d$};
\draw[dashed] (0,0)--++(0,\height*\pscale);
\node[below] at (-2,0) {$\Gamma^c$};
\node[below] at (2.5,-1) {$\Gamma$};
\fill[pattern = north east lines] (0,0) rectangle (6,-.3);
\draw[<-] (.8,3.4*\pscale)--(2,5) node[above] {ballistic};
\end{tikzpicture}
\begin{tikzpicture}[scale=.4]
\def\height{9}
\def\pscale{.75}
\draw (-4,0)--(6,0) node[right] {space};
\foreach \kmode in {1,...,5}{
\draw (\kmode,.15)--++(0,-.3) node[below] {$\kmode$};
\ifnum\kmode<4
\draw[scale =1, domain = 0:{sqrt(\height)+\kmode}]  plot ({\x},{\pscale*(\x-\kmode)*(\x-\kmode)});
\draw[fill=black] (0,\pscale*\kmode*\kmode) circle (3pt) node[left] {$\kmode^2$}; 
\else
\draw[scale =1, domain = {\kmode-sqrt(\height)}:6]  plot ({\x},{\pscale*(\x-\kmode)*(\x-\kmode)});
\fi 
\draw[fill=black] (\kmode,0) circle (3pt);
}
\draw[line width=2] (0,.2)--++(0,-.4);
\node[below] at (0,-.15) {$0$};
\draw[->] (-4,0)--++(0,\height*\pscale) node[left] {depth $d$};
\draw[dashed] (0,0)--++(0,\height*\pscale);
\node[below] at (-2,0) {$\Gamma^c$};
\node[below] at (2.5,-1) {$\Gamma$};
\fill[pattern = north east lines] (0,0) rectangle (6,-.3);
\draw[<-] (3.4,4)--++(2.2,.1) node[right] {\parbox[t]{2cm}{random walk parabolas}};
\end{tikzpicture}
\caption{Depiction of how single photons split into superpositions. The two halves of the superposition propagate in opposite directions, ballistically (left, worst-case) or diffusively like in a random walk (right, average-case). In the diffusive case, after depth $d$ only $\sim O(\sqrt{d})$ trajectories have crossed the cut between subsystems.}\label{fig:paths}
\end{figure}

The proof of Theorem~\ref{thm:up}(i) is via a light-cone argument, which can also be applied to arbitrary circuit geometries (Remark~\ref{rmk:worst}). 
For the proof of Theorem~\ref{thm:up}(ii), we will relate $U$ to a classical random walk on an interval, and use this to bound the R\'enyi-2 entropy in terms of classical random walk probabilities.
More generally, we prove an exact relation between expectation values of entries of $U$ for random circuits with arbitrary geometry and an associated classical random walk (Theorem~\ref{thm:boson-rw}).

Before moving to lower bounds, we note that Theorem~\ref{thm:up} can be adapted for $D$-dimensional brickwork circuits as follows. 
\begin{cor}[$D$-dimensional brickwork upper bounds]\label{cor:upperbound2}
Consider the $D$-dimensional brickwork circuit on $n=m^D$ modes $\{1,\ldots,m\}^D$ for $m\in2\N$, for which a single step (consisting of $2D$ layers) applies the 1D brickwall pattern sequentially in each dimension; see Remark~\ref{rmk:worst} for a more precise definition.
Consider the subsystem $\Gamma=\{(x_1,\ldots,x_D):1\le x_j\le k_j\}$ for some choice of $k_1,\ldots,k_D$, and define $A_\Gamma:=\sum_{j=1}^D\oneb_{k_j<m}\prod_{\substack{\ell=1\\\ell\ne j}}^Dk_\ell$ to be the ``boundary area'' of $\Gamma$. For example, if all $k_j=k<m$, then $A_\Gamma=Dk^{D-1}$.
 \begin{enumerate}[(i)]
\item (Worst-case bound) For any $d\in\N$,
\begin{align}
S_2(U)&\le 4d A_\Gamma\log\cosh(2s).
\end{align}
\item (Average-case bound) Let all the $2\times2$ and $1\times1$ blocks in the circuit be independent and drawn from Haar measure. Then
\begin{align}
\E S_2(U)&\le O(\sqrt{Dd\log d})A_\Gamma\sinh^2(2s).
\end{align}
\end{enumerate}
\end{cor}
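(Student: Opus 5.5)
The plan is to reduce Corollary~\ref{cor:upperbound2} to the one-dimensional argument behind Theorem~\ref{thm:up}, face by face. We begin with the structure of $S_2$ that both parts of Theorem~\ref{thm:up} rely on. For an all-modes-squeezed input, the reduced covariance matrix on $\Gamma$ depends on $U$ only through the $|\Gamma|\times|\Gamma|$ block $W:=(U U^T)_{\Gamma\Gamma}$. Its eigenvalue deficit $\mathbf 1-W^*W$ controls the entropy: each singular value $t_j$ of $W$ contributes at most $\log\cosh(2s)$, and roughly $\sinh^2(2s)(1-t_j^2)$, to $S_2$. This gives
\[
S_2(U)\le \min\Bigl\{\operatorname{rank}(\mathbf 1-W^*W)\log\cosh(2s),\ C\sinh^2(2s)\Tr(\mathbf 1-W^*W)\Bigr\}.
\]
Moreover,
\[
\Tr(\mathbf 1-W^*W)\le 2\sum_{x\in\Gamma}\sum_{y\notin\Gamma}|(UU^T)_{xy}|^2\le 4\sum_{x\in\Gamma}\sum_{z\notin\Gamma}|U_{zx}|^2 .
\]
The last step uses the unitarity of $U$: rows indexed by $\Gamma$ leak outside $\Gamma$ only through amplitude sent to $\Gamma^c$. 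We will take these two inequalities as the content of the proof of Theorem~\ref{thm:up}, and apply them verbatim with $\Gamma$ the box.

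For the worst-case bound (i), I would use a light cone. One step of the $D$-dimensional circuit applies, in each coordinate $j$, a 1D brickwall layer pair. Each such pair can only move amplitude across the hyperplane $x_j=k_j$ through the modes within distance $2$ of that hyperplane. Hence after $d$ steps the only modes of $\Gamma$ that can be coupled to $\Gamma^c$ lie in a slab of width $\le 4d$ around each boundary face with $k_j<m$. The same counting used in Remark~\ref{rmk:worst} / Theorem~\ref{thm:up}(i) then bounds the rank of $\mathbf 1-W^*W$ by $\sum_j \oneb_{k_j<m}\,4d\prod_{\ell\ne j}k_\ell=4dA_\Gamma$. Faces with $k_j=m$ are not boundaries, because the brickwall layers in direction $j$ are confined to $\{1,\ldots,m\}$. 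This gives (i).

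For (ii), take expectations in the trace bound and invoke Theorem~\ref{thm:boson-rw}: $\E|U_{zx}|^2$ equals the transition probability of the associated classical random walk on $\{1,\ldots,m\}^D$. Because each step applies the 1D pattern independently in each coordinate, with independent Haar blocks, the walk factorizes into $D$ independent 1D walks. Each is the interval walk from the proof of Theorem~\ref{thm:up}, run for $d$ steps. A mode $x\in\Gamma$ exits $\Gamma$ only if some coordinate crosses $k_j$, so a union bound gives
\[
\sum_{z\notin\Gamma}\E|U_{zx}|^2\le \sum_{j:\,k_j<m}\P_{x_j}\bigl[X^{(j)}_d>k_j\bigr].
\]
Summing over $x\in\Gamma$, the $j$-th term is $\prod_{\ell\ne j}k_\ell\cdot\sum_{x_j\le k_j}\P_{x_j}[X_d>k_j]$. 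The inner sum is exactly the 1D quantity bounded by $O(\sqrt{d})$ in Theorem~\ref{thm:up}(ii), via Gaussian tails of the reflected walk. That already yields $O(\sqrt d)A_\Gamma\sinh^2(2s)$, and the stated $O(\sqrt{Dd\log d})$ is weaker. If instead the 1D proof only controls the entropy after splitting into near-boundary and far modes, with cutoff $\sqrt{d\log d}$ and the $\log\cosh$ bound near the cut, I would mimic that split. The cutoff would then become $\sqrt{Dd\log d}$ to make the union over $D$ directions and over $d$ affordable, giving the $\sqrt D$ factor.

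The main obstacle is the first step. We must check that the entropy-versus-leakage inequality from Theorem~\ref{thm:up}'s proof is genuinely dimension-free, so that the only $D$-dependence enters through counting. We must also check that the $D$-dimensional random walk really is a product of 1D walks, and this needs the inner-coordinate steps to commute in law. Both reduce to the independence of the Haar blocks in different layers and to Theorem~\ref{thm:boson-rw} holding for arbitrary geometry, which we are allowed to assume.
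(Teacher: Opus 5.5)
Your part (i) is the paper's argument from Remark~\ref{rmk:worst}. The relevant coupling is through $UU^T$, which contains $4d$ coordinate-$j$ layers, and that is where the $4d$ comes from.

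Part (ii) takes a genuinely different route. However, its central inequality appears nowhere in the proof of Theorem~\ref{thm:up}, so you cannot cite it; you have to prove it. It is true. Set $P=\Pi_\Gamma$, $Q=I-P$, and $V=P_\Gamma UU^TP_\Gamma^T$ (your $W$).
\begin{itemize}
\item Row normalization of $UU^T$ gives the identity $\Tr(I-VV^\dagger)=\|PUU^TQ\|_{\hs}^2$.
\item Next, $PUU^TQ=(PUP)(QUP)^T+(PUQ)(QU^TQ)$.
\item For unitary $U$, $\|QUP\|_{\hs}^2=\|PUQ\|_{\hs}^2=|\Gamma|-\|PUP\|_{\hs}^2$.
\item Hence $\Tr(I-VV^\dagger)\le 4\|QUP\|_{\hs}^2=4\sum_{x\in\Gamma,z\notin\Gamma}|U_{zx}|^2$.
\end{itemize}
The constant $4$ is sharp. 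Take $n=2$, $\Gamma=\{1\}$, $U=\begin{pmatrix}\cos\theta&i\sin\theta\\ i\sin\theta&\cos\theta\end{pmatrix}$; the ratio is $4\cos^2\theta$. So the middle inequality of your chain, $2\sum|(UU^T)_{xy}|^2\le 4\sum|U_{zx}|^2$, is false for small $\theta$. Only the end-to-end bound survives, because your first step is actually an equality.

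Combine this with $\tfrac12\log\frac{1-\tau\lambda}{1-\tau}\le\tfrac12\sinh^2(2s)(1-\lambda)$, where $\tau=\tanh^2(2s)$. This gives the deterministic bound $S_2(U)\le 2\sinh^2(2s)\|QUP\|_{\hs}^2$.

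Your exit-sum estimate is also not in the paper, and it needs tails rather than the pointwise bound. By Lemma~\ref{lem:reflect}, $\P_x[Z_d>k]\le\P_x[|S_d-x|\ge k+1-x]$. By \eqref{eqn:zsplit} and Hoeffding, this is at most $Ce^{-c(k+1-x)^2/d}$. Therefore $\sum_{x\le k}\P_x[Z_d>k]=O(\sqrt d)$ uniformly in $k$. By contrast, summing \eqref{eqn:rwprob2} over $x\le k<z$ gives only $O(d)$, because that bound lacks a $d^{-1/2}$ factor.

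The factorization of the walk is exact by Lemma~\ref{lem:prodexp}. Each $\E|U^{L/R,j}|^2$ is a 1D matrix tensored with identities, and these commute across coordinates. The unitary layers themselves do not commute.

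Compared with the paper: the paper bounds $\E|\langle x|UU^T|y\rangle|^2$ entrywise through the fourth-moment Cauchy--Schwarz estimate \eqref{eqn:csbound}, obtaining $(Cd)^De^{-c\|x-y\|_2^2/d}$. It then discards a boundary layer of width $\gamma\asymp\sqrt{Dd\log d}$ to beat the prefactors $A_\Gamma d^{D+1}(Cd)^D$; that is exactly where $\sqrt{D\log d}$ enters.

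Your operator inequality makes $S_2$ linear in the $|U_{zx}|^2$, so only second moments appear. By Theorem~\ref{thm:boson-rw} these are exactly classical exit probabilities. No cutoff or light cone is needed, and you get the stronger, dimension-free bound $\E S_2(U)\le O(\sqrt d)\,A_\Gamma\sinh^2(2s)$. In $D=1$ this also removes the $\sqrt{\log d}$ from Theorem~\ref{thm:up}(ii).

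What the paper's route buys is entrywise decay of $UU^T$. The paper reuses this for \eqref{eqn:prob-bound} and for the effective-band picture. Your deterministic reduction to $\|QUP\|_{\hs}^2$ would yield such high-probability bounds with little extra work. Your fallback of mimicking a $\sqrt{d\log d}$ split is unnecessary.
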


In the opposite direction of Theorem~\ref{thm:up}, we can identify a depth at which average subsystem entanglement becomes of maximal order, and prove a bound on the depth which ensures $U$ is $\varepsilon$-close to fully Haar random in $L^2$ Wasserstein distance (with Hilbert--Schmidt norm). (Recall the $L^2$ Wasserstein distance between two probability measures $\mu$ and $\nu$ on $\U(n)$ equipped with the Hilbert--Schmidt norm is $W_2(\mu,\nu)=\inf_{X\dsim\mu,Y\dsim\nu}(\E\|X-Y\|_\hs^2)^{1/2}$.)
\begin{thm}[brickwall lower bounds]\label{thm:lower}
Consider the random one-dimensional brickwall circuit $U=\prod_{i=d}^1U^{(i)}$ where all $2\times2$ and $1\times1$ unitary blocks are drawn independently from Haar measure. Then there is a constant $C_1$ so that for depth $d\ge C_1n^2\log^2 n$ and any size $k\le n/2$ subsystem,
\begin{align}\label{eqn:s2-lower}
\E S_2(U)\ge\Theta(k)\log\cosh(2s).
\end{align} 
Furthermore, $U$ is $\varepsilon$-close to Haar random in $L^2$ Wasserstein distance at depth $d\ge C_2n^3(\log n)^2\log(n/\varepsilon)$. The $L^2$ Wasserstein convergence also implies convergence of $\E S_2(U)$ to Haar random values and weak typicality\footnote{this will be convergence in probability $S_2(U)/\E_{\mathcal U\dsim\mathrm{Haar}(n)} S_2(\mathcal U)\xrightarrow{p}1$; see Corollary~\ref{cor:haar}} of $S_2(U)$ at these depths.
\end{thm}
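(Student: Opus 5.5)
We plan to prove the two parts of Theorem~\ref{thm:lower} separately. The entanglement lower bound will use second-moment (random walk) information about the entries of $U$. The Wasserstein statement will use a coupling/contraction argument on $\U(n)$. Throughout we use the exact relation of Theorem~\ref{thm:boson-rw}: for random circuits, $\E|\langle x|U|y\rangle|^2$ equals the $d$-step transition probability of the associated classical random walk. For the brickwall this is a lazy nearest-neighbour walk on $\{1,\ldots,n\}$ with uniform stationary distribution, spectral gap of order $n^{-2}$, and mixing time $O(n^2\log n)$.

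\emph{Entanglement lower bound.} With all modes squeezed by $s$, the reduced covariance on $\Gamma$ (cf.~\eqref{eqn:s2-tanh}) is determined by the $k\times k$ block $W=(UU^T)_{\Gamma\Gamma}$. In the symplectic eigenbasis one gets $S_2=\frac12\sum_j\log(1+\sinh^2(2s)(1-\lambda_j))$, where the $\lambda_j\in[0,1]$ are the eigenvalues of $W^*W$. The function $\lambda\mapsto\log(1+\sinh^2(2s)(1-\lambda))$ is concave and vanishes at $\lambda=1$, so it is at least $(1-\lambda)\log\cosh^2(2s)$. Hence
\[
S_2\ge\log\cosh(2s)\,\bigl(k-\|W\|_\hs^2\bigr),
\]
and it suffices to show $\E\|W\|_\hs^2\le ck$ for some $c<1$. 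We have
\[
\|W\|_\hs^2=\sum_{x,x'\in\Gamma}\Bigl|\sum_y U_{xy}U_{x'y}\Bigr|^2,
\]
which is a fourth-moment quantity in $U$. We would extend the random-walk correspondence to degree-$(2,2)$ moments: averaging each Haar $2\times2$ block gives a Markov chain on pairs of positions, namely two walkers with a local interaction when they share a gate. After mixing, this pair chain should look like the Haar value $\E_{\mathrm{Haar}}\|W\|_\hs^2=O(k^2/n)+O(1)\le k/2+O(1)$ for $k\le n/2$. We would bound its mixing time by $O(n^2\log^2 n)$, for instance by comparing with two independent walks and controlling the pair interaction through a spectral gap comparison (a Dirichlet form comparison with the product chain). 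This is where $d\ge C_1n^2\log^2n$ enters.

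\emph{Wasserstein closeness.} We would couple the circuit $U_d$ with $VU_d$ for an independent Haar $V$; note $VU_d$ is exactly Haar distributed. The goal is to contract $\E\|U_d-\mathcal U\|_\hs^2$ step by step. Concretely, we bound the Wasserstein-2 contraction of one brickwall step acting on $\U(n)$, as a Markov kernel, through its spectral gap on $L^2(\U(n))$ restricted to degree-one matrix-coefficient functions. For these the relevant operator is $\E[U^{(i)}\otimes\bar U^{(i)}]$, which acts on $\mathbb C^{n}\otimes\mathbb C^n$ and contains the random walk from Theorem~\ref{thm:boson-rw} together with the off-diagonal ``phase'' part. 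For Hilbert--Schmidt distance we need contraction of the map $X\mapsto\E[U^{(i)}X]$-type averages in a coupling. We would use a path coupling: couple single gates so that each block either equalizes (synchronous coupling on the Haar factor) or contracts. The expected squared distance then decays at rate roughly the gap of an $n^2$-sized chain, about $1/(n^3\log^2n)$ per step. Iterating from initial distance $O(n)$ gives depth $C_2n^3\log^2 n\log(n/\varepsilon)$.

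The main obstacle is this contraction estimate. Single steps are far from Haar and do not contract distances directly, so we expect to need a block argument: group $T\approx n^2\log^2 n$ steps and show that the $T$-step kernel contracts by a factor $1-1/n$ in $W_2$, with the factor $\log(n/\varepsilon)$ coming from iterating these blocks. Finally, the consequences for $S_2$ follow because $S_2$ is Lipschitz in $U$ with respect to Hilbert--Schmidt norm, with constant depending on $s$ and $k$. Combined with concentration of measure for Haar $\U(n)$ (used in Corollary~\ref{cor:haar}), this gives $\E S_2$ convergence and $S_2/\E_{\mathrm{Haar}}S_2\to1$ in probability.
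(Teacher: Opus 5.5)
Your reduction in part (i) is correct. Concavity of $\lambda\mapsto\log(1+\sinh^2(2s)(1-\lambda))$ gives $S_2\ge\log\cosh(2s)\,(k-\|W\|_\hs^2)$, where your $W$ is the paper's $V$. This is exactly the bound the paper obtains from \eqref{eqn:s2} using $\Tr (VV^\dagger)^\ell\le\Tr VV^\dagger$.

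Bounding $\E\|W\|_\hs^2$ through mixing of the interacting two-walker chain is a genuinely different route from the paper's. It can be made to work, but it needs set-up you have not done:
\begin{itemize}
\item The degree-$(2,2)$ averages form a Markov chain only in the two-boson normalization. States are multisets $\{x,y\}$, the quantity $\E|U_{\alpha x}|^2|U_{\alpha y}|^2$ is doubled when $x\ne y$, and a pair inside a gate $\{x,x'\}$ jumps to each of $\{x,x\},\{x',x'\},\{x,x'\}$ with probability $1/3$.
\item The chain factorizes over layers because invariance of each layer under diagonal unitaries kills the cross terms.
\item Its stationary law is uniform on the $n(n+1)/2$ multisets, which is what yields the Haar value $\E\|W\|_\hs^2\to k(k+1)/(n+1)$.
\item It is non-reversible, so a Theorem~\ref{thm:nonrev}-type symmetrization must precede any Dirichlet-form comparison.
\item You need mixing precision $o(1/n)$ from every start, so that the error is $o(1)$ rather than $O(1)$ for small $k$.
\end{itemize}
The paper avoids the interacting chain altogether. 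Lemma~\ref{lem:dec} iterates \eqref{eqn:22cases} only until the two walkers first pass through a common gate, then uses $\E|U_{\alpha\ell}|^4\ge(\E|U_{\alpha\ell}|^2)^2$ and single-walker mixing. So only $\tmixs$ and the meeting time $\tmeets=O(n^2\log^2n)$ of independent walks enter, at the cost of the constant $\frac{1}{3n^2}$. Your route, if completed, would give the sharper $k(n-k)/(n+1)$ in place of $k(n-k)/(3n)$.

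The genuine gap is the contraction estimate in the Wasserstein part; the mechanism you propose cannot supply it.
\begin{itemize}
\item Coupling $U_d$ with $VU_d$ gives $\E\|U_d-VU_d\|_\hs^2=\E\|\id-V\|_\hs^2=2n$ for every $d$. What is needed is $W_2$-contraction of a block kernel between two arbitrary starting points.
\item A spectral gap of $\E[U^{(i)}\otimes\bar U^{(i)}]$ only sees the classical walk $\E|U_{xy}|^2$. For example, uniformly random permutation matrices times independent uniform diagonal phases have exactly Haar $(1,1)$-moments, yet are at $W_2$-distance of order $\sqrt n$ from Haar.
\item A synchronous coupling of gates leaves $\|X-Y\|_\hs$ unchanged, and you do not specify a non-synchronous one.
\item The per-step rate $1/(n^3\log^2n)$ is read off from the target rather than derived.
\end{itemize}
You correctly anticipate the block structure, but the missing idea is what makes a block contract.

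In the paper, each layer's law is invariant under multiplication by diagonal unitaries. This allows the coupling $X=U^{(d+1)}V_1QV_2x$, $Y=U^{(d+1)}Qy$, with $Q$ of depth $d\ge\tmeets+\tmixs$. The diagonal unitaries $V_2$ (depending on $x,y$) and $V_1$ (depending on $x,y,Q$) are chosen as follows, with $h=\Pi_T(yx^\dagger-\id)$ the tangent vector:
\begin{itemize}
\item if the diagonal of $h$ carries at least $(\frac13-\delta)\|h\|_\hs^2/n$, they cancel that diagonal;
\item otherwise they cancel the diagonal of $QhQ^\dagger$.
\end{itemize}
In the second case, the fourth-moment bound $\E[|Q_{\alpha j}|^2|Q_{\alpha k}|^2]\ge\frac{1-o(1)}{3n^2}$ of Lemma~\ref{lem:dec} yields $\E\sum_\alpha|(QhQ^\dagger)_{\alpha\alpha}|^2\ge(\delta-o(1))\|h\|_\hs^2/n$. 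With $\delta=1/6$ this gives $\E\|X-Y\|_\hs^2\le(1-\frac{1-o(1)}{6n})r^2+O(r^3)$. Oliveira's local-to-global coupling theorem with diameter $2\sqrt n$ then gives $\tau_2(\varepsilon)\le C(\tmeets+\tmixs)n\log(n/\varepsilon)=O(n^3\log^2n\log(n/\varepsilon))$.

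Your two-walker chain from part (i), once made rigorous, would supply exactly this fourth-moment input. So the repair is to replace the degree-one spectral argument by this diagonal-phase coupling.
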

The bound $d\ge C_1n^2\log^2n$ matches with the diffusive upper bound \eqref{eqn:exp-bound} of Theorem~\ref{thm:up}(ii) up to logarithmic order terms. Thus we know the depth $d\ge\Omega(n^2\log^2n)$ bound for $\Theta(n)$ entanglement in $k=\Theta(n)$ size subsystems is sharp up to logarithmic factors. In particular, this suggests that the diffusive growth upper bound in Theorem~\ref{thm:up}(ii) is essentially tight, in the sense that it is likely also a lower bound up to logarithmic order (and $s$-dependent) factors in the depth.
While we would like to rigorously obtain a lower bound for $\E S_2(U)$ of order $\Omega(\sqrt{d})$ for depths less than order $n^2$, it appears this may be difficult.
The proof of the lower bound of Theorem~\ref{thm:lower} at large depth is already more technically involved than the proof of the upper bound in Theorem~\ref{thm:up}, since we will need to prove a decoupling property for entries of the random matrix $U$ (Lemma~\ref{lem:dec}).

An analogue of Theorem~\ref{thm:lower} holds for more general beamsplitter geometries, with a depth requirement depending on properties of an associated classical walk which is defined in 
Section~\ref{subsec:bosonrw}. In particular, the depth requirement depends on a classical mixing time $\tmixs$ (when the classical walk is close to the uniform equilibrium distribution) and meeting time $\tmeets$ (when two independent walks have nearly met with high probability). A brief statement is provided here, with the full definitions and statement given in Section~\ref{sec:lower} and Theorem~\ref{thm:lowerbound}.
\begin{thm}[general lower bounds]\label{thm:gen}
Suppose we allow $2$-mode beamsplitters which may connect any two modes, and we construct a geometry and depth $d$ unitary matrix $U=\prod_{i=d}^1U^{(i)}$ with independent $2\times2$ and $1\times1$ Haar unitary blocks (described in more detail in Sections~\ref{subsec:bosonrw} and \ref{sec:lower}).
Then for depths larger than $\tmixs+\tmeets$ of the associated classical random walk, \eqref{eqn:s2-lower} holds for arbitrary subsystems of size $k\le n/2$. Additionally, $\varepsilon$-closeness to Haar measure in $L^2$ Wasserstein distance holds with an additional $Cn\log(n/\varepsilon)$ factor on the depth. 
\end{thm}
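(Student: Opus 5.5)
The plan is to reduce both statements to second- and fourth-moment computations for entries of $U$, and to control these moments through the associated classical random walk of Theorem~\ref{thm:boson-rw}.

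\textbf{Setting up the moment relation.} For a general geometry, each $2$-mode gate is an independent Haar $2\times 2$ unitary. It therefore sends mass $|\langle x|U|y\rangle|^2$ to each of its two output modes with expected fraction $1/2$. This gives $\E|\langle x|U|y\rangle|^2=\P_y[X_d=x]$, where $X$ is the classical walk that, at each gate touching its current mode, jumps to either gate output with probability $1/2$. This is exactly Theorem~\ref{thm:boson-rw}.

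\textbf{Second-moment lower bound for $S_2$.} The covariance matrix of the reduced state on $\Gamma$ depends on $U_\Gamma:=P_\Gamma U$ through the matrix $M=U_\Gamma U_\Gamma^{T}$. Using the formula behind \eqref{eqn:s2-tanh}, I would write
\[
S_2=\tfrac12\log\det\bigl(I+\sinh^2(2s)(I-MM^\dagger)\bigr)
\]
in a suitable form. Then $S_2\ge c\,\log\cosh(2s)\cdot\bigl(k-\Tr MM^\dagger\bigr)$, by concavity and the fact that the eigenvalues of $I-MM^\dagger$ lie in $[0,1]$. The required lower bound then reduces to showing
\[
\E\Tr MM^\dagger=\sum_{x,x'\in\Gamma}\E\Bigl|\sum_y U_{xy}U_{x'y}\Bigr|^2\le(1-c)k .
\]
Expanding gives fourth moments $\E\, U_{xy}U_{x'y}\bar U_{xy'}\bar U_{x'y'}$. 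I would handle these with a two-walker version of the same gate-by-gate averaging. Averaging a product of two copies of $U$ and two copies of $\bar U$ over a Haar $\U(2)$ gate produces a Markov-type evolution on pairs of positions, with an extra interaction only when the two walkers sit at the same gate. For the Haar measure on $\U(n)$ one has $\E\Tr MM^\dagger=\Theta(k^2/n)\le k/2$. So it suffices to show the finite-depth fourth moments are close to their Haar values. This is a decoupling statement, the role of Lemma~\ref{lem:dec}.

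\textbf{The decoupling step, which is the main obstacle.} Once two independent walks have met, the gate interaction randomizes their relative phases. After a further $\tmixs$ steps, each walker is close to uniform. So the two-point correlation should be within $O(1/n)$ per pair of the Haar value once $d\ge\tmixs+\tmeets$. This is the genuinely hard part: the pair process is not exactly a product of independent walks, and one must bound the error terms uniformly in $x,x'$. I would first try a coupling that runs the two walkers independently until they first share a gate. At that point I would apply the exact $\U(2)$ Weingarten formula to show that the off-diagonal contributions are multiplied by a factor bounded away from $1$. Finally I would sum over meeting times, using tail bounds of the form $\P[\text{not met by } t]\le e^{-t/\tmeets}$.

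\textbf{Wasserstein closeness.} I would view the depth-$d$ circuit as a random walk on $\U(n)$ driven by the step measure. The second-moment analysis then gives a spectral gap for the induced operator on degree-$(1,1)$ polynomials, roughly $1/(\tmixs+\tmeets)$. I would pass to $W_2$ by bounding it through the $L^2$ distance to Haar. A crude comparison, using a loss of a factor $n$ in dimension and the decay $e^{-d/\tmixs}$, yields the additional $Cn\log(n/\varepsilon)$ factor in the depth.
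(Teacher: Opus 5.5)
\textbf{Entropy half.} Your reduction is correct and is the paper's. By concavity, $S_2=\tfrac12\log\det(I+\sinh^2(2s)(I-W))\ge\log\cosh(2s)\,(k-\Tr W)$. Since the rows of the unitary $UU^T$ are unit vectors, an upper bound on $\E\Tr W$ is the same as a lower bound on $\sum_{x\in\Gamma,\,y\notin\Gamma}\sum_\alpha\E[|U_{x\alpha}|^2|U_{y\alpha}|^2]$.

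Where you go astray is in aiming for two-sided closeness of these fourth moments to their Haar values. That is not needed, and your sketch would not give it. After the walkers first $\pi$-meet, what remains is a fourth moment of a shorter circuit started from a coincident or gate-adjacent pair. The pair process is not mass-preserving: total weight is $2/3$ when distinct walkers share a gate and $4/3$ when coincident walkers pass through a $2$-mode gate. Pinning that value down therefore requires the interacting pair process to equilibrate, which single-walker mixing does not provide.

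The paper's Lemma~\ref{lem:dec} proves only a one-sided bound, by a simple trick. Before the first $\pi$-meeting, the $\U(2)$ Weingarten weights are exactly those of two independent walks. At the meeting, drop every non-coincident term (all are nonnegative) and apply Jensen, $\E|U_{\alpha\ell}|^4\ge(\E|U_{\alpha\ell}|^2)^2$. This reduces everything to second moments, hence to single-walker mixing over the remaining depth $\ge\tmixs$; it also handles the constraint that both paths end at $\alpha$. The result is $\frac{1-o(1)}{3n^2}$, a constant-factor loss against Haar that is harmless for \eqref{eqn:s2-lower}.

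\textbf{Wasserstein half: genuine gap.} The moment operator on degree-$(1,1)$ polynomials is just the classical walk $P$ of Theorem~\ref{thm:boson-rw}. Its gap is a $\tmixs$-type quantity with no information about higher moments, and convergence of finitely many moments does not control $W_{\hs,2}$.

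Passing through the $L^2(\mathcal H_n)$ density does not rescue this. $\tau_2(\varepsilon)$ is a supremum over all initial $\mu$, including point masses, for which $\mu K^d$ can be singular or have uncontrolled density. An $L^2$ bound would need a spectral gap of $K$ on all of $L^2_0(\U(n))$, which nothing in your analysis produces. Your ``loss of a factor $n$ in dimension'' has no mechanism behind it.

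The missing idea is Oliveira's coupling/local-contraction theorem.
\begin{enumerate}
\item Take as one big step $U=U^{(d+1)}Q$, with $Q$ of depth $d\ge\tmeets+\tmixs$.
\item For nearby $x,y$, couple $X=U^{(d+1)}V_1QV_2x$ and $Y=Uy$, with $V_1,V_2$ diagonal unitaries. This is allowed because the gate laws are invariant under diagonal phases.
\item If $h=\Pi_T(yx^\dagger-\id)$ carries diagonal weight at least $(\frac13-\delta)\frac1n\|h\|_\hs^2$, choose $V_2$ to remove it.
\item Otherwise Lemma~\ref{lem:dec} gives $\E|(QhQ^\dagger)_{\alpha\alpha}|^2\ge\frac{\delta-o(1)}{n^2}\|h\|_\hs^2$, and $V_1$ removes that.
\end{enumerate}
Either way, $\E\|X-Y\|_\hs^2\le(1-\frac{1-o(1)}{6n})r^2+O(r^3)$. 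Oliveira's theorem with diameter $2\sqrt n$ then gives $\tau_2(\varepsilon)\le C(\tmeets+\tmixs)\,n\log(n/\varepsilon)$. The factor $n$ is the inverse contraction rate per big step, and $\log(n/\varepsilon)$ comes from the diameter, not from any dimension loss.
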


Finally, we use some intermediate results from the above to prove that the robust or approximate circuit complexity for one-dimensional random brickwall linear optical networks scales at most diffusively in the depth $d$ with high probability. This is again in contrast to the linear scaling for random qubit circuits \cite{chen2024incompressibility}.
In the other direction, we prove a depth condition which ensures that with high probability, a near-maximal order number of gates are required to approximate the linear optical unitary $U$. Due to \cite{reck1994experimental,clements2016optimal}, any $n\times n$ unitary matrix can be exactly implemented using $n(n-1)/2$ beamsplitter-phaseshifter gates, and so $n^2$ is the maximum order for circuit complexity.
In what follows, we will use asymptotic notation $O_\kappa(\cdot)$ to indicate that the implicit constant may depend on $\kappa$.

\begin{thm}[approximate circuit complexity]\label{thm:complexity}
Consider depth-$d$ linear optical unitaries $U$ constructed from independent $2\times2$ and $1\times1$ Haar-random beamsplitter-phaseshifter gates.
\begin{enumerate}[(i),leftmargin=*]
\item (Brickwall upper bound) Let $U$ be an $n\times n$ depth-$d$ random one-dimensional brickwall linear optical unitary (Figure~\ref{fig:bricks}); this utilizes $\Theta(nd)$ nearest-neighbor gates.
Choose a $\kappa\ge1$. 
With probability $1-O(n^2d^{-\kappa})$ over $U$,
there is a unitary $\tilde U$ built from $O_\kappa(n\sqrt{d\log d})$  nearest-neighbor gates such that
\begin{align}\label{eqn:u2}
\|U-\tilde U\|_\hs&=O(n^{3/2}d^{-\kappa}),
\end{align}
where $\|\cdot\|_\hs$ denotes the Hilbert--Schmidt norm.
In particular, the probability this occurs is $1-o(1)$ and \eqref{eqn:u2} is $o(1)$ for $d=\omega(n^{2/\kappa})$, which is a mild condition for large $\kappa$.

The bound \eqref{eqn:u2} also implies the following: Let $\mathcal U$ and $\tilde{\mathcal U}$ denote the Gaussian unitaries acting on the (bosonic) Hilbert space which correspond to $U$ and $\tilde U$ respectively, and let $\mathscr{U}$ and $\tilde{\mathscr U}$ be the associated unitary channels, defined as conjugation by $\mathcal U$ or $\tilde{\mathcal U}$. Letting $N=\sum_{j=1}^na_j^\dagger a_j$ denote the total photon number Hamiltonian, then in the energy-constrained (EC) diamond norm \eqref{eqn:ec-diamond}, for $E>0$,
\begin{align}\label{eqn:diamond}
\|\mathscr{U}-\tilde{\mathscr U}\|_\diamond^{N,E}&=\sqrt{E+1}\,O(n^{5/4}d^{-\kappa/2}),
\end{align}
and the fidelity between $\mathcal U|\psi\rangle$ and $\tilde{\mathcal U}|\psi\rangle$ for pure states $|\psi\rangle$ with constrained expected photon-number satisfies
\begin{align}\label{eqn:fidelity}
\inf_{\langle\psi|N|\psi\rangle\le E}|\langle\psi|{\mathcal U}^\dagger\tilde{\mathcal U}|\psi\rangle|^2&\ge 1-(E+1)O(n^{5/2}d^{-\kappa}).
\end{align}
For $d=\omega(n^{5/(2\kappa)}(E+1)^{1/\kappa})$, both big-O error terms above are $o(1)$.

\item (Saturation bound) Let $0<\delta'<1/2$, and consider $n\times n$ depth-$d$ random linear optical unitaries $U$ corresponding to an arbitrary circuit geometry as described in Section~\ref{subsec:bosonrw}. 
Let $\tmeets$ and $\tmixs$ be certain mixing and meeting times of the associated classical random walk, as used in Theorem~\ref{thm:gen} and defined in Section~\ref{sec:lower}.
Then there are constants $C_1,C_2,c_1,c_2>0$ so that for $U$ of depth
\begin{align}\label{eqn:complexity-depth}
d\ge C_1(\tmeets+\tmixs)n^3\log n,
\end{align}
and any $\delta<(1/2-\delta')\sqrt{n}$,
\begin{align}\label{eqn:complex2}
\P[\exists \tilde U\text{ with $c_1n^2/\log n$ gates s.t. }\|\tilde U-U\|_\hs\le\delta]&\le C_2e^{-c_2n^2}.
\end{align}
The gates forming $\tilde U$ are allowed to be any 2-mode all-to-all unitary gates.
\end{enumerate}
\end{thm}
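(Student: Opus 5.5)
Part (i) is a truncation argument; part (ii) is a counting argument against a volume lower bound.

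\emph{Part (i).} The random-walk picture behind Theorem~\ref{thm:up}(ii) says that, with high probability, the entry $U_{xy}$ is negligible unless $|x-y|\lesssim\sqrt{d\log d}$. Let $\ell=C\sqrt{d\log d}$. By Theorem~\ref{thm:boson-rw}, $\E|U_{xy}|^2$ equals the probability that the associated classical walk (a lazy nearest-neighbour walk on $\{1,\dots,n\}$) travels from $y$ to $x$ in $d$ steps. A Hoeffding bound makes this at most $d^{-2\kappa}$ once $|x-y|\ge\ell$ with $C=C_\kappa$. Summing over the $\le n^2$ pairs and applying Markov gives, with probability $1-O(n^2d^{-\kappa})$, that $\sum_{|x-y|\ge\ell}|U_{xy}|^2=O(n^2 d^{-2\kappa}\cdot d^{\kappa})$. (One may instead apply Markov per entry and take a union bound, which is where the $n^2 d^{-\kappa}$ comes from.) In any case $U$ is then $O(n^{3/2}d^{-\kappa})$-close in Hilbert--Schmidt norm to its band truncation $B$ of width $\ell$.

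$B$ is not unitary, so the plan is to build the approximant $\tilde U$ directly. Partition the modes into consecutive windows of length $\ell$. Since $U$ is nearly banded, its action factors approximately into $O(1)$ layers of block-unitaries on overlapping windows of size $2\ell$. Concretely, I would use the standard fact that a unitary with bandwidth $\ell$ can be written as a product of a bounded number of block-diagonal unitaries with blocks of size $O(\ell)$; this is the CMV/staircase factorization, which proceeds by QR-reducing block columns. Alternatively, project $B$ onto $\U(n)$ via its polar decomposition, which preserves approximate bandedness and moves it by at most $O(\|U-B\|_\hs)$. Each $O(\ell)\times O(\ell)$ block is implemented exactly with $O(\ell^2)$ nearest-neighbour gates by \cite{clements2016optimal}. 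There are $n/\ell$ blocks per layer, so the total is $O(n\ell)=O_\kappa(n\sqrt{d\log d})$ gates.

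The main obstacle is making the factorization stable: showing that an almost-banded unitary is close to an exactly block-factorizable one, with error controlled by the off-band mass. My first attempt would be a direct construction. Apply the first $\lceil d'\rceil$ steps only inside windows, choosing $d'$ and the windows so the light cones interact only with neighbours. Then compare with the true circuit via the random-walk leakage bound, using a triangle inequality over the $O(n/\ell)$ window boundaries.

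The energy-constrained diamond norm bound \eqref{eqn:diamond} and the fidelity bound \eqref{eqn:fidelity} then follow from \cite{becker2021energy}. That result bounds the energy-constrained diamond distance of Gaussian unitaries by $\sqrt{E+1}$ times a square root of the operator distance between their symplectic matrices. Combined with $\|U-\tilde U\|\le\|U-\tilde U\|_\hs$, and with fidelity $\ge 1-\|\cdot\|_\diamond^2$, this gives the stated exponents.

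\emph{Part (ii).} I would use an $\varepsilon$-net counting argument. Circuits with $g=c_1n^2/\log n$ two-mode gates on all-to-all connectivity have at most $n^{2g}$ gate placements. Each gate ranges over $\U(2)$, so a $\delta_0$-net with $\delta_0=n^{-2}$ has $(C/\delta_0)^{4}$ points per gate. Since gate errors add in Hilbert--Schmidt norm up to a factor $\sqrt n$, the set of such $\tilde U$ is covered by $e^{O(g\log n)}=e^{c n^2}$ balls of radius about $2\delta$, with $c$ small because $c_1$ is small. Theorem~\ref{thm:gen}/\ref{thm:lowerbound} gives that at depth \eqref{eqn:complexity-depth} the law of $U$ is $\varepsilon$-close to Haar in $W_2$, with $\varepsilon=e^{-n^2}$ absorbed by the $\log(n/\varepsilon)$ factor. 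Haar measure assigns a Hilbert--Schmidt ball of radius $r<(1/2-\delta')\sqrt n$ mass at most $e^{-c'n^2}$, by concentration of measure on $\U(n)$ for the $1$-Lipschitz function $\|V-V_0\|_\hs$, whose mean is $\approx\sqrt{2n}$.

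To transfer to $U$, couple $U$ with a Haar $V$ with $\E\|U-V\|_\hs^2\le\varepsilon^2$. Then Markov makes $\|U-V\|_\hs\le\delta'\sqrt n/2$ except with probability $4\varepsilon^2/(\delta'^2 n)$. A union bound over the balls yields \eqref{eqn:complex2}.
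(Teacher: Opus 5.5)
\textbf{Part (ii).} Your argument is sound. It differs from the paper only in the transfer step. You use a single $W_2$ coupling of $U$ with a Haar unitary plus Markov's inequality. The paper instead compares moment-generating functions of $\|V'-U\|_\hs$ through $W_1$ duality. Both routes need $\varepsilon=e^{-\Theta(n^2)}$, so they give the same depth scale.

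\textbf{Part (i) has a genuine gap.} The low-gate approximant $\tilde U$ is never constructed, and you flag this yourself as the main obstacle. None of your candidate routes closes it.

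The staircase/CMV factorization needs a unitary that is \emph{exactly} banded at width $\ell$. But $U$ is only approximately banded at that width, since its exact bandwidth is $\Theta(d)$. Moreover $B$ is not unitary, and the polar factor of $B$ is again only approximately banded. So that route is circular.

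The windowed-circuit idea fails outright:
\begin{itemize}
\item The light cone of $d$ brickwall steps has width $\Theta(d)\gg\ell$.
\item Altering a boundary gate $g\mapsto g'$ changes $U=AgB$ by $\|A(g-g')B\|_\hs=\|g-g'\|_\hs=\Theta(1)$, however small the entries of the product are. The random-walk bound controls entries of $U$, not the influence of individual gates.
\item If you group only $O(\ell)$ steps per window layer instead, you need $d/\ell$ layers and are back to $\Theta(nd)$ gates.
\end{itemize}
The savings can only come from working with the product $U$ itself.

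\textbf{The missing step} is to run the elimination on $U$ directly and prove the residual is nearly diagonal. The paper applies Reck's Givens-rotation zeroing only to the subdiagonal entries inside the band $w=C_\kappa\sqrt{d\log d}$. This costs $nw-w(w+1)/2$ nearest-neighbour gates and leaves a unitary $\tilde D$ that is not diagonal.

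Each row of $\tilde D$ has strictly-lower $\ell^2$ mass at most $\varepsilon=\sqrt n\,d^{-\kappa}$. The rotations for row $r$ never touch its out-of-band columns, and the rotations for later rows act unitarily on columns $<r$.

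Unitarity then turns a small lower triangle into closeness to a diagonal phase matrix. The paper shows $|\tilde D_{n-j,n-j}|^2\ge 1-(j+1)\varepsilon^2$ by a telescoping row/column-normalization argument. Equivalently, let $P_{\le k}$ be the projection onto the first $k$ modes and $P_{>k}=I-P_{\le k}$. For every cut $k$ one has $\|P_{\le k}\tilde DP_{>k}\|_\hs=\|P_{>k}\tilde DP_{\le k}\|_\hs$, which bounds the upper triangle by $\sqrt{n}$ times the lower. After fixing phases this gives $\|U-\tilde U\|_\hs=\|\tilde D-I\|_\hs=O(n\varepsilon)=O(n^{3/2}d^{-\kappa})$.

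A minor point on the diamond-norm step: the bound of \cite{becker2021energy} used here is $C\sqrt{(c+n)(E+1)}\sqrt{\|U-\tilde U\|_\hs}$. The $n^{5/4}$ exponent comes from that $\sqrt{c+n}$ prefactor, which your paraphrase omits.
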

Note that part (i) gives a diffusive upper bound on the circuit complexity; it would be of much interest to obtain a matching diffusive lower bound.
In (ii), for the one-dimensional brickwall circuit, \eqref{eqn:complexity-depth} becomes the condition $d\ge Cn^5\log^3 n$, since we can take $\tmeets$ and $\tmixs$ to be $O(n^2\log^2n)$ (see Remark~\ref{rmk:lower}). 
However, we expect this depth condition could be improved, and that saturation should occur near depth of order $n^2$ to match the diffusive upper bound in (i). 
We also note that we consider circuit complexity in the sense of gate count only, without any consideration for parallelization or resulting circuit depth. The particular approximate implementations constructed in the proof of part (i) have sequential components and end up having large depth; as we mention in Section~\ref{subsec:outlook}, it would be of interest to consider optimizing the depth in addition to the gate count. 

\subsection{Previous work}

It is known from \cite{zhang2021entanglement} that bosons starting from a \emph{single} squeezed mode undergo random walk behavior on average.
In this case, \cite{zhang2021entanglement} showed that the system can effectively be viewed as a two-mode process, and so the entanglement entropy can be written exactly in terms of the symplectic eigenvalue of a single $2\times 2$ matrix. This eigenvalue can be expressed in terms of the associated classical walk transition probability between the subsystems, which implies a relation between the classical walk mixing time and the time required for maximal entanglement entropy starting from a single squeezed mode.
However, since experimental implementations of Gaussian boson sampling typically involve all or a large number of initial squeezed modes \cite{zhong2020quantum,zhong2021phase,madsen2022quantum,deng2023gaussian,liu2025robust}, we are interested in entanglement in the case where all initial modes are squeezed with some squeezing parameter $s>0$.
The method used in \cite{zhang2021entanglement} cannot be applied for more than one squeezed mode,\footnote{except in a sparse limit where all the photons in the subsystem are concentrated on a small number of modes} and moreover we do not expect there to be simple analytic expressions in a finite-depth circuit for the entanglement or R\'enyi-$\alpha$ entropies in this case. 

Numerically, entanglement growth in finite-depth linear optical systems has been studied for various set-ups in several works \cite{zhuang2019scrambling,zhou2021nonunitary,zhang2021entanglement,go2024exploring}.
In the infinite-depth Haar random limit, exact formulas for the entanglement entropy and R\'enyi-$\alpha$ entropies were obtained analytically in \cite{iosue2023page,youm2025average}, using the asymptotic Weingarten calculus on Taylor expansions of the R\'enyi-$\alpha$ entropies.

\subsection{Proof overviews}
We briefly describe some of the ideas in the proofs.
The starting point for the results on the random linear optical networks is the boson random walk property in Theorem~\ref{thm:boson-rw}, which gives an exact relation between second absolute moments of entries of the matrix $U$, and probabilities for a random walk $Z_t$ which performs a classical walk through the gates of the circuit,
\begin{align}\label{eqn:rw0}
\E|\langle x|U|y\rangle|^2=\P_y[Z_d=x].
\end{align}
The walk $Z_t$, while not possessing independent increments, can be seen for the 1D brickwall geometry to satisfy a central limit theorem and large deviation estimates similar to simple random walk. In particular, it behaves diffusively, and is unlikely to be found at a distance significantly larger than order $\sqrt{d}$ (up to possible logarithms) at time $d$.
While the R\'enyi-2 entropy involves fourth moments of $U$, the large deviation estimates for the second moments of $U$ in \eqref{eqn:rw0} will be sufficient to obtain the diffusive upper bound in Theorem~\ref{thm:up}(ii), using a combination of estimates and a light-cone-like argument (Section~\ref{sec:upperbounds}). 

The proofs of the lower bounds in Theorem~\ref{thm:lower} and \ref{thm:gen} are more technically involved. 
We would like to calculate certain fourth moments of $U$ by iteratively pulling off unitary factors $U^{(i)}$ from $U=\prod_{i=d}^1U^{(i)}$. However, each iteration generates additional terms and factors which vary depending on how close two endpoint-conditioned classical independent random walks are to each other. 
When the walks are far away from each other, the factors are random walk transition probabilities. When the walks meet or are close to meeting, the factors change, making it difficult to keep track of the total factors over all possible pairs of endpoint-conditioned random walks. 
Due to this, we consider depths larger than a meeting $+$ mixing time of the associated classical walk $Z_t$ from \eqref{eqn:rw0}. We iteratively pull off unitary factors $U^{(i)}$ until the two associated classical walks meet (which occurs by the meeting time with high probability), at which point we stop the iteration. If the depth remaining is at least the mixing time, then we can use the mixing to effectively ignore the endpoint conditioning of the walks. This will be enough to provide a lower bound on the desired fourth moments of $U$, which is the main technical result of Section~\ref{sec:lower}, written as the decoupling Lemma~\ref{lem:dec}. This will imply a lower bound on the R\'enyi-2 entropy at these depths.
The $L^2$ Wasserstein closeness will use the decoupling in Lemma~\ref{lem:dec} as a starting point, followed by a coupling construction applied with the coupling method of \cite{oliveira2009convergence}.

The proof of Theorem~\ref{thm:complexity} on approximate circuit complexity also relies on the relation \eqref{eqn:rw0}. For part (i), due to the large deviation bounds for the classical 1D brickwall walk $Z_t$, the entries of $U$ outside an effective band width $O(\sqrt{d\log d})$ are very small with high probability. 
We show these small entries can essentially be ignored in an approximate circuit, and that the remaining $O(n\sqrt{d\log d})$ matrix entries within the effective band can be implemented using only $O(n\sqrt{d\log d})$ gates using the zeroing procedure in \cite{reck1994experimental}, without significantly changing the Hilbert--Schmidt norm from $U$. The lift from approximate implementation of the linear optical unitary $U$ to results on the (infinite-dimensional) Hilbert space follows from results in \cite{becker2021energy}.
The proof of the long-time saturation in Theorem~\ref{thm:complexity}(ii) uses the Wasserstein closeness proved in Theorem~\ref{thm:gen} to compare the circuit complexity to that of a Haar random unitary.

\subsection{Further directions}\label{subsec:outlook}

It would be interesting to study the entanglement behavior for other types of input states, including Gaussian states with unequal squeezing, and Fock input states or other non-Gaussian states. 

Regarding the variance and fluctuations of the R\'enyi-2 entropy $S_2(U)$, numerical evidence suggests that for equal-squeezing input states and independent Haar random beamsplitter-phaseshifters in the brickwall geometry, that $S_2(U)\sim C\sqrt{d}+\xi$, for random fluctuations $\xi$ which are of roughly constant order regardless of the depth (Figure~\ref{fig:numerics}). This is in contrast to the KPZ behavior observed in random brickwall qubit circuits \cite{nahum2017quantum}, where the fluctuations are observed to grow as the KPZ scaling $\xi\propto d^{1/3}$ before the saturation time. 
Additionally, \cite{zhuang2019scrambling} considers random active elements in continuous-variable systems and finds linear growth of fluctuations, which they speculate could be due to fluctuations from the random squeezing overriding any KPZ fluctuations. 
It would be interesting to rigorously study the entanglement fluctuations for both passive linear optical networks and active Gaussian optics, and to study the attainable fluctuation behaviors when introducing random active elements.

\begin{figure}[htb]
\includegraphics[height=2in]{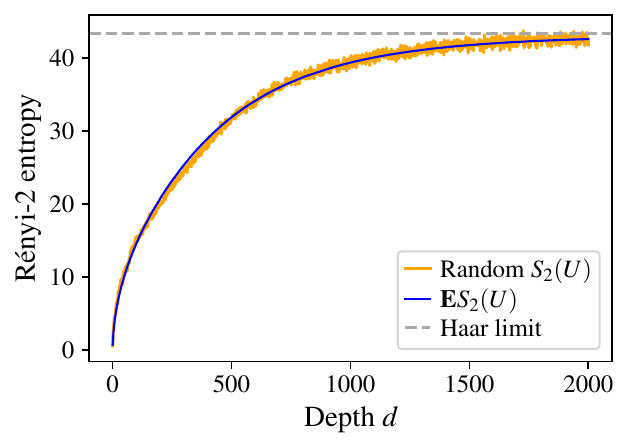}
\includegraphics[height=2in]{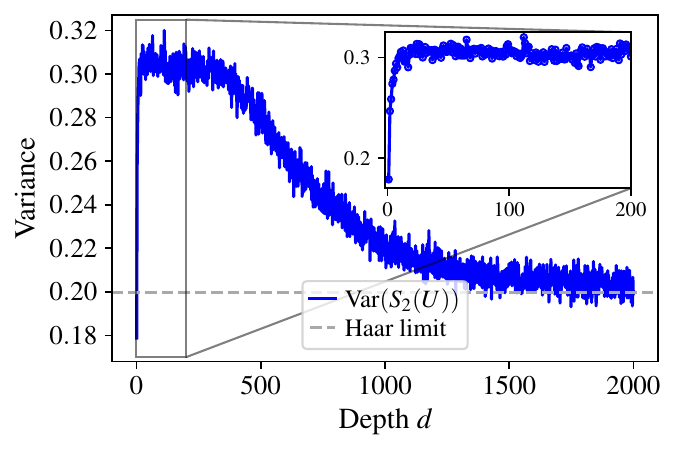}
\caption{(Left) Growth of the R\'enyi-2 entropy for the random brickwall geometry as a function of the depth, for $n=100$, $k=50$, $s=1$, shown for a single random realization (orange), and averaged over 5000 trials (blue). The early time shows a square-root diffusive growth. (Right) Variance of the R\'enyi-2 entropy for the setting in the left figure. The variance reaches a constant value nearly immediately (in contrast to the KPZ behavior of random brickwall \emph{qubit} circuits), then eventually saturates to the Haar-random limiting value.}\label{fig:numerics}
\end{figure}

While we consider some aspects of circuit complexity in this paper, a more detailed study would be of interest. Two main questions are obtaining lower bounds at low depths, and considering circuit depth in addition to gate count. We also mention the question of \emph{sampling} complexity. Despite the slower, diffusive entanglement growth in random brickwall linear optical networks compared to random brickwall qubit circuits, we still conjecture that classical hardness of sampling may occur by any algebraic depth $d\gtrsim n^\epsilon$, any $\epsilon>0$.

\subsection{Outline and notation}
In everything that follows, generic constants $C,c$ can change from line to line.
We generally make no attempt to optimize constants.

The rest of the paper is organized as follows.
\begin{itemize}
\item Section~\ref{sec:upperbounds}: Proof of Theorem~\ref{thm:up} on upper bounds for the R\'enyi-2 entropy, assuming Proposition~\ref{prop:expbound} which is proved in the following section. 

\item Section~\ref{sec:rw}: Boson random walk theorem, Theorem~\ref{thm:boson-rw}, which relates expectation values of $U$ to classical random walk probabilities. Proof of Proposition~\ref{prop:expbound}, which is used to finish the proof of Theorem~\ref{thm:up}(ii). Proof of Corollary~\ref{cor:upperbound2}(ii).

\item Section~\ref{sec:lower}: Proofs of Theorems~\ref{thm:lower} and \ref{thm:gen} on lower bounds for the R\'enyi-2 entropy.

\item Section~\ref{sec:complexity}: Proof of Theorem~\ref{thm:complexity} on circuit complexity.

\item Appendix~\ref{appendix:sym}: Proof of a random walk bound. Appendix~\ref{sec:concentration}: Lipschitz constant estimate for the R\'enyi-2 entropy, concentration of Haar measure, and an application of concentration of measure to R\'enyi-2 typicality in the Haar random case.

\end{itemize}

For convenience, we collect some notation defined or used in various places in the paper:
\begin{itemize}

\item $\fakepair(n)$: permutations in $S_n$ which are a product of disjoint transpositions. These can be naturally associated with partitions of $\{1,\ldots,n\}$ into size 1 or 2 sets, which we may call a ``pairing'' (where self-pairing is allowed). Such a pairing corresponds to a single layer of beamsplitters and phaseshifters, where a linear optical element is applied across each pair.
For example, in the case of the 1D brickwall circuit in Figure~\ref{fig:bricks}, the first layer $U^L$ corresponds to $\pi_1=(1\,2)(3\,4)\cdots(n-1\,n)\in S_n$ (written in cycle notation) and the partition $\{ \{ 1,2\}, \{3,4 \}, \dots, \{n-1,n\} \}$, and the second layer $U^R$ to $\pi_2=(1)(2\,3)(4\,5)\cdots(n-2\,n-1)(n)$ and the partition $\{ \{ 1\}, \{2,3 \}, \{4,5 \},\dots, \{n-2,n-1 \}, \{ n\} \}$.

\item $\ui{j}$: $n\times n$ matrix which represents a single layer of beamsplitters with phaseshifters applied in parallel according to $\pi_j\in\fakepair(n)$.

\item $U^{(i)}$: $n\times n$ matrix for a single step (which consists of $M$ layers) of the circuit, $U^{(i)}\vsim\prod_{j=M}^1\ui{j}=\ui{M}\cdots\ui{1}$. For example, $M=2$ for brickwall since a single step $U^{(i)}$ consists of two layers $U^L$ and $U^R$.

\item $U$: $n\times n$ matrix for the depth $d$ circuit, formed as $U=\prod_{i=1}^dU^{(i)}=U^{(d)}\cdots U^{(1)}$.

\item $U^L$ and $U^R$: These describe the structure of the matrices $\ui{1}$ and $\ui{2}$ for the brickwall circuit; see also Figure~\ref{fig:blocks}. If we write $U^{(i)}\dsim U^RU^L$, we mean that we take all the blocks in $U^R$ and $U^L$ to be independent and Haar distributed.

\item $X\dsim\mu$: random variable $X$ has law given by $\mu$. $X\vsim Y$: abuse of notation, but we use it to mean that random variables $X$ and $Y$ have the same distribution
\item $\|\cdot\|_\hs$: Hilbert--Schmidt norm; for an $n\times n$ matrix $A$, $\|A\|_\hs=\left(\sum_{i,j=1}^n|A_{ij}|^2\right)^{1/2}$.

\item $\intbrr{a:b}=\{a,a+1,\ldots,b-1,b\}$ for $a,b\in\Z$, $a\le b$
\item $\N=\{0,1,2,\ldots\}$
\item We make use of the usual asymptotic notations $o,O,\Theta,\Omega,\omega$.
\end{itemize}

\section{Upper bounds on entanglement}\label{sec:upperbounds}
In this section, we prove the upper bounds in Theorem~\ref{thm:up}, assuming Proposition~\ref{prop:expbound} which we will prove in the next section.

\subsection{Preliminaries}
We briefly review some useful expressions for the R\'enyi-2 entropy for Gaussian states.
Since the squeezing parameters are $s_i=s$, the covariance matrix of the reduced state of the subsystem $\Gamma$ takes the form $\sigma(U)=\cosh(2s)I_{\Gamma\oplus\Gamma}+\sinh(2s)M$,
where
\begin{align}\label{eqn:mmat}
M&=\begin{pmatrix}P\Re(\bar U\bar U^T)P^T&P\Im(\bar U\bar U^T)P^T\\
P\Im(\bar U\bar U^T)P^T&-P\Re(\bar U\bar U^T)P^T
\end{pmatrix},
\end{align}
for $P:\C^n\to\C^\Gamma$ the projection onto the subsystem $\Gamma$ \cite{Fukuda2019Typical-entangl}. 
Letting $W=\Pi UU^T\Pi \bar U\bar U^T\Pi$ with $\Pi$ the $n\times n$ projection matrix onto $\Gamma$, \cite[(A32),(A33)]{iosue2023page} uses a power series expansion of $S_2(U)=\frac{1}{2}\Tr\log\sigma(U)$ to give the expressions
\begin{align}\label{eqn:s2-tanh}
S_2(U)&=|\Gamma|\log\cosh(2s)+\frac{1}{2}\Tr\log(1-\tanh^2(2s)W)\\
&=\sum_{\ell=1}^\infty\frac{\tanh^{2\ell}(2s)}{2\ell}\left(|\Gamma|-\Tr W^\ell\right).\label{eqn:s2}
\end{align}
These expressions will be used in the upper and lower bounds of Theorems~\ref{thm:up} and \ref{thm:lower}. 

\subsection{Light cone bound \texorpdfstring{$O(d)$}{O(d)}} 

In this section we will prove Theorem~\ref{thm:up}(i).
We will apply \eqref{eqn:s2-tanh}.
Let $\Pi:\C^n\to\C^n$ be the rank $k$ projector onto the first $k$ modes, and let $\tilde V=\Pi UU^T\Pi$ and $W=\tilde V\tilde V^\dagger$. 
The projectors $\Pi$ in $\tilde V$ effectively cut the matrix $UU^T$ at the subsystem boundaries. 
Note that we can write $\tilde V=V\oplus\mathbf{0}_{n-k}$ where $V=P UU^T P^T$ is $k\times k$. 
If the resulting two subsystems were completely non-interacting, i.e.~$U=U_{k\times k}\oplus U_{n-k\times n-k}$, then $V=U_{k\times k}U_{k\times k}^T$ would be unitary and $W=I_k\oplus\mathbf{0}_{n-k}$, so that \eqref{eqn:s2-tanh} implies $S_2(U)=0$. We see that \eqref{eqn:s2-tanh} and \eqref{eqn:s2} thus measure how ``non-unitary'' $V$ is after applying the projections on $UU^T$. For small depth circuits, which should have small entanglement, we want to show that $V$ is only a small perturbation away from a unitary matrix.

\begin{proof}[Proof of Theorem~\ref{thm:up}(i)]
We will allow slightly more general matrix structures of the steps $U^{(i)}$; instead of requiring a brickwall step we will just assume they have bounded band width.
Let $U=\prod_{i=d}^1U^{(i)}$ be a 1D depth $d$ circuit with each $U^{(i)}$ having band width $\le w$ for some fixed $w\in\N$, i.e.~$U^{(i)}_{xy}=0$ if $|x-y|>w$. For the brickwall circuit, we have $w=2$. The matrix representation of the depth $d$ matrix $U$ is then a band matrix with band width at most $wd$. The matrix $UU^T$ is also a band matrix with band width at most $2wd$. This permits at most $4wd+1$ nonzero entries in a row. When we form $V=P UU^TP^T$, the projectors cut off part of the band of $UU^T$ in the bottom right corner (Figure~\ref{fig:band}).
\begin{figure}[htb]
\begin{tikzpicture}
\def\psize{1.5cm}
\def\msize{1.35cm}
%structure inside
\def\w{.4cm} % band width * 2
\draw[fill=gray!20] (-\msize,\msize-\w)--(-\msize,\msize)--(-\msize+\w,\msize)--(\msize,-\msize+\w)--(\msize,-\msize)--(\msize-\w,-\msize)--cycle;
\draw[<->,xshift=.6cm,yshift=-.6cm] (-\w,0)--(\w,0);
\node[above] at (.5,-.65) {$4wd+1$};
% red triangle
\fill[red] (0,0)--(\w,0)--(0,\w)--cycle;
\draw[red,<-] (.3,.3) -- (.7,.7) node[above] {cut};
% box for V
\draw[line width=2] (-1.4,0)--(0,0)--(0,1.4)--(-1.4,1.4)--cycle;
\draw[dashed] (0,\w)--(-1.5,\w);
\draw[<-] (-.5,1.5)--(-.2,1.7) node[right] {$V$};
\draw (0,-1.5)--(0,0)--(1.5,0);
\draw[decoration={brace,raise=3pt,aspect=.5,amplitude=4pt},decorate,xshift=-1mm] (-1.5,\w)--(-1.5,1.4);
\node[left] at (-1.8,1) {orthonormal};
\draw[decoration={brace,raise=3pt,aspect=.5,amplitude=4pt},decorate,xshift=-1mm] (-1.5,0)--(-1.5,{\w-.5mm});
\node[left] at (-1.8,.2) {$2wd$}; 
\node[right] at (1.7,0.1) {\large $=UU^T$};
% left and right parenthesis
\node at (-\psize,0) {$\left(\vphantom{\rule{\psize}{\psize}}\right.$};
\node at (\psize,0) {$\left.\vphantom{\rule{\psize}{\psize}}\right)$};
\end{tikzpicture}
\caption{Forming $\tilde V$ or $V$ from the banded matrix $UU^T$ requires cutting out some entries from the last $O(d)$ rows in the subsystem, making those rows no longer orthonormal.}\label{fig:band}
\end{figure}
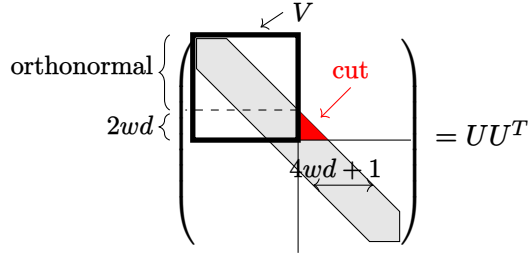
This cut only affects the last $2wd$ rows of $UU^T$ before the cut. (If $2wd\ge k$, the final bound $O(d)$ is trivial, so we will assume $2wd<k$.)
The top $k-2wd$ rows of $V$ are still orthonormal, and also orthogonal to the bottom $2wd$ rows of $V$ since the top rows don't see the cut part (Figure~\ref{fig:band}).
Therefore
\begin{align}\label{eqn:vvs}
VV^\dagger=I_{k-2wd}\oplus RR^\dagger,
\end{align} 
for a $2wd\times 2wd$ subunitary matrix $RR^\dagger$ corresponding to the bottom $2wd$ rows of $V$.

Applying \eqref{eqn:s2-tanh} and \eqref{eqn:vvs}, we then have
\begin{align*}
S_2(U)&=k\log\cosh(2s)+\frac{1}{2}\Tr\log(1-\tanh^2(2s)W)\\
&=k\log\cosh(2s)+\frac{1}{2}\Tr I_{k-2wd}\log(1-\tanh^2(2s)) + \frac{1}{2} \Tr\log(1-\tanh^2(2s)RR^\dagger)\\
&=2wd\log\cosh(2s)+\frac{1}{2}\Tr\log(1-\tanh^2(2s)RR^\dagger)\\
&\le 2wd\log\cosh(2s)-0.\numberthis
\end{align*}
For the brickwall circuit this gives $S_2(U)\le 4d\log\cosh(2s)$. 
\end{proof}

\begin{rmk}\label{rmk:worst}
The above worst-case light cone argument generalizes to other beamsplitter geometries and subsystems $\Gamma$ as follows. 
Let $\partial^-_{UU^T}\Gamma:=\{x\in\Gamma:\exists y\not\in\Gamma,\,\langle x|UU^T|y\rangle\ne0\}$. 
Recall $W=\Pi_\Gamma UU^T\Pi_\Gamma \bar U\bar U^T\Pi_\Gamma$ and that $V=P_\Gamma UU^T P_\Gamma$ which is $|\Gamma|\times|\Gamma|$, so that $W\cong VV^\dagger\oplus \mathbf{0}_{n-|\Gamma|}$. We claim that for any $x\in\Gamma\setminus \partial^-_{UU^T}\Gamma$, the $x$th row $\langle x|V$ of $V$ has unit norm and is orthogonal to every other row of $V$. 
This follows since $\langle x|UU^T|y\rangle=0$ for all $y\not\in\Gamma$, so that for $z\in\Gamma$,
\begin{align}
\nonumber \langle x|VV^\dagger|z\rangle&=\sum_{y\in\Gamma}\langle x|UU^T|y\rangle\langle y|(UU^T)^\dagger|z\rangle=\langle x|z\rangle.
\end{align}
Therefore, up to a permutation of the coordinate basis,
\begin{align}
\nonumber VV^\dagger&=I_{\Gamma\setminus\partial^-_{UU^T}\Gamma}\oplus RR^\dagger,
\end{align}
where $RR^\dagger$ is a $|\partial^-_{UU^T}|\times|\partial^-_{UU^T}|$ subunitary matrix corresponding to the rows in $\partial^-_{UU^T}$. As in the 1D band matrix case, by \eqref{eqn:s2-tanh} we obtain
\begin{align}\label{eqn:s2-bdy}
S_2(U)&\le|\partial^-_{UU^T}\Gamma|\log\cosh(2s),
\end{align}
for any geometry of $U$.

As a specific example, we can easily obtain a worst-case upper bound for the entanglement growth for a $D$-dimensional brickwork circuit as stated in Corollary~\ref{cor:upperbound2}(i).
For $m\in2\N$, let the $n=m^D$ modes be $\{(x_1,\ldots,x_D):x_j\in\intbrr{1:m}\}$, and let $U^{L,j}$ and $U^{R,j}$ be single bricklayer unitaries acting on the $j$th coordinate dimension, defined as follows: The matrix $U^{L,j}$ puts a $2\times 2$ unitary matrix across coordinates corresponding to modes $x=(x_1,\ldots,x_D)$ and $y=(y_1,\ldots,y_D)$ if $x_i=y_i$ for all $i\ne j$, and $\{x_j,y_j\}\in\pi^L$, where $\pi^L$ corresponds to the left brickwall permutation $(1\,2)\cdots(m-1\,m)\in\fakepair(m)$. Similarly, $U^{R,j}$ is defined in the same way but with $\{x_j,y_j\}\in\pi^R$, where $\pi^R$ corresponds to the right brickwall permutation $(1)(2\,3)\cdots(m-2\,m-1)(m)\in\fakepair(m)$. 
A single step $U^{(i)}$ will be the product of the $2D$ matrices $U^{L,1},\cdots, U^{L,D},U^{R,1},\cdots, U^{R,D}$, in any order we choose, with different matrices allowed for different $i$.
The depth $d$ $D$-dimensional brickwork unitary is then $U=\prod_{i=d}^1 U^{(i)}$. Note that a single step $U^{(i)}$ consists of $2D$ single layers, so the depth $d$ unitary $U$ consists of $2Dd$ single layers.

Consider the subsystem $\Gamma=\{(x_1,\ldots,x_D):1\le x_j\le k_j\}$ for some choice of $k_1,\ldots,k_D$, and define $A_\Gamma:=\sum_{j=1}^D\oneb_{k_j<m}\prod_{\substack{\ell=1\\\ell\ne j}}^D k_\ell$ to be the ``boundary area'' of $\Gamma$. We want to estimate $|\partial^-_{UU^T}\Gamma|$.
Note that for each coordinate $j$, $UU^T$  contains only $4d$ single layer matrices $U^{L,j}$, $U^{R,j}$, $(U^{L,j})^T$, $(U^{R,j})^T$, one set for each $U^{(i)}$, which act on this particular $j$th coordinate. If we view a nonzero $\langle x|U^{L/R,j}|y\rangle\ne0$ as allowing movement from mode $x$ to mode $y$ (and the same for the transposes $(U^{L/R,j})^T$), then each of these matrices only allows movement at most distance 1 in the $j$th coordinate. Then by expanding $UU^T$ in terms of its $4Dd$ factors, we see that for $\langle x|UU^T|y\rangle\ne0$ we must have $|x_j-y_j|\le 4d$ for each $j=1,\ldots,D$.
Therefore, letting $\dist_{\ell^\infty}$ denote the $\ell^\infty$ distance,
\begin{align*}
|\partial^-_{UU^T}\Gamma|\le |\{x\in\Gamma:\dist_{\ell^\infty}(x,\Gamma^c)\le 4d\}|
&\le \sum_{j=1}^D|\{x\in\Gamma:x_j> k_j-4d\}|\mathbf{1}_{k_j<m}\\
&\le \sum_{j=1}^D4d\mathbf{1}_{k_j<m}\prod_{\substack{\ell=1\\\ell\ne j}}^Dk_\ell = 4d A_\Gamma.\numberthis\label{eqn:Dbdy}
\end{align*}
Inserting this in \eqref{eqn:s2-bdy} gives 
\begin{align}
S_2(U)&\le 4dA_\Gamma\log\cosh(2s), \label{eqn:s2d-worst}
\end{align}
for the depth $d$, $D$-dimensional brickwork circuit. If all $k_j=k<m$, this becomes $4dDk^{D-1}\log\cosh(2s)$ since the boundary area of $\Gamma$ is $Dk^{D-1}$.
\end{rmk}

\subsection{Average case bound \texorpdfstring{$O(\sqrt{d\log d})$}{O(sqrt(dlogd))}} \label{sec:avgcase}

In this section, we prove Theorem~\ref{thm:up}(ii), assuming Proposition~\ref{prop:expbound} below, which we will prove in the next section.

When all of the $2\times2$ and $1\times1$ blocks in the brickwall matrix $U=\prod_{i=d}^1U^{(i)}$ are independent and Haar random, Theorem~\ref{thm:up}(ii) gives a slower growth of the R\'enyi-2 entropy than in the worst-case bound.
As we will show, this is because the entries $|\langle x|UU^T|y\rangle|^2$ decay rapidly with high probability when $|x-y|\gg \sqrt{d}$ (Figure~\ref{fig:band-numerics}). So even though $UU^T$ has a worst-case band width $\Theta(d)$, it effectively only has a band width approximately $\Theta(\sqrt{d})$, or perhaps just barely $\omega(\sqrt{d})$.
When we cut off the corners in the band of $UU^T$ to form $V$ as in Figure~\ref{fig:band}, we end up cutting out much smaller entries, many of which are exponentially small, than in the worst case bound. 
The light cone/cut argument then suggests we should be able to replace the R\'enyi-2 bound $O(d)$ with the effective band-width $O(\sqrt{d})$, or as we will show $O(\sqrt{d\log d})$.

\begin{figure}[htb]
\includegraphics[height=2.7in]{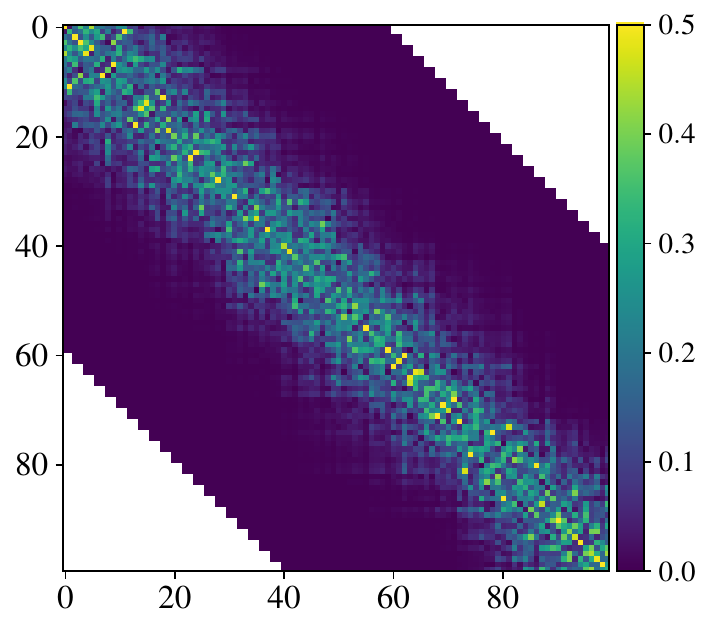}
\includegraphics[height=2.7in]{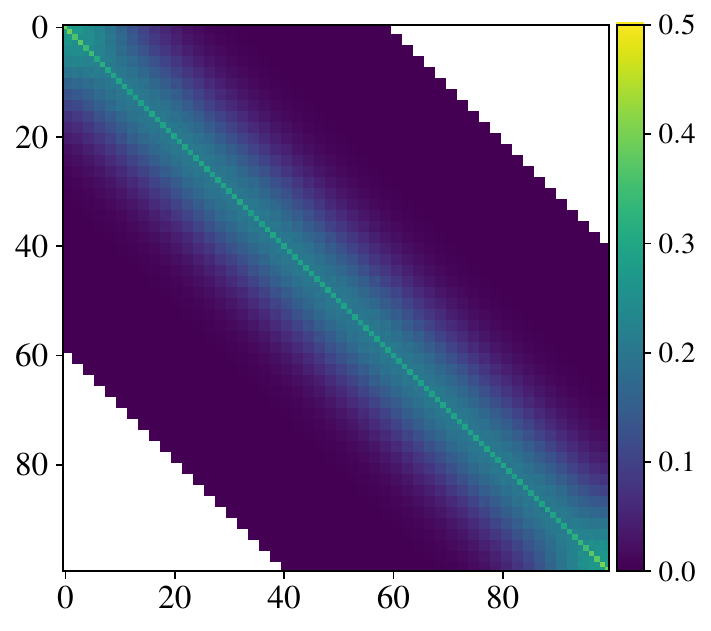}
\caption{
Plots of the matrix entries of $|UU^T|$ for $n=100$ and depth $d=15$, for a single realization (left) and averaged over $1000$ realizations (right). The color scale is truncated at $0.5$ (all entries $>0.5$ are plotted at the same color as $0.5$).
Entries near the edges of the band are much smaller than those near the center. This creates a smaller ``effective'' band width $\approx\Theta(\sqrt{d})$ compared to the actual band width $\Theta(d)$. Entries outside the actual $\Theta(d)$ band which are exactly zero are shown in white.
}\label{fig:band-numerics}
\end{figure}

The main estimate we need in order to prove Theorem~\ref{thm:up}(ii) is
\begin{prop}\label{prop:expbound}
Let $U=\prod_{i=d}^1 U^{(i)}$ be a random depth $d$ brickwall circuit as described above. 
There exist numerical constants $C,c>0$ so that for all $x,y\in\intbrr{1:n}$, 
\begin{align}\label{eqn:uut-exp}
\E|\langle x|UU^T|y\rangle|^2&\le Cde^{-c(x-y)^2/d}, 
\end{align}
and
\begin{align}\label{eqn:uut-prob}
\P\left[|\langle x|UU^T|y\rangle|^2>d^2e^{-2c(x-y)^2/d}\right] &\le Cde^{-c(x-y)^2/d}. 
\end{align}
\end{prop}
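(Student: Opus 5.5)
We will deduce both estimates from the second-moment boson random walk relation \eqref{eqn:rw0}, i.e. $\E|\langle x|U|y\rangle|^2=\P_y[Z_d=x]$ (Theorem~\ref{thm:boson-rw}), combined with a Gaussian large-deviation bound for the 1D brickwall classical walk $Z_t$: there are constants $C,c$ with $\P_y[|Z_d-y|\ge r]\le Ce^{-2cr^2/d}$ for all $r$. For this tail bound we note that each step of $Z_t$ moves by at most $2$. Moreover, after the two layers $U^L,U^R$ with independent Haar blocks, the position increments are bounded and symmetric in a suitable sense: within a Haar $2\times 2$ block the walker moves to either output mode with probability $1/2$, independently of the past. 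We will therefore write $Z_t-Z_0$ as a sum of bounded martingale-type increments, after checking that the conditional mean drift per step is zero away from the reflecting boundary. The only exception is reflection at the boundary modes $1,n$, which can only decrease distances, so we handle it by a reflection coupling with a walk on $\Z$. Azuma--Hoeffding then gives the desired tail. This is the step I expect to need the most care: the increments are not independent, so we must verify the martingale structure explicitly.

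Next we expand the matrix product,
\[
\langle x|UU^T|y\rangle=\sum_z \langle x|U|z\rangle\langle y|U|z\rangle .
\]
Since $U$ is banded with width $2d$, only $|z-x|,|z-y|\le 2d$ contribute, i.e. at most $4d+1$ terms. By Cauchy--Schwarz,
\[
|\langle x|UU^T|y\rangle|^2\le (4d+1)\sum_z|\langle x|U|z\rangle|^2|\langle y|U|z\rangle|^2 .
\]
For each term we use that both factors are at most $1$ in modulus (entries of a unitary matrix) and keep only the smaller one: $|U_{xz}|^2|U_{yz}|^2\le\min(|U_{xz}|^2,|U_{yz}|^2)$. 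Since $|x-z|+|y-z|\ge|x-y|$, one of the two distances is at least $|x-y|/2$, and we bound the term by that entry's squared modulus. Taking expectations and applying \eqref{eqn:rw0} gives
\[
\E|U_{xz}|^2=\P_z[Z_d=x]\le\P_z[|Z_d-z|\ge|x-y|/2].
\]
We use the version of \eqref{eqn:rw0} for columns of $U$ (equivalently, apply the result to $U^T$, which is again a brickwall circuit with the layers reversed). The tail bound yields $Ce^{-c(x-y)^2/(2d)}$ for each term, and summing over at most $4d+1$ terms, together with the prefactor $4d+1$, gives $O(d^2)e^{-c'(x-y)^2/d}$. To reduce to a single factor of $d$ as in \eqref{eqn:uut-exp}, we instead sum the probabilities: $\sum_z\P_z[Z_d=x]\mathbf 1_{|x-z|\ge r}$ equals the expected number of the walkers started at the various $z$ that land at $x$ at distance $\ge r$. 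By the doubly stochastic property of $\E|U_{xz}|^2$ (rows and columns sum to $1$), this sum is at most $Ce^{-cr^2/d}$ without any extra factor. Hence $\E|\langle x|UU^T|y\rangle|^2\le (4d+1)\cdot 2Ce^{-c(x-y)^2/(4d)}$, which is \eqref{eqn:uut-exp} after renaming constants.

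Finally, \eqref{eqn:uut-prob} follows from \eqref{eqn:uut-exp} and Markov's inequality. With $a=d^2e^{-2c(x-y)^2/d}$ we get $\P[|\cdot|^2>a]\le Cde^{-c(x-y)^2/d}/a=Cd^{-1}e^{c(x-y)^2/d}$, which is not the right form. Instead we apply Markov at threshold $a'=e^{-c(x-y)^2/d}$, adjusting constants so that $d^2e^{-2c(x-y)^2/d}\ge e^{-c(x-y)^2/d}$ in the relevant regime (when it fails, the probability bound exceeds $1$ and is trivial for $|x-y|\lesssim\sqrt{d\log d}$). More carefully, we prove \eqref{eqn:uut-exp} with exponent $4c$, then Markov at level $d^2e^{-2c(x-y)^2/d}$ gives at most $Cd^{-1}e^{-2c(x-y)^2/d}\le Cde^{-c(x-y)^2/d}$, with the constant in the exponent shrunk as needed.
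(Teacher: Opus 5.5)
Your plan works, but the one probabilistic input is asserted incorrectly: the brickwall walk does \emph{not} have zero conditional drift. From an odd site, one step ($U^L$ then $U^R$) has increment uniform on $\{-1,0,1,2\}$, with mean $+\tfrac12$. From an even site it is uniform on $\{-2,-1,0,1\}$, with mean $-\tfrac12$. So $S_t-S_0$ (and $Z_t-Z_0$) is not a martingale, and the check you defer would fail.

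The fix is a bounded corrector. Let $M_t$ be the midpoint of the $U^L$-block containing $S_t$: $M_t=S_t+\tfrac12$ if $S_t$ is odd and $M_t=S_t-\tfrac12$ if $S_t$ is even. The $U^L$ layer makes the walker uniform within its block before the $U^R$ layer lets it cross to a neighboring block. Hence the increments of $M_t$ are $\pm2$ with probability $\tfrac14$ each and $0$ with probability $\tfrac12$, independently of the past. Since $|S_t-M_t|=\tfrac12$, Hoeffding gives $\P_y[|S_d-y|\ge r]\le 2e^{-(r-1)^2/(8d)}$. Your observation that the folding map $R$ of Lemma~\ref{lem:reflect} is $1$-Lipschitz then transfers this to $Z_d$. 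The paper instead notes that the odd-indexed and even-indexed increments are each i.i.d.\ and applies Hoeffding to the two sums separately. It handles the reflection by summing over images after reducing to $d\le n^2$.

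A smaller point: double stochasticity alone does not bound the column tail $\sum_{|z-x|\ge r}\P_z[Z_d=x]$. This quantity is the time-$d$ tail probability, started at $x$, of the reversed walk with kernel $P^T=P^{[2]}P^{[1]}$. That walk needs its own (identical) tail estimate. You can also drop this refinement entirely. Since $|\langle x|UU^T|y\rangle|\le 1$, the bound is trivial when $(x-y)^2\le c^{-1}d\log(Cd)$, so your crude $O(d^2)$ prefactor is absorbed by shrinking $c$.

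With that repaired, your route is a valid alternative to the paper's. The paper averages over the first layer $U^L$ to kill the cross terms, giving the identity $\E|\langle x|UU^T|y\rangle|^2=\E\sum_\alpha|U_{x\alpha}|^2|U_{y\alpha}|^2$. It then applies Cauchy--Schwarz to fourth moments and pointwise Gaussian bounds on $\P_\alpha[Z_d=x]$, ending with a $\sqrt d$ prefactor. You instead use the light cone with a deterministic Cauchy--Schwarz, losing a factor $4d+1$ that the stated $Cd$ absorbs. Your $\min$ trick then needs only tail bounds, not pointwise ones.

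For \eqref{eqn:uut-prob}, the paper applies Markov to each product $|U_{x\alpha}||U_{y\alpha}|$ and takes a union bound over $\alpha$. Your final argument proves \eqref{eqn:uut-exp} with exponent $4c$, then applies Markov to $|\langle x|UU^T|y\rangle|^2$ at level $d^2e^{-2c(x-y)^2/d}$. That is correct and shorter; just remove the earlier false starts in that paragraph.
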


We will prove Proposition~\ref{prop:expbound} in Section~\ref{sec:rw} by relating $\E|\langle x|UU^T|y\rangle|^2$ and $|\langle x|UU^T|y\rangle|^2$ to classical random walk probabilities. Note that the quantity $e^{-c(x-y)^2/d}$ is reminiscent of the exponential decaying term in the probability for a simple random walk on $\Z$ to go from $x$ to $y$ in $d$ steps. 
For the brickwall circuits, we will have a random walk on $\{1,\ldots,n\}$ with edges between nearby sites determined by the brickwall structure, and which has similar exponential decay.

Assuming Proposition~\ref{prop:expbound} for now, we can prove Theorem~\ref{thm:up}(ii). The similar bound in Corollary~\ref{cor:upperbound2}(ii) for higher dimensional brickwork circuits will be proved in Section~\ref{subsec:higherdim}.

\begin{proof}[Proof of Theorem~\ref{thm:up}(ii)]
We plan to estimate $\Tr W^\ell$ and use \eqref{eqn:s2}.
Recall $\Tr W=\Tr VV^\dagger=\|V\|_\hs^2$ and that $V=P U U^TP^T$. Considering the top $k\times n$ part of $UU^T$ (Figure~\ref{fig:topk}) and using that the rows are normalized and that $UU^T$ is a banded matrix with band width at most $4d$, we see that for any $\gamma\in\intbrr{0:k}$, 
\begin{align}
\nonumber\|V\|_\hs^2&= \sum_{x=1}^k\sum_{y=1}^k|\langle x|V|y\rangle|^2= \sum_{x=1}^k \bigg(1-\sum_{y=k+1}^n|\langle x|UU^T|y\rangle|^2\bigg)\\
&\ge k-\gamma-\sum_{x=\max(1,k-4d)}^{k-\gamma}\sum_{y=k+1}^{\min(n,k+4d)}|\langle x|UU^T|y\rangle|^2,\label{eqn:VF}
\end{align}
where we are now only summing over the matrix entries of $UU^T$ pictured in gray in Figure~\ref{fig:topk}.
Note that the bound holds for all $\gamma\in\intbrr{0:k}$ with equality when $\gamma=0$.
The min and max bounds are inserted because for small $4d\ll n$, many of the entries in the gray region are actually zero. These min and max bounds imply there are at most $O(d^2)$ nonzero terms in the sum.
\begin{figure}[htb]
\begin{tikzpicture}
\def\rt{5cm};
\def\gh{.5cm};
\draw (0,0)--(\rt,0)--(\rt,2)--(0,2)--cycle;
\draw (0,0)--(2,0)--(2,2)--(0,2)--cycle;
\draw (0,\gh)--(\rt,\gh);
\node at (1,1) {{\Large $V$}};
\node[left] at (0,0) {$x=k$};
\node[left] at (0,\gh) {$x=k-\gamma$};
\node[left] at (0,1.9) {$x=1$};
\node[above] at (0.2,2) {$y=1$};
\node[above] at (2,2) {$y=k$};
\node[above] at (\rt,2) {$y=n$};
\draw[fill=gray!60] (2,\gh)--(\rt,\gh)--(\rt,2)--(2,2)--cycle;
\node[right] at (\rt+.2cm,1) {$=$ top $k$ rows of $UU^T$};
\end{tikzpicture}
\caption{Top $k$ rows of $UU^T$. Due to \eqref{eqn:VF}, we only need to upper bound the entries of $UU^T$ in the gray shaded region.}\label{fig:topk}
\end{figure}
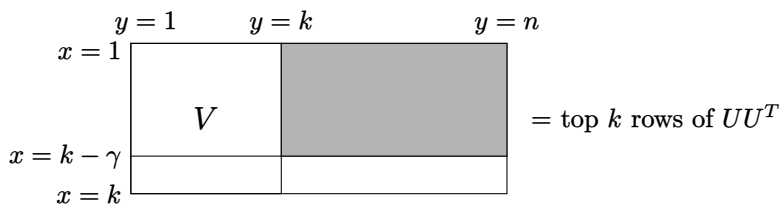

We first prove the average case bound \eqref{eqn:exp-bound}. There are only $O(d^2)$ nonzero terms in the double sum in \eqref{eqn:VF}, and all of them satisfy $|y-x|\ge\gamma$. Using the first part of Proposition~\ref{prop:expbound} and taking $\gamma=c^{-1/2}\sqrt{\frac{5}{2}d\log d}$, we obtain 
\begin{align*}
\E \Tr W=\E\|V\|_\hs^2&\ge k-\tlog-O(d^2)Cde^{-c\gamma^2/d}\\
&\ge k-O(\sqrt{d\log d}).\numberthis\label{eqn:trW}
\end{align*}

To handle higher traces of $W$, let $0\le \lambda_i\le 1$ ($i \in\intbrr{1:k} $) be the $k$ eigenvalues of $VV^\dagger$, so that
\begin{align}\label{eqn:trl}
\Tr W^\ell=\Tr((VV^\dagger)^\ell)&=\sum_{i=1}^k\lambda_i^\ell.
\end{align}
We then use
\begin{lem}\label{lem:moments}
Let $x_i$, $i=1,\ldots,k$ be random variables taking values in $[0,1]$. 
If $\E\sum_{i=1}^kx_i\ge k - \varepsilon$, then for any $\ell\ge1$,
\begin{align}\label{eqn:l-bound}
\E\sum_{i=1}^kx_i^\ell &\ge k- \ell \varepsilon.
\end{align}
\end{lem}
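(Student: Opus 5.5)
The bound follows pointwise, with no probabilistic input beyond linearity of expectation. I will use the elementary inequality
\[
1-t^\ell\le \ell(1-t)\qquad\text{for } t\in[0,1],\ \ell\ge1,
\]
which is the same as $x^\ell\ge 1-\ell(1-x)$, a Bernoulli-type inequality. It can be seen by factoring
\[
1-t^\ell=(1-t)(1+t+\cdots+t^{\ell-1})
\]
and bounding each of the $\ell$ terms in the second factor by $1$. For non-integer $\ell\ge1$ it follows instead from convexity of $t\mapsto t^\ell$: the tangent line at $t=1$, namely $1-\ell(1-t)$, lies below the graph. Only integer $\ell$ is needed for \eqref{eqn:s2}.

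Applying this to each $x_i\in[0,1]$ and summing over $i$ gives, for every realization,
\[
\sum_{i=1}^k x_i^\ell\ \ge\ \sum_{i=1}^k\bigl(1-\ell(1-x_i)\bigr)=k-\ell\Bigl(k-\sum_{i=1}^k x_i\Bigr).
\]
Taking expectations and using the hypothesis $\E\sum_i x_i\ge k-\varepsilon$, i.e.\ $\E\bigl(k-\sum_i x_i\bigr)\le\varepsilon$, yields $\E\sum_i x_i^\ell\ge k-\ell\varepsilon$, which is \eqref{eqn:l-bound}.

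The only step requiring care is checking the direction of the inequality. This is where the constraint $x_i\in[0,1]$ is used: $t^\ell$ lies above its tangent line at $1$ on all of $[0,\infty)$ by convexity, but the factorization argument needs $t\le1$ to bound $1+t+\cdots+t^{\ell-1}$ by $\ell$. No independence or distributional assumptions on the $x_i$ are needed, which matters because in the application $x_i=\lambda_i$ are the dependent eigenvalues of $VV^\dagger$.
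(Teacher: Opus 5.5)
Your proof is correct and is essentially the paper's argument. The paper inducts on $\ell$ at the level of expectations, using the pointwise bound $x_i^{\ell+1}\ge x_i+x_i^\ell-1$ (from $x_i^\ell(1-x_i)\le 1-x_i$); that is just the inductive proof of the same Bernoulli-type inequality $x^\ell\ge 1-\ell(1-x)$ which you prove in closed form before taking a single expectation. Your version also covers non-integer $\ell\ge1$ via convexity, where in fact only $x_i\ge0$ is needed, though only integer $\ell$ is used in \eqref{eqn:s2}.
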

\begin{proof}[Proof of Lemma~\ref{lem:moments}]
By induction. But the intuitive reason is that if the sum of the $x_i$ is $\ge k-\varepsilon$ then even in the worst case, most of the $x_i$ should be at least $1-\varepsilon/k$ and raising to the $\ell$th power and summing gives $\approx k-\ell\varepsilon$.
Write
\begin{align*}
\sum_{i=1}^kx_i^{\ell+1}
&= \sum_{i=1}^k-x_i^\ell(1-x_i)+ \sum_{i=1}^kx_i^\ell\\
&\ge -\sum_{i=1}^k(1-x_i)+\sum_{i=1}^kx_i^\ell=-k+\sum_{i=1}^kx_i+\sum_{i=1}^kx_i^\ell.\numberthis\label{eqn:lp1}
\end{align*}
Taking expectation values and assuming \eqref{eqn:l-bound} holds for $\ell$, applying it and the property for $\ell=1$ gives the desired bound for $\ell+1$.
\end{proof}

Applying the lemma, we obtain $\E \Tr W^\ell\ge k-\ell O(\sqrt{d\log d})$, which with \eqref{eqn:s2} implies 
\begin{align}
\E S_2(U)&\le O(\sqrt{d\log d})\frac{\tanh^2(2s)}{2(1-\tanh^2(2s))}=O(\sqrt{d\log d})\frac{1}{2}\sinh^2(2s),
\end{align}
which is \eqref{eqn:exp-bound}.

For the probability bound \eqref{eqn:prob-bound}, we start with \eqref{eqn:VF} and apply the second part of Proposition~\ref{prop:expbound}.
There are only $O(d^2)$ nonzero terms in the double sum in \eqref{eqn:VF}. 
All these terms have $|x-y|\ge\gamma$, where we take $\gamma=\sqrt{\frac{3+\kappa}{c}d\log d}$.
A union bound with Eq.~\eqref{eqn:uut-prob} of Proposition~\ref{prop:expbound} then shows that with probability $1-O(d^3e^{-c\gamma^2/d})=1-O(d^{-\kappa})$,
\begin{align*}
|\langle x|UU^T|y\rangle|^2&\le d^2e^{-2c(x-y)^2/d}\le \frac{1}{d^{4}},\quad\text{for all $(x,y)$ in \eqref{eqn:VF}}.
\end{align*}
Then with probability $1-O(d^{-\kappa})$,
\begin{align*}
\Tr W &\ge k-\gamma-O(d^2)\frac{1}{d^{4}}=k-O_\kappa(\sqrt{d\log d}). 
\numberthis\label{eqn:probbound-final}
\end{align*}
We can then apply Lemma~\ref{lem:moments} (with deterministic $x_i$) to derive a lower bound on $\Tr W^\ell$; combined with the formula \eqref{eqn:s2}, we thus obtain \eqref{eqn:prob-bound}.
\end{proof}

\section{Boson random walks and proof of Proposition~\ref{prop:expbound}}\label{sec:rw}

In this section, we prove the relationship between the random matrix $U$ and a classical random walk, and then use this to prove Proposition~\ref{prop:expbound}.

\subsection{Boson random walks}\label{subsec:bosonrw}

Every unitary matrix $U$ can be associated with a doubly stochastic matrix $P$ given by $P_{xy}=|U_{xy}|^2$. So it is natural to consider $\E|U_{xy}|^2$, and to determine which classical random walk it corresponds to.

We consider more general geometries than just the brickwall structure. Allowing for non-spatially-local 2-mode beamsplitters, we will consider $U=\prod_{i=d}^1U^{(i)}$, where the steps $U^{(i)}$ are independent and identically distributed, with each $U^{(i)}$ consisting of $M$ layers of beamsplitters/phaseshifters (Figure~\ref{fig:generalstep}). Each of the $M$ layers corresponds to a single ``pairing'' up of the $n$ modes, where each mode is either matched with a partner (corresponding to applying a combination of beamsplitter with phaseshifter on the two modes) or to itself (corresponding to applying a 1-mode phaseshifter). As will be useful in Section~\ref{sec:lower}, a step $U^{(i)}$ can also be associated with a graph whose vertices are the modes, and where two distinct vertices are connected by an edge iff there is a beamsplitter in some $u^{[i,j]}$ which connects the two modes (see e.g.~Figure~\ref{fig:generalstep}).

\begin{figure}[htb]
\begin{tikzpicture}[scale=.5,baseline=(current bounding box.center)]
\def\rcol{gray!30}
\def\spacing{1.4cm}
\begin{scope}[{Circle}-{Circle},shorten >=-2pt, shorten <=-2pt, line width =.8] % draw beamsplitters
\draw[fill=\rcol,color=\rcol] (-.25,.5) rectangle (.75,-5.5); % rectangle
\draw (0,0)--(0,-1);
\draw (0,-2)--(0,-4);
\draw[xshift=4mm] (0,-3)--(0,-5);
\begin{scope}[xshift=\spacing] % 2nd group
\draw[fill=\rcol,color=\rcol] (-.25,.5) rectangle (1,-5.5); % rectangle
\draw (0,0)--(0,-3);
\draw[xshift=4mm] (0,-1)--(0,-4);
\draw[xshift=8mm] (0,-2)--(0,-5);
\end{scope}
\begin{scope}[xshift={2*\spacing+3mm}] %3rd group
\draw[fill=\rcol,color=\rcol] (-.25,.5) rectangle (.75,-5.5); % rectangle
\draw (0,-1)--(0,-2);
\draw (0,-3)--(0,-4);
\draw[xshift=4mm] (0,0)--(0,-5);
\end{scope}
\begin{scope}[xshift={3*\spacing+3mm}] %4th group
\draw[fill=\rcol,color=\rcol] (-.25,.5) rectangle (.75,-5.5); % rectangle
\draw(0,0)--(0,-2);
\draw[xshift=4mm] (0,-1)--(0,-3);
\draw (0,-4)--(0,-5);
\end{scope}
\end{scope}
% lines
\foreach\mode in {0,...,5}{
\draw (-.5,-\mode)--++(6,0);
}
% labels
\foreach\layer in {1,...,4}{
\node[xshift=-6mm] at (1.5*\layer,-6) {$\ui{i,\layer}$};
}
\draw[decoration={brace,raise=3pt,aspect=.5,amplitude=4pt},decorate] (-.25,.5)--(5.25,.5);
\node [above] at (3,1) {$U^{(i)}=\ui{i,4}\ui{i,3}\ui{i,2}\ui{i,1}$};
\end{tikzpicture}
\qquad\qquad
\begin{tikzpicture}[scale=1.3, baseline=(current bounding box.center)]
\draw (0,0) -- (1,1.73205) -- (2,0) -- cycle;
\begin{scope}[xshift=.75cm,yshift=.75cm,scale=.25]
\draw (0,0) -- (1,-1.73205) -- (2,0) -- cycle;
\end{scope}
\draw (0,0)--(.75,.75);
\draw (0,0)--(1,.3);
\draw (2,0)--++(-1,.3);
\draw (2,0)--++(-.75,.75);
\draw (1,1.73205)--++(-.25,-1);
\draw (1,1.73205)--++(.25,-1);
\foreach \vertex in {(0,0),(2,0),(1,1.73205),(.75,.75),(1,.3),(1.25,.75)}{
\draw[fill=black] \vertex circle (2pt);
}
\end{tikzpicture}
\caption{(Left) An example of a single step $U^{(i)}=\ui{i,4}\ui{i,3}\ui{i,2}\ui{i,1}$ consisting of $M=4$ layers and containing non-local 2-mode beamsplitters. Beamsplitters are shown using connecting lines.
(Right) The 4-layer step $U^{(i)}$ shown in the left can be associated with the octahedral graph shown in the right.}\label{fig:generalstep}
\end{figure}
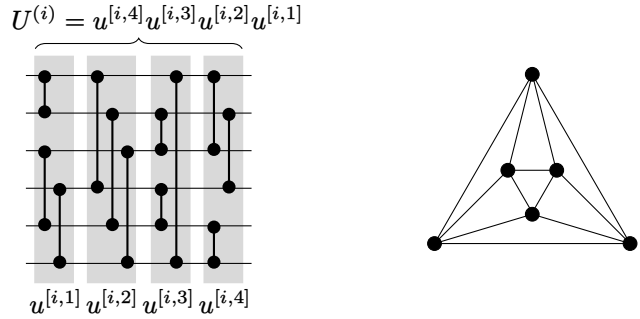

More formally, let $\fakepair(n)$ denote the set of permutations $\pi\in S_n$ which are a product of disjoint transpositions; equivalently it can be associated with the set of partitions of $\{1,\ldots,n\}$ into sets of size $1$ (corresponding to fixed points of $\pi$) or $2$ (corresponding to transpositions in $\pi$).
We may refer to $\pi$ as a ``pairing'', where we mean that we allow self-pairings.
Writing $U^{(i)}=\prod_{j=M}^1\ui{i,j}$ as a product of the $M$ layers, each of the layers $\ui{i,j}$ in the single step $U^{(i)}$ corresponds to pairing up indices $x,y\in\{1,\ldots,n\}$ (where self-pairings $x=y$ are allowed) according to some $\pi_j\in\fakepair(n)$. For each pair $x\ne y$, we put an independent Haar random $2\times2$ unitary matrix across the corresponding submatrix $\begin{bmatrix}\ui{i,j}_{xx}&\ui{i,j}_{xy}\\\ui{i,j}_{yx}&\ui{i,j}_{yy}\end{bmatrix}$ consisting of the rows and columns corresponding to $\{x,y\}$. For each self-paired $x$, we put an (independent) modulus one entry on the corresponding $(x,x)$ matrix entry. All other matrix entries are zero.

For example, for the brickwall circuit, $M=2$, and $\ui{i,1}$ and $\ui{i,2}$ are matrices with the structures $U^L$ and $U^R$ from Figure~\ref{fig:blocks}. The associated permutations in $\fakepair(n)$, which act on $\{1,\ldots,n\}$, are $\pi_1=(1\,2)(3\,4)\cdots(n-1\,n)$ and $\pi_2=(1)(2\,3)\cdots(n-2\,n-1)(n)$. 

\begin{thm}[boson random walk]\label{thm:boson-rw}
Let $U=\prod_{i=d}^1U^{(i)}$, where each $U^{(i)}$ is independent with the same distribution $U^{(i)}\vsim\prod_{j=M}^1\ui{j}$, where $(\ui{j})_j$ are independent and distributed as described above for some fixed $\pi_1,\ldots,\pi_M\in\fakepair(n)$.
Define the symmetric doubly stochastic matrix $\spi{j}$ by $\spi{j}_{xy}:=\E|\ui{j}_{xy}|^2$, which describes a lazy walk on vertices $\{1,\ldots,n\}$ which stays at its current vertex $x$ with probability $1/2$, and moves to $\pi_j(x)$ with probability $1/2$.
Letting $Z_t$ be the discrete-time random walk on $\{1,\ldots,n\}$ with transition probability matrix $P:=\prod_{j=1}^M\spi{j}$, then
\begin{align}
\E|\langle x|U|y\rangle|^2&=\P_y[Z_d=x].
\end{align}
\end{thm}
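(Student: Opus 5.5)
The plan is to prove the identity by induction on the number of layers, peeling off one layer at a time and using only the second moments of a single Haar random $2\times2$ block and the independence of distinct blocks. Write $U=\prod_{i=d}^1\prod_{j=M}^1\ui{i,j}$ as a product of $dM$ independent random layers. For a product $A=\ui{}B$, where the layer $\ui{}$ is independent of $B$, expand
\[
\langle x|A|y\rangle=\sum_{z}\ui{}_{xz}B_{zy}.
\]
Since $\ui{}$ has at most two nonzero entries in row $x$ (columns $x$ and $\pi(x)$), the square modulus is
\[
|\langle x|A|y\rangle|^2=\sum_{z,z'}\ui{}_{xz}\overline{\ui{}_{xz'}}B_{zy}\overline{B_{z'y}}.
\]
Taking the expectation over $\ui{}$ first (conditioning on $B$) should kill the cross terms $z\ne z'$.

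The key single-layer fact to establish is
\[
\E[\ui{}_{xz}\overline{\ui{}_{xz'}}]=\delta_{zz'}\spi{}_{xz}.
\]
For a self-paired $x$ this is immediate, since there is only one term. For a Haar $2\times2$ block, the cross term $\E[u_{11}\bar u_{12}]$ vanishes by invariance under right multiplication by $\mathrm{diag}(1,e^{i\theta})$. The diagonal terms $\E|u_{11}|^2=\E|u_{12}|^2=1/2$ follow because a column of a Haar unitary is uniform on the sphere in $\C^2$, or by symmetry together with $|u_{11}|^2+|u_{12}|^2=1$. This gives $\spi{}_{xx}=\spi{}_{x\pi(x)}=1/2$, matching the stated lazy walk; it also shows $\spi{}$ is symmetric and doubly stochastic. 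Combined with independence, this yields
\[
\E|\langle x|\ui{}B|y\rangle|^2=\sum_z \spi{}_{xz}\,\E|B_{zy}|^2.
\]

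Iterating this over all $dM$ layers, starting from the leftmost (last-applied) layer $\ui{d,M}$ and ending at $B=I$ with $|I_{zy}|^2=\delta_{zy}$, gives
\[
\E|\langle x|U|y\rangle|^2=\Big[\big(\spi{M}\spi{M-1}\cdots\spi{1}\big)^d\Big]_{xy}.
\]
It remains to match this with the walk. Each $\spi{j}$ is symmetric, so
\[
\big(\spi{M}\cdots\spi{1}\big)^{T}=\spi{1}\cdots\spi{M}=P,
\]
and therefore
\[
\Big[\big(\spi{M}\cdots\spi{1}\big)^d\Big]_{xy}=(P^d)_{yx}=\P_y[Z_d=x].
\]
This is the claim. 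The only step requiring care is this bookkeeping of order and transposes, that is, checking that the walk started at $y$ passes through layers $1,\dots,M$ in the order in which $U$ applies them. The analytic input, the second-moment computation for a $2\times2$ Haar block, is routine.
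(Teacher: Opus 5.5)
Your proposal is correct and is essentially the paper's proof of Lemma~\ref{lem:prodexp}. The paper expands the full product over all intermediate indices at once, factors the expectation by independence, and uses $\E[\ui{j}_{\alpha\beta}\bui{j}_{\alpha'\beta'}]=\delta_{\alpha\alpha'}\delta_{\beta\beta'}\spi{j}_{\alpha\beta}$; you peel off one layer at a time, which is the same computation organized as an induction (needing only the same-row case of that identity), and you finish with the same transpose argument using the symmetry of the $\spi{j}$.
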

\begin{rmk}
Note that the order of the factors in $P=\prod_{j=1}^M\spi{j}$ is $\spi{1}\spi{2}\cdots\spi{M}$, in contrast to the reverse indexing in $U^{(i)}\vsim\ui{M}\cdots\ui{2}\ui{1}$. This is due to the convention that for transition probability matrices, $\langle y|P|x\rangle$ gives the probability for a random walk starting at $y$ to move to $x$, while for the unitary circuit, $U^{(i)}$ is viewed as acting as $U^{(i)}|y\rangle$ on an initial $|y\rangle$. 
\end{rmk}

\begin{rmk} 
The matrix $UU^T$ is unitary and so can also be associated with a classical random walk. But it appears hard to determine what the corresponding classical random walk should be since $\E|\langle x|UU^T|y\rangle|^2$ is difficult to compute.
\end{rmk}

The proof of Theorem~\ref{thm:boson-rw} is a quick computation using independence of all the $1\times1$ and $2\times2$ Haar-random blocks in the unitary factors making up $U^{(i)}$ and $U$. Before giving the proof, we note that
Theorem~\ref{thm:boson-rw} immediately implies the following random walk relation for the brickwall circuit.
\begin{cor}[brickwall random walk]\label{cor:rw}
Let $n\ge2$ be even.
Let $U=\prod_{i=d}^1 U^{(i)}$ be a random depth $d$ brickwall circuit on $\{1,\ldots,n\}$, with each $U^{(i)}\dsim U^RU^L$ with independent $2\times2$ and $1\times1$ Haar-random blocks as described in Figure~\ref{fig:blocks}. Then
\begin{align}
\E|\langle x|U|y\rangle|^2&=\P_y[Z_d=x],
\end{align}
where $Z_t$ is the discrete-time random walk on $\{1,\ldots,n\}$ defined as follows: Letting the boundary be $\partial_1=\{1,n\}$, 
\begin{align}\label{eqn:Zt}
\P[Z_t=x|Z_{t-1}=y]&=\begin{cases}
\frac{1}{4},&\text{ for $x\in\{y-1,y,y+1,y+2\}\cap \partial_1^c$ if $y$ is odd}\\
\frac{1}{4},&\text{ for $x\in\{y-2,y-1,y,y+1\}\cap\partial_1^c$ if $y$ is even}\\
\frac{1}{2},&\text{ if $x=1$ and $y\in\{1,2\}$, or $x=n$ and $y\in\{n-1,n\}$}
\end{cases}.
\end{align}
\end{cor}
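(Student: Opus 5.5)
This follows from Theorem~\ref{thm:boson-rw} once we specialize to the brickwall pairings and compute $P=\spi{1}\spi{2}$ explicitly. For the brickwall circuit $M=2$, with $\ui{1}$ of the form $U^L$ (pairing $\pi_1=(1\,2)(3\,4)\cdots(n-1\,n)$) and $\ui{2}$ of the form $U^R$ (pairing $\pi_2=(1)(2\,3)\cdots(n-2\,n-1)(n)$). The blocks are independent and Haar, so Theorem~\ref{thm:boson-rw} applies directly and gives $\E|\langle x|U|y\rangle|^2=\P_y[Z_d=x]$. Here $Z_t$ has transition matrix $P=\spi{1}\spi{2}$: one step first moves by the lazy walk for $\pi_1$, then by the lazy walk for $\pi_2$.

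First we identify $\spi{1}$ and $\spi{2}$. For a Haar $2\times 2$ unitary, every entry has $\E|u_{ab}|^2=1/2$. For a $1\times1$ phase, the modulus squared is $1$. Hence $\spi{1}_{xy}=1/2$ exactly when $y\in\{x,\pi_1(x)\}$. Likewise $\spi{2}_{xy}=1/2$ when $y\in\{x,\pi_2(x)\}$ and $x\notin\{1,n\}$, while $\spi{2}_{11}=\spi{2}_{nn}=1$.

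Next we compose the two lazy moves from a starting point $y$.
\begin{itemize}
\item If $y$ is odd, the $\pi_1$ move sends $y$ to $y$ or $y+1$, each with probability $1/2$. The $\pi_2$ move then sends $y$ (odd, so paired with $y-1$ unless $y=1$) to $y$ or $y-1$, and sends $y+1$ (even, so paired with $y+2$ unless $y+1=n$) to $y+1$ or $y+2$. Away from the boundary this gives mass $1/4$ on each of $y-1,y,y+1,y+2$.
\item If $y$ is even, the $\pi_1$ move gives $y-1$ or $y$. The $\pi_2$ move then gives $\{y-2,y-1\}$ from $y-1$ and $\{y,y+1\}$ from $y$, each with probability $1/4$.
\end{itemize}

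The only step needing care is the boundary bookkeeping, which accounts for the $\cap\,\partial_1^c$ and the $1/2$ cases in \eqref{eqn:Zt}.
\begin{itemize}
\item Take $y=1$. The first move gives $1$ or $2$. From $1$, the $\pi_2$ move stays at $1$ with probability $1$, so $1$ gets mass $1/2$. From $2$, we reach $2$ or $3$ with mass $1/4$ each. This matches the odd-$y$ case with $x=0$ removed and $x=1$ reported as $1/2$.
\item Take $y=2$. The first move gives $1$ or $2$, and the same computation applies, again with $x=1$ getting mass $1/2$.
\item By symmetry, $y\in\{n-1,n\}$ gives mass $1/2$ at $n$ and $1/4$ at $n-2$ and $n-1$.
\end{itemize}

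Comparing these cases with \eqref{eqn:Zt} finishes the proof. Note that the $1/4$ entries should read $x\in\{\cdots\}\setminus\partial_1$; the entries with $x\in\partial_1$ are exactly the listed $1/2$ cases.
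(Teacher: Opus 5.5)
Your proposal is correct and is exactly the paper's argument: the paper's proof is the one-line observation that \eqref{eqn:Zt} is the transition matrix $P=\spi{1}\spi{2}=\E|U^L|^2\,\E|U^R|^2$ from Theorem~\ref{thm:boson-rw}, and you carry out that composition explicitly, including the boundary cases. Two cosmetic points: your boundary bullets tacitly assume $n\ge4$ (for $n=2$, $\pi_2$ is the identity, so from $2$ the walk stays at $2$ rather than moving to $3$, which still agrees with \eqref{eqn:Zt}), and your closing remark about $\setminus\partial_1$ only restates the paper's $\cap\,\partial_1^c$ (with the complement taken in $\intbrr{1:n}$); it is not a correction.
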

\begin{proof}
Equation~\eqref{eqn:Zt} is the formula for the transition matrix $P=\spi{1}\spi{2}=\E |U^{L}|^2\E|U^{R}|^2$.
\end{proof}

The brickwall random walk in Corollary~\ref{cor:rw} is depicted in Figure~\ref{fig:brickwall_rw}.
\begin{figure}[htb]
\begin{tikzpicture}[scale=2]
% options for tree
\def\mrad{1mm} % vertex radius
\def\xcolor{blue} % x color
\def\ycolor{red} % y color
\def\bcolor{gray!60} % brick color
\def\brickmargin{.35cm} % size of bricks
\def\pathwidth{1.8} % line width for x, y paths
% tree parameters: depth, brick color, shift (w/ units), depth
\newcommand{\tree}[4]{\begin{scope}[rotate=270,scale=.25,xshift=-1cm,yshift=-1cm]
\draw (1,1)--++(-1,-1);
\draw (1,1)--++(0,-1);
% draw bricks
\foreach \height in {0,...,#4}{
\foreach \lr in {0,...,\height}{
\draw[draw=#2,fill=none,xshift=-\brickmargin,yshift=-\brickmargin+#3] (2*\lr-\height, -\height) rectangle ++({1cm+2*\brickmargin},{2*\brickmargin});
}}
% draw nodes
\draw[fill=black] (1,1) circle (\mrad);
\foreach \height in {0,...,#1}{
\foreach \lr in {0,...,\height}{
% Left step edges
\draw (2*\lr-\height,-\height)--++(-1,-1);
\draw (2*\lr-\height,-\height)--++(0,-1);
% right step edges
\draw (2*\lr-\height+1,-\height)--++(1,-1);
\draw (2*\lr-\height+1,-\height)--++(0,-1);
% draw nodes
\draw[fill=black] (2*\lr-\height,-\height) circle (\mrad);
\draw[fill=black] (2*\lr-\height+1,-\height) circle (\mrad);
}
}
\end{scope}}
\node at (1.4,1) {$\cdots$};

\begin{scope}[yshift=1cm,rotate=180] 
\tree{2}{\bcolor}{.5cm}{3}
\end{scope}
\node[left] at (0,1) {$y$};
\begin{scope}[yshift=1cm]
\draw[line width=\pathwidth,color=\ycolor] (0,0)--(.25,0)--(.5,0)--(.75,-.25)--(1,-.25);
\end{scope}
% labels
\begin{scope}[xshift=-.1cm]
\draw[decoration={brace,raise=3pt,aspect=.5,amplitude=4pt},decorate,xshift=-.36cm] (.5,1.8)--++(.45,0);
\draw[decoration={brace,raise=3pt,aspect=.5,amplitude=4pt},decorate,xshift=-.36cm] (1,1.8)--++(.45,0);
\node [above] at (.5,1.9) {$U^{(1)}$};
\node [above] at (1, 1.9) {$U^{(2)}$};
\begin{scope}[xshift=-0.15cm]
\foreach\i in {1,2,3,4}{
\ifodd\i \node[below,xshift=3mm] at (\i/4,-.1) {$U^L$};
\else \node[below,xshift=3mm] at (\i/4,-.11) {$U^R$};
\fi
}
\end{scope}
\end{scope}
\begin{scope}[xscale=-1,yshift=-.8cm]
% t axis
\def\theight{2mm}
\node[left] at (0,\theight) {$t$};
\draw [->](0,\theight)--++(-1.25,0);
\foreach \t in {0,...,4}{\draw[yshift=\theight] (-.25*\t,-.5mm)--++(0,1mm);}
\node[above] at (-.5,\theight+.5mm) {$1$};
\node[above] at (0,\theight+.5mm) {$0$};
\node[above] at (-1,\theight+.5mm) {$2$};
\end{scope}
\draw[->] (-.25,1.75)--++(0,-2) node[below] {$n$}; 
\end{tikzpicture}
\caption{A brickwall random walk $Z_t$ as defined in Corollary~\ref{cor:rw}. 
Modes are labeled from top ($1$) to bottom ($n$).}\label{fig:brickwall_rw}
\end{figure}

Theorem~\ref{thm:boson-rw} follows from the following rephrased version which makes writing the proof more convenient.
\begin{lem}[rephrased Theorem~\ref{thm:boson-rw}]\label{lem:prodexp}
Let $U=\prod_{j=K}^1\ui{j}$, where each $\ui{j}=\Ti{j}\Bi{j}(\Ti{j})^{-1}$ is an independent $n\times n$ matrix, with $\Bi{j}=\oplus_{k=1}^{p_j}\bj{j,k}$ where each $\bj{j,k}$ is a $1\times 1$ or $2\times 2$ independent Haar random unitary matrix, and $\Ti{j}$ a permutation matrix corresponding to a $\tau_j\in S_n$. Letting $\spi{j}$ be the defined as $\spi{j}_{xy}=\E|\ui{j}_{xy}|^2$, then
\begin{align}
\E|\langle x|U|y\rangle|^2&=\langle x|\spi{K}\cdots\spi{2}\spi{1}|y\rangle=\langle y|\spi{1}\spi{2}\cdots\spi{K}|x\rangle.
\end{align}
\end{lem}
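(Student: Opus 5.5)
We argue by induction on the number of factors $K$; the key mechanism is a phase-averaging (orthogonality) identity for a single random layer. Write $U_K=\ui{K}U_{K-1}$ with $U_{K-1}=\prod_{j=K-1}^1\ui{j}$. Then
\[
\langle x|U_K|y\rangle=\sum_{z}\ui{K}_{xz}\langle z|U_{K-1}|y\rangle ,
\]
so
\[
|\langle x|U_K|y\rangle|^2=\sum_{z,z'}\ui{K}_{xz}\overline{\ui{K}_{xz'}}\,\langle z|U_{K-1}|y\rangle\overline{\langle z'|U_{K-1}|y\rangle}.
\]
Because $\ui{K}$ is independent of $U_{K-1}$, we can take the expectation over $\ui{K}$ first, conditionally on $U_{K-1}$. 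It then suffices to show that $\E[\ui{K}_{xz}\overline{\ui{K}_{xz'}}]=0$ whenever $z\neq z'$. Granting this, the sum collapses to its diagonal terms and gives
\[
\E|\langle x|U_K|y\rangle|^2=\sum_z \spi{K}_{xz}\,\E|\langle z|U_{K-1}|y\rangle|^2 .
\]
The induction hypothesis then yields $\langle x|\spi{K}\cdots\spi{1}|y\rangle$. The base case $K=1$ is just the definition of $\spi{1}$. The second equality in the statement follows by transposition, since each $\spi{j}$ is symmetric: it is the entrywise modulus squared of a conjugated direct sum of $1\times1$ and $2\times2$ Haar blocks, and $\E|b_{ab}|^2=1/2$ for a Haar $2\times2$ matrix $b$.

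The vanishing of the off-diagonal second moments is the only real step. Since $\ui{K}=\Ti{K}\Bi{K}(\Ti{K})^{-1}$, its entries are a relabeling of the entries of $\Bi{K}$. In row $x$, the entries $\ui{K}_{xz}$ and $\ui{K}_{xz'}$ are both nonzero only if $z,z'$ lie in the same block. If they lie in different blocks, one of the two factors vanishes identically. Within a $1\times1$ block there is only one entry, so $z=z'$. So the only case to check is two distinct entries in the same row of a Haar $2\times2$ unitary $b$, where we need $\E[b_{11}\overline{b_{12}}]=0$. We prove this using the invariance of Haar measure under right multiplication by $\mathrm{diag}(1,e^{i\theta})$, which multiplies $b_{11}\overline{b_{12}}$ by $e^{-i\theta}$. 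Averaging over $\theta$ then forces the expectation to vanish. Alternatively, right-invariance under the swap together with unit row norm gives $\E|b_{11}|^2=\E|b_{12}|^2=1/2$, which also confirms the formula for $\spi{j}$.

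Nothing here is a real obstacle. The only care needed is bookkeeping: using independence to justify conditioning on $U_{K-1}$, which is legitimate because the entries are bounded; tracking the index conventions through the permutation conjugation; and noting that Theorem~\ref{thm:boson-rw} follows by grouping the $K=dM$ layers into steps, so that the product of the $\spi{j}$ in the stated order equals $P^d$, whose $(y,x)$ entry is $\P_y[Z_d=x]$.
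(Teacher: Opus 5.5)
Your proof is correct and is essentially the paper's argument. The paper expands $\E|\langle x|U|y\rangle|^2$ over all intermediate indices at once, factors the expectation by independence of the layers, and collapses the sum using $\E[\ui{j}_{\alpha\beta}\bui{j}_{\alpha'\beta'}]=\delta_{\alpha\alpha'}\delta_{\beta\beta'}\spi{j}_{\alpha\beta}$; you peel off one layer at a time by induction, which needs only the same-row case of that orthogonality (you justify it correctly via phase invariance of Haar measure on $\U(2)$), and you get the second equality from symmetry of the $\spi{j}$ exactly as the paper does.
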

\begin{proof}[Proof of Lemma~\ref{lem:prodexp}]
Expand
\begin{align*}
\E|\langle x|U|y\rangle|^2&=\E\sum_{\substack{i_1,\ldots,i_{K-1}=1\\i_1',\ldots,i_{K-1}'=1}}^n\ui{K}_{x {i_{K-1}}}\cdots\ui{2}_{i_2i_1}\ui{1}_{i_{1}y}\bui{K}_{xi_{K-1}'}\cdots\bui{2}_{i_2'i_1'}\bui{1}_{i_1'y}\\
&=\sum_{\substack{i_1,\ldots,i_{K-1}=1\\i_1',\ldots,i_{K-1}'=1}}^n\E[\ui{K}_{xi_{K-1}}\bui{K}_{xi_{K-1}'}]\cdots \E[\ui{2}_{i_2i_1}\bui{2}_{i_2'i_1'}]\E[\ui{1}_{i_1y}\bui{1}_{i_1'y}].\numberthis\label{eqn:uk}
\end{align*}
So we just need to calculate $\E[\ui{j}_{\alpha\beta}\bui{j}_{\alpha'\beta'}]$ for arbitrary indices $\alpha,\beta,\alpha',\beta'$.
But since the nonzero entries of $\ui{j}$ are simply distinct entries from $1\times1$ or $2\times2$ Haar random matrices, we see we must have $\alpha=\alpha'$ and $\beta=\beta'$ for the expectation $\E[\ui{j}_{\alpha\beta}\bui{j}_{\alpha'\beta'}]$ to be nonzero.
Thus $\E[\ui{j}_{\alpha\beta}\bui{j}_{\alpha'\beta'}]=\delta_{\alpha\alpha'}\delta_{\beta\beta'}\spi{j}_{\alpha\beta}$, and we obtain
\begin{align}
\E|\langle x|U|y\rangle|^2&=\sum_{i_1,\ldots,i_{K-1}=1}^n\spi{K}_{xi_{K-1}}\cdots\spi{2}_{i_2i_1}\spi{1}_{i_1y}=\langle x|\spi{K}\cdots\spi{2}\spi{1}|y\rangle.
\end{align}
This is equal to $\langle y|\spi{1}\spi{2}\cdots\spi{K}|x\rangle$ since all $\spi{j}$ are real and symmetric. 
\end{proof}

For the brickwall structure, the random walk $Z_t$ defined in \eqref{eqn:Zt} can be viewed as a reflected version of the associated random walk on $\Z$ ($n=\infty$). This will allow us to use central limit theorem or large deviation bounds for the walk on $\Z$.
\begin{lem}[reflected walk]\label{lem:reflect}
Let $S_t$ be the random walk on $\Z$ defined as 
\begin{align}\label{eqn:rwZ}
\P[S_t=x|S_{t-1}=y]&=\begin{cases}
\frac{1}{4},&\text{ for $x\in\{y-1,y,y+1,y+2\}$ if $y$ is odd}\\
\frac{1}{4},&\text{ for $x\in\{y-2,y-1,y,y+1\}$ if $y$ is even}
\end{cases}.
\end{align}
Set $[x]_{2n}:=x\mod 2n\in\intbrr{1:2n}$, and let $R:\Z\to\intbrr{1:n}$ be the reflecting map
\begin{align}\label{eqn:R}
R(x):=\begin{cases}
[x]_{2n},&1\le [x]_{2n}\le n\\
2n+1-[x]_{2n},&n+1\le [x]_{2n}\le 2n
\end{cases}.
\end{align}
Then for $n\ge 2$ even, the brickwall random walk $Z_t$ defined in \eqref{eqn:Zt} has the same law as the reflected random walk $Y_t:=R(S_t)$.
\end{lem}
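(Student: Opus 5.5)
The plan is to prove the identity at the level of one-step transition probabilities: show that $R(S_t)$ is itself a Markov chain, and that its transition kernel is exactly \eqref{eqn:Zt}. Since both chains start at the same point, this gives equality of laws. The natural tool is the standard lumpability criterion. If $S$ is Markov on $\Z$ with kernel $Q$, and for every $y\in\intbrr{1:n}$ and every $\tilde y\in R^{-1}(y)$ the quantity $\sum_{\tilde x\in R^{-1}(x)}Q(\tilde y,\tilde x)$ depends only on $(y,x)$, then $R(S_t)$ is Markov with that kernel. So the task reduces to computing these sums and checking that they match \eqref{eqn:Zt}.

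The first step is to record the symmetries of the kernel $Q$ in \eqref{eqn:rwZ}.

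\textbf{Translation by $2n$.} $Q$ is invariant under $x\mapsto x+2$, since it depends only on the parity of $y$. It is therefore invariant under translation by $2n$.

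\textbf{Reflection $\sigma(x)=2n+1-x$.} Because $2n+1$ is odd, $\sigma$ swaps parity. For odd $y$, $\sigma(y)$ is even, and the image of the odd-$y$ step set $\{y-1,y,y+1,y+2\}$ is $\{\sigma(y)+1,\sigma(y),\sigma(y)-1,\sigma(y)-2\}$, which is exactly the even-$\sigma(y)$ step set. Hence $Q(\sigma y,\sigma x)=Q(y,x)$.

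Since $n$ is even, the same argument applies to the reflection $x\mapsto 1-x$ about $1/2$ (again $1$ is odd). The group generated by these maps is the dihedral group whose orbits are precisely the fibers $R^{-1}(x)$. It follows that the lumped sum computed from any preimage $\tilde y$ equals the sum computed from the canonical preimage $y\in\intbrr{1:n}$ itself.

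The second step is to compute the lumped kernel from $y\in\intbrr{1:n}$, in three cases.

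\textbf{Interior, $2\le y\le n-1$.} All targets of $Q$ from $y$ lie in $\intbrr{1:n}$. For $y$ odd the targets are $y-1,\dots,y+2$ with $y+2\le n$, using that $n$ is even so that $y\le n-1$. For $y$ even they are $y-2,\dots,y+1$ with $y-2\ge 0$. One must check that $0$ is only hit when $y=2$, and $R(0)=1$; then $y=2$ goes to $\{0,1,2,3\}\mapsto\{1,1,2,3\}$. This requires careful comparison with \eqref{eqn:Zt}: the set $\{y-2,y-1,y,y+1\}\cap\partial_1^c$ for $y=2$ excludes $x=1$, while the separate rule gives $x=1$ probability $1/2$. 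That matches the count $0,1\mapsto 1$, which contributes $1/4+1/4$.

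\textbf{$y=1$ (odd).} The targets are $0,1,2,3\mapsto 1,1,2,3$, with probabilities $1/2,1/4,1/4$.

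\textbf{$y=n$ and $y=n-1$.} These follow symmetrically, using $R(n+1)=n$ and $R(n+2)=n-1$.

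The main obstacle is not conceptual but bookkeeping. Reading \eqref{eqn:Zt} literally, $y\in\{1,2\}$ should give $x=1$ probability $1/2$, and the remaining mass should be spread over the listed interior points with weight $1/4$ each. I will verify each boundary row sums to $1$ to confirm the interpretation.

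The final step is to conclude by induction on $t$, using the Markov property of $S$ together with the lumpability. Since $Y_0=R(S_0)=S_0=Z_0$ when the walk starts at $y\in\intbrr{1:n}$, the finite-dimensional distributions of $Y$ and $Z$ coincide.
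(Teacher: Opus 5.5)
Your proposal is correct and takes the same route as the paper, which simply notes that $R$ induces an equivalence relation on $\Z$ and checks that the lumped one-step probabilities agree with \eqref{eqn:Zt}; your translation/reflection invariance of the kernel makes explicit the strong lumpability (i.e.\ that $R(S_t)$ is Markov) that the paper leaves implicit. There is one bookkeeping slip: for $y=n-1$, which is odd, the target $y+2=n+1$ lies outside $\intbrr{1:n}$, so your ``interior'' case should be $2\le y\le n-2$ (the point $y=n-1$ is already handled by your boundary case via $R(n+1)=n$, and $R(n+2)$ is never needed), and for $n=2$ you have $R(3)=2$ rather than $3$.
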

\begin{proof}
The map $R$ defines an equivalence relation on $\Z$ via $x\sim y$ iff $R(x)=R(y)$, and one can check that the transition probabilities $\P[Y_t=x|Y_{t-1}=y]$ for $Y_t$ agree with \eqref{eqn:Zt}.
\end{proof}

\subsection{Random walk large deviations}
To apply Corollary~\ref{cor:rw}, we need to estimate the random walk probabilities $\P_y[Z_d=x]$. For this, we can apply some rough large deviation bounds.

The (non-reflected) random walk $S_t$ on $\Z$ defined in Lemma~\ref{lem:reflect} is similar to a symmetric random walk with a finite hopping distance on $\Z$. The individual steps are not independent since they depend on the parity of the current position of the walk, so the usual Central Limit Theorem (CLT) and local CLT for simple random walk do not directly apply.
The $\alpha$-mixing CLT \cite{ibragimov1971independent} does apply due to the computations below. 
One can check using similar calculations as below that for $S_d=\sum_{i=1}^dX_i$ starting at e.g. $S_0=0$, the variance is
\begin{align*}
\E[S_d^2]&=\sum_{i=1}^d\E[X_i^2]+2\sum_{i=1}^{d-1}\E[X_iX_{i+1}]=\frac{3}{2}d+0,
\end{align*}
and we get the convergence in distribution,
$\frac{1}{\sqrt{d}}S_d\to \RN(0,3/2)$, as $d\to\infty$.
However, since we need quantitative estimates on the random walk probabilities for Proposition~\ref{prop:expbound}, we will instead relate $S_t$ to random walks with independent increments so that we can use standard quantitative large deviation bounds.

If we write $S_t=y+\sum_{i=1}^tX_i$, where $X_i\in\{-2,-1,0,1,2\}$ is an individual step (corresponding to two layers of the brickwall, i.e. one $U^{(i)}$ step), then the distribution for the next step $X_{t+1}$ depends on the parity of $S_t$: if $S_t$ is odd, then $X_{t+1}\in\{-1,0,1,2\}$ with probability $1/4$ of each outcome, while if $S_t$ is even, then $X_{t+1}\in\{-2,-1,0,1\}$ with probability $1/4$ of each outcome.
However, we can see that $X_{t+1}$ is independent of $X_{t-1},X_{t-2},\ldots,X_1$. 
One can check the marginal distribution of $X_{t+1}$ for $t\ge1$ is given by the measure $\frac{1}{8}\delta_{-2}+\frac{1}{8}\delta_2+\frac{1}{4}(\delta_{-1}+\delta_0+\delta_1)$; then a quick computation shows for $t\ge1$,
\begin{align*}
\P_y[&X_{t+1}=x,X_{t-1}=x_{t-1},\ldots,X_1=x_1]\\
&=\sum_{z\in\{-2,-1,0,1,2\}}\P_y[X_{t+1}=x,X_t=z,X_{t-1}=x_{t-1},\ldots,X_1=x_1]\\
&=\begin{multlined}[t]
\sum_{z\in \{-2,-1,0,1,2\}}\P_y[X_{t+1}=x|X_t=z,X_{t-1}=x_{t-1},\ldots,X_1=x_1]\\
\qquad\qquad\qquad\times\P_y[X_t=z|X_{t-1}=x_{t-1},\ldots,X_1=x_1]\P_y[X_{t-1}=x_{t-1},\ldots,X_1=x_1]
\end{multlined}\\
&=2\left(\frac{1}{4}\oneb_{x\in\{-1,0,1,2\}}+\frac{1}{4}\oneb_{x\in\{-2,-1,0,1\}}\right)\frac{1}{4}\P_y[X_{t-1}=x_{t-1},\ldots,X_1=x_1]\\
&=\P[X_{t+1}=x]\P_y[X_{t-1}=x_{t-1},\ldots,X_1=x_1],\numberthis
\end{align*}
where to obtain the second to last line we considered the cases where $x_{t-1}+\cdots+x_1$ is even or odd. 
Moreover, this calculation also implies (e.g. recursively) that the even steps $\{X_2,X_4,X_6,\ldots\}$ are mutually independent, and the odd steps $\{X_1,X_3,X_5,X_7,\ldots\}$ are mutually independent. 
Then we can write
\begin{align}\label{eqn:zsplit}
S_d&=y+\sum_{i=1}^d X_i=y+X_1+\sum_{i=1}^{\lceil d/2\rceil-1}X_{2i+1}+\sum_{i=1}^{\lfloor d/2\rfloor}X_{2i},
\end{align}
with each sum comprised of mutually independent random variables whose marginal distributions are all given by the measure $\frac{1}{8}\delta_{-2}+\frac{1}{8}\delta_2+\frac{1}{4}(\delta_{-1}+\delta_0+\delta_1)$. The entry $X_1$ is pulled out separately since it has a different distribution, as it depends on the parity of the starting position $y$ of the walk.
We can use a simple bound on \eqref{eqn:zsplit} to obtain
\begin{align*}
\P_y[S_d=x]&=\P\left[y+X_1+\sum_{i=1}^{\lceil d/2\rceil-1}X_{2i+1}+\sum_{i=1}^{\lfloor d/2\rfloor}X_{2i}=x\right]\\
&\le \P\left[\bigg|\sum_{i=1}^{\lceil d/2\rceil-1}X_{2i+1}\bigg|\ge |x-y|/2-1\right]+\P\left[\bigg|\sum_{i=1}^{\lfloor d/2\rfloor}X_{2i}\bigg|\ge |x-y|/2-1\right],\numberthis\label{eqn:split-indep}
\end{align*}
where we used that $|X_1|\le2$ regardless of the starting point $X_0=y$.

We can then apply Hoeffding's inequality,\footnote{Alternatively, if we cared about getting a better constant $c_1$ in the exponent, we could instead use a random walk large deviation result such as \cite[Appendix Corollary A.2.7]{lawlerlimic}.}  which states that for independent random variables $X_1,\ldots,X_n$ with $a\le X_i\le b$, that
\begin{align}
\P\left[\sum_{i=1}^n(X_i-\E X_i)\ge t\right]&\le e^{-\frac{2t^2}{n(b-a)^2}}.
\end{align}
Applying this to \eqref{eqn:split-indep} shows there are numerical constants $C,c_1$ such that for any $x,y\in\Z$ and $d\in\N$,
\begin{align}\label{eqn:rwprob2}
\P_y[S_d=x]&\le Ce^{-c_1(x-y)^2/d}.
\end{align}

\subsection{Proof of Proposition~\ref{prop:expbound}}
We can now give
\begin{proof}[Proof of Proposition~\ref{prop:expbound}]
First note that the bounds are trivial if $d\ge n^2$ and $C\ge e^c$, so we may without loss of generality assume $d\le n^2$.
For the first part, we want to upper bound $\E|\langle x|UU^T|y\rangle|^2$ in terms of $\E|\langle x|U|y\rangle|^2=\P_y[Z_d=x]$.
We have
\begin{align}\label{eqn:uut-expand}
\E|\langle x|UU^T|y\rangle|^2&=\E\Big|\sum_{\alpha}\langle x|U|\alpha\rangle\langle y|U|\alpha\rangle\Big|^2
=\sum_{\alpha,\alpha'}\E U_{x\alpha}U_{y\alpha}\bar U_{x\alpha'}\bar U_{y\alpha'}.
\end{align}
Writing $U=\prod_{i=d}^1U^{(i)}=U^{(d)}\cdots U^{(1)}$ with each $U^{(i)}\dsim U^RU^L$, we can pull out the right-most $U^L$ (the one from $U^{(1)}$) to write $U=QU^L$, with $U^L$ independent from $Q$.
We see the terms in \eqref{eqn:uut-expand} with $\alpha\ne\alpha'$ are zero, since
\begin{align}
\E U_{x\alpha}U_{y\alpha}\bar U_{x\alpha'}\bar U_{y\alpha'}&=\sum_{i,j,i',j'=1}^n\E[Q_{xi}Q_{yj}\bar Q_{xi'}\bar Q_{yj'}]\E[U^L_{i\alpha}U^L_{j\alpha}\bar U^L_{i'\alpha'}\bar U^L_{j'\alpha'}],
\end{align}
and the expectation over the $U^L$ entries is zero if $\alpha\ne\alpha'$. (Recall $U^L$ simply contains independent $2\times2$ and $1\times1$ Haar unitary matrix entries, so
we need terms like $U^L_{\ell_1\alpha} \bar U^L_{\ell_1'\alpha}$ for this average to be nonzero.)
Then
\begin{align*}
\E|\langle x|UU^T|y\rangle|^2
&=\E \sum_{\alpha}|U_{x\alpha}|^2|U_{y\alpha}|^2\numberthis\label{eqn:uut-alpha}\\
&\le \sum_{\alpha}\E[|U_{x\alpha}|^4]^{1/2}\E[|U_{y\alpha}|^4]^{1/2},\numberthis
\end{align*}
applying Cauchy--Schwarz for the last line.
Using $|U_{ab}|^2\le1$ so that $|U_{ab}|^4\le|U_{ab}|^2$, the random walk relation in Corollary~\ref{cor:rw} then implies
\begin{align}\label{eqn:csbound}
\E|\langle x|UU^T|y\rangle|^2&\le \sum_\alpha \P_\alpha[Z_d=x]^{1/2}\P_\alpha[Z_d=y]^{1/2}.
\end{align}
Using the reflected walk relation in Lemma~\ref{lem:reflect}, for $S_t$ the walk on $\Z$ defined in \eqref{eqn:rwZ} and $R$ the map in \eqref{eqn:R}, we have for any $x,\alpha\in\intbrr{1:n}$,
\begin{align*}
\P_\alpha[Z_d=x] &= \P_\alpha[R(S_d)=x]\\
&=\P_\alpha[S_d=x]+\P_\alpha[S_d=-x+1]+\P_\alpha[S_d=2n+1-x]+O\Bigg(\sum_{\substack{\beta\overset{R}{\sim} x\\|\beta-\alpha|>n}} \P_\alpha[S_d=\beta]\bigg).\numberthis
\end{align*}
Using the large deviation bound \eqref{eqn:rwprob2}, we first bound the remainder terms as
\begin{align*}
\sum_{\substack{\beta\sim\alpha\\|\beta-\alpha|>n}} \P_\alpha[S_d=\beta]\le C\sum_{j=1}^\infty e^{-c_1n^2j^2/d}
&\le C\left(e^{-c_1n^2/d}+\int_1^\infty e^{-c_1n^2j^2/d}\,dj\right)\\
&\le Ce^{-c_1n^2/d}+\frac{Cd}{n^2}e^{-c_1n^2/d}\\
&\le C'e^{-c_1n^2/d}.
\end{align*}
using the inequality $\operatorname{erfc}(z)\le \frac{e^{-z^2}}{\sqrt{\pi}z}$ for $z\ge0$ and that $d\le n^2$. 
Applying \eqref{eqn:rwprob2} also to the other terms, we get
\begin{align}\label{eqn:Zdalpha}
\P_\alpha[Z_d=x]&\le Ce^{-c_1(x-\alpha)^2/d}+Ce^{-c_1(x+\alpha-1)^2/d}+Ce^{-c_1(\alpha-2n-1+x)^2/d}+O(e^{-c_1n^2/d}),
\end{align}
and similarly for $\P_\alpha[Z_d=y]$. 
Since $x,\alpha\in\intbrr{1:n}$, the largest of the four exponential terms in the bound \eqref{eqn:Zdalpha} is the one with $e^{-c(x-\alpha)^2/d}$, since $|x-\alpha|$ is the smallest of the four terms in the exponentials.
Taking a product, we obtain
\begin{align}\label{eqn:Zdprob}
\P_\alpha[Z_d=x]\P_\alpha[Z_d=y]&\le Ce^{-c_1(x-\alpha)^2/d}e^{-c_1(y-\alpha)^2/d}+O(e^{-c_1n^2/d}).
\end{align}

The sum over $\alpha$ in \eqref{eqn:csbound} has at most $4d$ nonzero terms. Using $\sqrt{a+b}\le\sqrt{a}+\sqrt{b}$ and pulling out the remainder term $O(e^{-cn^2/d})$ with the $4d$ factor, we obtain
\begin{align}
\E|\langle x|UU^T|y\rangle|^2&\le \left(\sum_{\alpha\in\Z} Ce^{-c_1(x-y)^2/(4d)}e^{-c_1\left[\alpha-(x+y)/2\right]^2/d}\right)+O(de^{-c_1n^2/(2d)}).
\end{align}
Since
\begin{align}
\sum_{\alpha\in\Z} e^{-c_1\left[\alpha-(x+y)/2\right]^2/d}&\le 2+\int_\R e^{-c_1\left[\alpha-(x+y)/2\right]^2/d}\,d\alpha = 2+\sqrt{\frac{\pi d}{c_1}},
\end{align}
we get
\begin{align}
\E|\langle x|UU^T|y\rangle|^2&\le Cd^{1/2}e^{-c_1(x-y)^2/(4d)}+O(de^{-c_1n^2/(2d)}).
\label{eqn:uut-final}
\end{align}
Since $n^2>(x-y)^2$, we obtain \eqref{eqn:uut-exp}.

For the probability bound \eqref{eqn:uut-prob}, note that
\begin{align}
|\langle x|UU^T|y\rangle|^2&\le \left(\sum_{\alpha}|U_{x\alpha}||U_{y\alpha}|\right)^2,
\end{align}
so we want to bound $|U_{x\alpha}||U_{y\alpha}|$ with high probability.
By Markov's inequality, Cauchy--Schwarz, and Corollary~\ref{cor:rw} with the large deviation bound \eqref{eqn:Zdprob}, for any $\varepsilon>0$,
\begin{align}
\P[|U_{x\alpha}||U_{y\alpha}|>\varepsilon] &\le \frac{(\E|U_{x\alpha}|^2)^{1/2}(\E|U_{y\alpha}|^2)^{1/2}}{\varepsilon}\le \frac{Ce^{-c_1(x-\alpha)^2/(2d)}e^{-c_1(y-\alpha)^2/(2d)}+O(e^{-c_1n^2/(2d)})}{\varepsilon}.
\end{align}
We will take $\varepsilon=\frac{1}{4}e^{-c_1(x-y)^2/(8d)}$ to obtain for any $x,y\in\intbrr{1:n}$, 
\begin{align}
\P\left[|U_{x\alpha}||U_{y\alpha}|>\frac{1}{4}e^{-c_1(x-y)^2/(8d)}\right] &\le Ce^{-c_1(x-y)^2/(8d)}e^{-c_1\left(\alpha-\frac{x+y}{2}\right)^2/d}+O(e^{-3c_1n^2/(8d)}).
\end{align}
A union bound over at most $4d$ indices $\alpha$ where $|U_{x\alpha}|$ can be nonzero shows that
\begin{align*}
\P\left[\sum_{\alpha}|U_{x\alpha}||U_{y\alpha}|>de^{-c_1(x-y)^2/(8d)}\right] &\le \sum_\alpha\left[Ce^{-c_1(x-y)^2/(8d)}e^{-c_1\left(\alpha-\frac{x+y}{2}\right)^2/d}+O(e^{-3c_1n^2/(8d)})\right] \\
&\le \sum_\alpha C e^{-c_1(x-y)^2/(8d)}\left(2+\int_\R e^{-c_1\alpha^2/d}\,d\alpha\right)+O(de^{-3c_1n^2/(8d)})\\
&\le Cd^{1/2}e^{-c_1(x-y)^2/(8d)}+O(de^{-3c_1n^2/(8d)}).\numberthis
\end{align*}
Thus we obtain
\begin{align}
\P[|\langle x|UU^T|y\rangle|^2>d^2e^{-c_1(x-y)^2/(4d)}] &\le Cd^{1/2}e^{-c_1(x-y)^2/(8d)}+O(de^{-3c_1n^2/(8d)}),
\end{align}
for any $x,y\in\intbrr{1:n}$. Using that $n^2>(x-y)^2$ gives \eqref{eqn:uut-prob}.
\end{proof}

\subsection{Higher-dimensional brickwork circuit}\label{subsec:higherdim}

Corollary~\ref{cor:upperbound2}(ii) follows from the above 1D bounds as follows.
Let $U=\prod_{i=d}^1U^{(i)}$ be the depth $d$ $D$-dimensional brickwork circuit as defined in Remark~\ref{rmk:worst}, with Haar random independent $1\times1$ and $2\times2$ blocks within the $2D$ layers of each $U^{(i)}$.
Index the $n=m^D$ modes as $\{1,\ldots,m\}^D$.
Starting with \eqref{eqn:csbound} and using independence of the walks in each dimension, we have for $x,y\in\{1,\ldots,m\}^D$,
\begin{align*}
\E|\langle x|UU^T|y\rangle|^2&\le \sum_{\alpha\in\{1,\ldots,m\}^D}\P_\alpha[Z_d=x]^{1/2}\P_\alpha[Z_d=y]^{1/2}\\
&= \sum_{\alpha\in\{1,\ldots,m\}^D}\prod_{i=1}^D\P_{\alpha_i}[(Z_d)_i=x_i]^{1/2}\P_{\alpha_i}[(Z_d)_i=y_i]^{1/2}\\
&\le (4d)^D\prod_{i=1}^D\left(Ce^{-c(x_i-y_i)^2/d}+O(e^{-cm^2/d})\right)\qquad\text{[using the 1D bound \eqref{eqn:Zdprob}, and limited $\alpha$]}\\
&\le (Cd)^De^{-c\|x-y\|_2^2/d}.\qquad\text{[using $m^2>(x_i-y_i)^2$]}\numberthis
\end{align*}
Additionally, $|\langle x|UU^T|y\rangle|^2=0$ if $\|x-y\|_\infty>4d$, as observed in Remark~\ref{rmk:worst}.
Then similar to the 1D brickwall argument, for any $\gamma\in\N$,
\begin{align*}
\E\Tr W=\E\|V\|_\hs^2&=|\Gamma|-\sum_{x\in\Gamma}\sum_{y\not\in\Gamma}\E|\langle x|UU^T|y\rangle|^2\\
&\ge|\Gamma|-\sum_{\substack{x\in\Gamma\\\dist_{\ell^\infty}(x,\Gamma^c)\le\gamma}}1-\sum_{\substack{x\in\Gamma,y\not\in\Gamma\\\dist_{\ell^\infty}(x,\Gamma^c)>\gamma\\\|x-y\|_\infty\le 4d}}C^Dd^De^{-c\|x-y\|_2^2/d}\\
&\ge|\Gamma|-\gamma A_\Gamma-O(A_\Gamma d^{D+1})C^Dd^De^{-c\gamma^2/d},\numberthis
\end{align*}
where in the last line we used the type of estimate of \eqref{eqn:Dbdy} twice, 
along with noting that for any given $x$ there are $O(d^D)$ modes $y$ satisfying $\|x-y\|_\infty\le 8d$.
Then taking $\gamma=\sqrt{\frac{d}{c}\log(C^{1/2}d)(2D+\frac{1}{2})}$ and applying the same argument as in Section~\ref{sec:avgcase} for the 1D brickwall circuit, we obtain for the depth $d$ $D$-dimensional brickwork circuit,
\begin{align}
S_2(U)&\le O(\sqrt{Dd\log d})A_\Gamma \sinh^2(2s).
\end{align}

\section{Lower bounds beyond the classical mixing and meeting times}\label{sec:lower}

In this section, we prove the lower bounds in Theorems~\ref{thm:lower} and \ref{thm:gen}.

The relaxation time for a lazy 1D reflected simple random walk (SRW) on $\{1,\ldots,n\}$ is $\Theta(n^2)$ \cite[\S12]{levin2009book}. This implies that by time $O(n^2\log n)$, the distribution of the random walk is very close to the uniform measure on vertices. The brickwall walk $Z_t$ of Corollary~\ref{cor:rw} behaves similarly as the SRW by Corollary~\ref{cor:mixsrw} below, so that by depth $d=\Omega(n^2\log n)$, the probabilities $\E|\langle x|U|y\rangle|^2=\P_y[Z_d=x]$ are all around the same size $1/n$. To obtain a lower bound on the expected Renyi-2 entropy $\E S_2(U)$, we however need to lower bound $\E|\langle x|UU^T|y\rangle|^2$, which requires lower bounds on the fourth moments $\E[|U_{x\alpha}|^2|U_{y\alpha}|^2]$ by \eqref{eqn:uut-alpha}. To relate this to the entries $\E|U_{ij}|^2$, we will prove an entrywise decoupling property for depths larger than the classical walk mixing time plus meeting time (Lemma~\ref{lem:dec}). This will be the main technical lemma of this section.
We first review classical walk preliminaries and examples in Section~\ref{subsec:mixmeet}, and then state the main results Lemma~\ref{lem:dec}, Theorem~\ref{thm:lowerbound}, and Corollary~\ref{cor:haar} in Section~\ref{subsec:lower-main}. The remaining Sections~\ref{subsec:decoupling} and  \ref{subsec:proof-lower} contain the proofs of the lemma and theorem respectively.

\subsection{Classical walk preliminaries}\label{subsec:mixmeet}
In this section we will consider general circuit geometries and their associated classical random walks as defined in Section~\ref{subsec:bosonrw}.
Without loss of generality we will assume the circuit geometry is connected in the sense that the associated classical random walk of Theorem~\ref{thm:boson-rw} can move from any mode $x$ to any other mode $y$ in some number of steps (which may depend on $(x,y)$). 
Then the classical walk is irreducible and aperiodic (the latter immediate from the nonzero transition probability of staying at a single vertex), and so the walk distribution will converge to a unique stationary distribution, which in this case is $\phi=(1/n,\ldots,1/n)$ since all the transition matrices are doubly stochastic. 

Because a single time-step of the circuit involves applying $M$ layers of gates in a particular order (as in Figure \ref{fig:generalstep}), the classical walk of Theorem~\ref{thm:boson-rw} is in general not reversible.
We start with a well-known result for bounding the mixing time of a non-reversible walk in terms of one of its symmetrizations. 
Recall (see e.g.~\cite{levin2009book,aldous-fill-2014}) that the definition of mixing time for a finite state irreducible, aperiodic Markov chain is
\begin{align}\label{eqn:tmix}
\tmix(\varepsilon):=\min\{t:\max_y\|P^t(y,\cdot)-\phi\|_\TV\le\varepsilon\},
\end{align}
where $\phi$ denotes the stationary distribution, $P^t(x,\cdot)$ the distribution at time $t$ started from site $y$, and $\|\cdot\|_\TV$ the total variation distance, which for discrete distributions can be given as $\|\mu-\nu\|_\TV=\frac{1}{2}\sum_{x}|\mu(x)-\nu(x)|$.
A reversible Markov chain is one which satisfies $\phi(x)P(x,y)=\phi(y)P(y,x)$ for $\phi$ the stationary distribution.
For reversible ergodic chains, the rate of convergence to stationarity is governed by the spectral gap, which is the smallest nonzero eigenvalue of $I-P$. 
For (lazy) non-reversible walks, the rate of convergence can be bounded in terms of the spectral gap of its additive symmetrization:
\begin{thm}[see e.g.~\cite{chung2005laplacians, fill1991eigenvalue}]\label{thm:nonrev}
Let $P$ be an irreducible transition matrix for a random walk on a graph $G$, with holding probabilities ($=P_{xx}$) at least $\alpha>0$ for each site. For convenience assume $P$ is doubly stochastic.
Let $0=\lambda_0\le\lambda_1\le\cdots\le \lambda_{n-1}$ be the eigenvalues of $I-\frac{P+P^T}{2}$, and let $\phi=(1/n,\ldots,1/n)$ denote the stationary distribution.
Then
\begin{align}
\max_y\|P^t(y,\cdot)-\phi\|_\TV&\le \sqrt{n}(1-2\alpha\lambda_1)^{t/2}.
\end{align}
In particular, $\tmix(\varepsilon)\le\frac{|\log\varepsilon|+\frac{1}{2}\log n}{\alpha\lambda_1}$.
\end{thm}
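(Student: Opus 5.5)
The plan is the standard $L^2$ contraction argument applied to the additive symmetrization. First I will reduce total variation to an $L^2$ quantity. Write $\mu_t:=P^t(y,\cdot)$ as a row vector and let $f_t:=\mu_t-\phi$. Since $P$ is doubly stochastic, $\phi P=\phi$, so $f_{t}=f_{t-1}P$ and $\sum_x f_t(x)=0$. Cauchy--Schwarz gives
\[
\|\mu_t-\phi\|_\TV=\tfrac12\|f_t\|_1\le \tfrac{\sqrt n}{2}\|f_t\|_2 .
\]
Moreover $\|f_0\|_2^2=\|\delta_y-\phi\|_2^2=1-1/n\le1$. It therefore suffices to show a one-step contraction $\|fP\|_2^2\le(1-2\alpha\lambda_1)\|f\|_2^2$ for every $f\perp\mathbf 1$; iterating it yields $\|f_t\|_2\le(1-2\alpha\lambda_1)^{t/2}$.

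For the contraction I will use the Dirichlet form identity
\[
\|f\|_2^2-\|fP\|_2^2=\langle f,(I-PP^T)f\rangle .
\]
The matrix $PP^T$ is symmetric and doubly stochastic, so this quantity equals the Dirichlet form $\mathcal E_{PP^T}(f)=\tfrac12\sum_{x,z}(PP^T)_{xz}(f(x)-f(z))^2$. The key comparison uses the holding probabilities: since $P_{xx},P_{zz}\ge\alpha$,
\[
(PP^T)_{xz}=\sum_w P_{xw}P_{zw}\ge P_{xz}P_{zz}+P_{xx}P_{zx}\ge \alpha(P_{xz}+P_{zx})\quad (x\neq z).
\]
Consequently
\[
\mathcal E_{PP^T}(f)\ge 2\alpha\,\mathcal E_{(P+P^T)/2}(f)=2\alpha\langle f,(I-\tfrac{P+P^T}{2})f\rangle\ge 2\alpha\lambda_1\|f\|_2^2
\]
for $f\perp\mathbf 1$, by the variational characterization of $\lambda_1$ for the symmetric matrix $I-\frac{P+P^T}{2}$, whose kernel is spanned by $\mathbf 1$ because $P$ is irreducible. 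This gives the contraction, and with the first step it gives the stated bound $\sqrt n(1-2\alpha\lambda_1)^{t/2}$; the factor $\tfrac12$ can simply be dropped.

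For the mixing time statement, I will use $(1-2\alpha\lambda_1)^{t/2}\le e^{-\alpha\lambda_1 t}$. The bound is then at most $\varepsilon$ once $t\ge(\log(1/\varepsilon)+\tfrac12\log n)/(\alpha\lambda_1)$, which is the claimed estimate for $\tmix(\varepsilon)$.

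I expect the main obstacle to be the comparison step between $PP^T$ and the symmetrization. It requires writing both Dirichlet forms as sums over pairs $x\ne z$ with nonnegative weights and making sure the diagonal terms drop out. It also requires confirming that $2\alpha\lambda_1\le1$, so that the base of the power is nonnegative. That last point follows because $\|fP\|_2^2\ge0$ forces the contraction factor to lie in $[0,1]$.
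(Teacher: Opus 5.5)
Your proof is correct, and its skeleton is the same as the paper's (which follows Chung). You bound total variation by $\tfrac{\sqrt n}{2}\|f_t\|_2$ via Cauchy--Schwarz, prove the one-step contraction $\|fP\|_2^2\le(1-2\alpha\lambda_1)\|f\|_2^2$ for $f\perp\mathbf 1$, and iterate from $\|\delta_y-\phi\|_2\le 1$.

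The genuine difference is in how you obtain the key inequality $\|f\|_2^2-\|fP\|_2^2\ge 2\alpha\langle f,(I-\tfrac{P+P^T}{2})f\rangle$. The paper writes $P=\alpha I+(1-\alpha)Q$ with $Q$ doubly stochastic and expands $PP^T$. This amounts to the operator identity $I-PP^T=(1-\alpha)^2(I-QQ^T)+2\alpha\bigl(I-\tfrac{P+P^T}{2}\bigr)$, after which the first term is discarded using $\|Q\|\le 1$. You instead compare the two Dirichlet forms pair by pair, through $(PP^T)_{xz}\ge P_{xz}P_{zz}+P_{xx}P_{zx}\ge\alpha(P_{xz}+P_{zx})$ for $x\ne z$.

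The paper's route is a short algebraic identity that shows exactly what is thrown away, namely a positive semidefinite term. Yours is purely combinatorial and needs no operator-norm bound on $Q$. It only uses the holding probabilities at the two endpoints of each pair, so it adapts directly to site-dependent holding probabilities. You also check explicitly that $2\alpha\lambda_1\le 1$, which the paper leaves implicit when it passes to $e^{-t\alpha\lambda_1}$.
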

\cite[Corollary 2.9]{fill1991eigenvalue} and \cite[Theorem 7.3]{chung2005laplacians} state convergence assuming the chain has holding probabilities at least $1/2$ for each site, and not assuming $P$ is doubly stochastic. For completeness, we write a quick proof of the version above with $\alpha>0$ in Appendix~\ref{appendix:sym}, following the proof of \cite[Theorems 7.2, 7.3]{chung2005laplacians}.

The classical walks generated by Theorem~\ref{thm:boson-rw} can be viewed as random walks on directed graphs. We would like to relate the mixing times of those walks to the spectral gap of simple random walk (SRW) on simpler non-directed graphs $G_n$. 
\begin{cor}\label{cor:mixsrw}
Let $G_n=(V_n,E_n)$ be a sequence of (non-directed) connected graphs with $|V_n|=n$. Suppose there is $M>0$ so that for every $n$, there exists a sequence of permutations $\pi_{n1},\ldots,\pi_{nM}\in \fakepair(n)$ such that for $x\ne y$, $(x,y)\in E_n$ iff there is some $j\in\intbrr{1:M}$ such that $\pi_{nj}(x)=y$.
Consider the random walk on a (directed) graph $G_n'=(V_n,E_n')$ whose transition probabilities are given by the doubly stochastic matrix $P=P(\pi_{n1})\cdots P(\pi_{nM})$, for $P(\pi)_{xy}:=\frac{1}{2}\delta_{x,y}+\frac{1}{2}\delta_{\pi(x),y}$. 
Then letting $\phi=(1/n,\ldots,1/n)$ be the stationary distribution for $P$, and $\Delta_n$ the spectral gap of the SRW walk on $G_n$, 
\begin{align}
\max_y\|P^t(y,\cdot)-\phi\|_\TV&\le \sqrt{n}(1-2^{-2M+1}\Delta_n)^{t/2},
\end{align}
and so
\begin{align}
\tmix(\varepsilon)&\le \frac{2^{2M}\log\left(\frac{\sqrt{n}}{\varepsilon}\right)}{\Delta_n}.
\end{align}
\end{cor}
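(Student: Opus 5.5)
The plan is to apply Theorem~\ref{thm:nonrev} to $P=P(\pi_{n1})\cdots P(\pi_{nM})$ with holding probability $\alpha=2^{-M}$. This requires comparing the spectral gap $\lambda_1$ of the additive symmetrization $I-\frac{P+P^T}{2}$ with the gap $\Delta_n$ of SRW on $G_n$.

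First, the holding probability. Each factor $P(\pi_{nj})$ has diagonal entries at least $1/2$, and all entries are nonnegative. Hence
\[
P_{xx}\ge\prod_j P(\pi_{nj})_{xx}\ge 2^{-M},
\]
so $\alpha=2^{-M}$. Doubly stochasticity is immediate, and irreducibility follows from connectivity of $G_n$ together with the next step.

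Second, the comparison of Dirichlet forms; this is the main step. For an edge $(x,y)\in E_n$, pick $j$ with $\pi_{nj}(x)=y$. The walk $P$ can move from $x$ to $y$ by holding in every factor except the $j$th, where it uses the swap. This path has probability $2^{-M}$, so $P_{xy}\ge 2^{-M}$, and the same holds for $P_{yx}$. Consequently the symmetrization $Q=\frac{P+P^T}{2}$ satisfies $Q_{xy}\ge 2^{-M}$ for every edge of $G_n$. Since the stationary distribution is uniform for both chains, I would compare Dirichlet forms. For $Q$,
\[
\mathcal E_Q(f)=\frac{1}{2n}\sum_{x,y}Q_{xy}(f(x)-f(y))^2\ge \frac{2^{-M}}{n}\sum_{(x,y)\in E_n}(f(x)-f(y))^2 .
\]
SRW on $G_n$ has stationary distribution $\deg(x)/(2|E_n|)$ and Dirichlet form $\frac{1}{2|E_n|}\sum_{(x,y)\in E_n}(f(x)-f(y))^2$. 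Here the degrees are bounded by $M$, since each $\pi_{nj}$ contributes at most one neighbor. The variances are therefore comparable up to factors of $M$: $\Var_{\deg}(f)\le\frac{Mn}{2|E_n|}\Var_\phi(f)$ after minimizing over constants, and $2|E_n|\le Mn$. Using the variational characterizations
\[
\lambda_1=\min\frac{\mathcal E_Q}{\Var_\phi},\qquad \Delta_n=\min\frac{\mathcal E_{\srw}}{\Var_{\deg}},
\]
this should give $\lambda_1\ge c_M 2^{-M}\Delta_n$. The delicate part is keeping the constant clean enough to reach the stated $2^{-2M+1}$, i.e.\ $2\alpha\lambda_1\ge 2^{-2M+1}\Delta_n$, which requires $\lambda_1\ge 2^{-M}\Delta_n$. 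I expect the degree-normalization factors to be absorbable by the crude bound $\deg\le M\le 2^{M-1}$, possibly after slightly loosening the constant.

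Finally, plug into Theorem~\ref{thm:nonrev}:
\[
\max_y\|P^t(y,\cdot)-\phi\|_\TV\le\sqrt n\,(1-2^{-2M+1}\Delta_n)^{t/2}.
\]
To get the mixing time, use $(1-u)^{t/2}\le e^{-ut/2}$ and solve $\sqrt n\, e^{-2^{-2M}\Delta_n t}\le\varepsilon$, which gives $t\ge 2^{2M}\log(\sqrt n/\varepsilon)/\Delta_n$.
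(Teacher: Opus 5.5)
Your strategy is the paper's: apply Theorem~\ref{thm:nonrev} with $\alpha=2^{-M}$, use $P_{xy},P_{yx}\ge 2^{-M}$ on edges of $G_n$, and compare with SRW using one-edge paths. The paper does this via the path comparison theorem of Levin--Peres--Wilmer, with congestion ratio $B\le 2^Mn/(2|E_n|)$. The gap is in the step you flagged as delicate. To bound $\lambda_1=\min_f \mathcal E_Q(f)/\Var_\phi(f)$ from below, you need a \emph{lower} bound on $\Var_{\deg}(f)$ in terms of $\Var_\phi(f)$. The inequality you wrote, $\Var_{\deg}(f)\le\frac{Mn}{2|E_n|}\Var_\phi(f)$, goes the other way. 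Combined with $\mathcal E_{\srw}\ge\Delta_n\Var_{\deg}$, it yields no lower bound on $\mathcal E_Q/\Var_\phi$. For the same reason, $\deg\le M$ and $2|E_n|\le Mn$ are upper bounds that play no role here. Also, ``slightly loosening the constant'' would not prove the statement, which asserts exactly $2^{-2M+1}$, i.e.\ $\lambda_1\ge 2^{-M}\Delta_n$.

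In fact no loss occurs at all. Let $m$ be the $\deg$-weighted mean of $f$, and use that $\Var_\phi(f)$ minimizes $\frac1n\sum_x(f(x)-c)^2$ over $c$. Then
\begin{align*}
\mathcal E_Q(f)&\ge\frac{2^{-M}}{n}\sum_{\{x,y\}\in E_n}(f(x)-f(y))^2=\frac{2^{-M}\,2|E_n|}{n}\,\mathcal E_{\srw}(f)\ge\frac{2^{-M}\,2|E_n|}{n}\,\Delta_n\Var_{\deg}(f),\\
\Var_{\deg}(f)&=\sum_x\frac{\deg(x)}{2|E_n|}\bigl(f(x)-m\bigr)^2\ge\frac{n\min_x\deg(x)}{2|E_n|}\cdot\frac1n\sum_x\bigl(f(x)-m\bigr)^2\ge\frac{n\min_x\deg(x)}{2|E_n|}\Var_\phi(f).
\end{align*}
The $|E_n|$ factors cancel exactly, leaving $\mathcal E_Q(f)\ge 2^{-M}\min_x\deg(x)\,\Delta_n\Var_\phi(f)$. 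Connectivity with $n\ge2$ gives $\min_x\deg(x)\ge1$, so $\lambda_1\ge 2^{-M}\Delta_n$ and $2\alpha\lambda_1\ge 2^{-2M+1}\Delta_n$, as required. This is the paper's computation: $B\le 2^Mn/(2|E_n|)$ times the stationary-ratio factor $\max_x\phi(x)/\phi_{\srw}(x)=2|E_n|/(n\min_x\deg(x))$. The rest of your argument is fine: the holding probability, the edge lower bounds, and the final mixing-time algebra.
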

\begin{proof}
Using Theorem~\ref{thm:nonrev}, it will suffice to compare the spectral gap $\lambda_1$ of the symmetrized walk with transition matrix $P'=\frac{P+P^T}{2}$, to the spectral gap $\Delta_n$ of the SRW on $G_n$, whose transition probability matrix we will denote by $P_{\mathrm{srw}}$ and is given by $P_{\mathrm{srw}}(u,v)=1/\deg(u)$ if $(u,v)$ is an edge and $0$ otherwise. To compare the spectral gaps, we can use the path method for reversible transition matrices \cite{diaconis1993comparison}, see e.g.~\cite[Theorem 13.20]{levin2009book}. 
Note that for an edge $(x,y)\in G_n$, by construction $P_{xy}$ and $P^T_{xy}$ are both at least $2^{-M}$ since the transposition such that $\pi_{nj}(x)=y$ shows up in each length $M$ sequence, and one can simply hold at a vertex for all other steps in the sequence. 
Using that the stationary distribution of the SRW is $\phi_\srw(x)=\deg(x)/(2|E|)$, we can check the \emph{congestion ratio} $B$ defined in \cite[Theorem 13.20]{levin2009book} for the simple $E'$-path choice $\Gamma_{xy}=(x,y)$ for $(x,y)\in E_\srw$ satisfies $B\le 2^Mn/(2|E_\srw|)$.
Then \cite[Theorem 13.20]{levin2009book} gives
\begin{align}
\lambda_1&\ge 2^{-M}\Delta_n.
\end{align}
Applying this to the conclusion of Theorem~\ref{thm:nonrev} for holding probabilities $\ge \alpha=2^{-M}$ gives the corollary.
\end{proof}

For a sequence of graphs $G_n$ in Corollary~\ref{cor:mixsrw} with uniformly bounded degree, i.e.~$\sup_n\max_{v\in V_n}\deg_{G_n}(v)\le D<\infty$, the constant $M$ can be taken to be at most $D+1$ due to Vizing's theorem on edge colorings of graphs (see e.g.~\cite[Theorem 5.3.2]{diestel2005book}).

\begin{ex}\label{ex:bmix}
In terms of the set-up described in Corollary~\ref{cor:mixsrw}, the non-directed graph $G_n$ with vertices $V_n=\{1,\ldots,n\}$ and edges between adjacent sites can be associated with the classical walk associated with the 1D brickwall circuit. The SRW on $G_n$ is the standard reflected SRW on a line segment, for which one can explicitly calculate the spectral gap, which is $\Delta_n=1-\cos(\frac{\pi}{n-1})=\frac{\pi^2}{n^2}(1+o(1))$ \cite[Example 12.10]{levin2009book}. From Corollary~\ref{cor:mixsrw} we then immediately obtain a bound on the mixing time $\tmix(\varepsilon)$ for the classical brickwall random walk of Corollary~\ref{cor:rw},
\begin{align}
\tmix(\varepsilon)&\le Cn^2\log\left(\frac{\sqrt{n}}{\varepsilon}\right).
\end{align}
We will typically want $\varepsilon=o(1/n)$, e.g.~$\varepsilon=1/n^\alpha$ for some $\alpha>1$, for which the above bound is $O(n^2\log n)$.
\end{ex}

We will also make use of the following standard lemma on Markov chain convergence.
\begin{lem}[see e.g.~{\cite[Exercise 4.2]{levin2009book}}]\label{lem:aftermixing}
Let $P$ be a Markov transition matrix on the state space $X$, and let $\mu$ and $\nu$ any two distributions on $X$. Then
\begin{align}
\|\mu P-\nu P\|_\TV&\le \|\mu-\nu\|_\TV.
\end{align}
\end{lem}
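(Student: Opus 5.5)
The plan is to write $\mu P$ and $\nu P$ entrywise and bound the $\ell^1$ norm of their difference by the triangle inequality, using only that the rows of $P$ are probability vectors. Set $\delta(x):=\mu(x)-\nu(x)$, a signed measure on $X$ with total mass zero. Then for each $y\in X$,
\begin{align*}
(\mu P-\nu P)(y)=\sum_{x\in X}\delta(x)P(x,y).
\end{align*}

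Taking absolute values, summing over $y$, and using $P(x,y)\ge0$ together with $\sum_yP(x,y)=1$, we get
\begin{align*}
2\|\mu P-\nu P\|_\TV=\sum_y\Big|\sum_x\delta(x)P(x,y)\Big|\le\sum_x|\delta(x)|\sum_yP(x,y)=\sum_x|\delta(x)|=2\|\mu-\nu\|_\TV.
\end{align*}
Dividing by $2$ gives the claimed contraction. If $X$ is countably infinite rather than finite, the interchange of the two sums is justified by Tonelli, since every term $|\delta(x)|P(x,y)$ is nonnegative.

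An alternative route is the coupling characterization $\|\mu-\nu\|_\TV=\inf\P[A\neq B]$ over couplings $(A,B)$ of $\mu$ and $\nu$. Take an optimal coupling, and run one step of the chain from $A$ and from $B$, using the same transition when $A=B$ and independent transitions otherwise. This produces a coupling of $(\mu P,\nu P)$ whose disagreement probability is at most $\P[A\ne B]$.

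There is no real obstacle here. The only point needing care is to apply the triangle inequality before summing over $y$, and to use row-stochasticity rather than double stochasticity; the latter is not assumed, so the lemma holds for general Markov chains, which is what is needed later.
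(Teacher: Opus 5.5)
Your proof is correct: the triangle inequality before summing over $y$, followed by $\sum_y P(x,y)=1$ (with Tonelli justifying the interchange in the countable case), gives the contraction directly, and your coupling argument is also a valid alternative. The paper gives no proof of its own and simply cites Levin--Peres--Wilmer, Exercise~4.2, whose standard solution is exactly your first computation, so you are taking essentially the same approach.
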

In particular, since we always work with doubly stochastic matrices which all have the same stationary distribution $\phi=(1/n,\ldots,1/n)$, taking $\nu=\phi$ implies that advancing the chain by any doubly stochastic transition matrix can only decrease the distance to $\phi$.

In addition to classical mixing times, we will need to consider classical meeting times. 
Recall from Theorem~\ref{thm:boson-rw} that we consider classical random walks $Z_t$ on $\{1,\ldots,n\}$ whose transition probability matrix for a single timestep $\delta t=1$ is given by $P=\prod_{j=1}^M\spi{j}$, where each $\spi{j}$ describes a lazy walk corresponding to a permutation $\pi_j\in\fakepair(n)$.
For meeting times, we will need to consider the expanded classical walk where $P=\prod_{j=1}^M\spi{j}$ is broken up into its $M$ component steps, i.e.~each individual $\spi{j}$ is considered as a time-step.
We will index these time-steps by fractional depths $t\in\frac{1}{M}\N$, where $t=\frac{K}{M}$ for $K\in\N$ corresponds to $K$ applications of $\spi{j}$ matrices (Figure~\ref{fig:meet}).
This way, $t\in\N$ still corresponds to $t$ big steps $P^t$. 

Now let $Z_t,Z_t'$ be two such independent, identically distributed walks indexed by $t\in\frac{1}{M}\N$. 
Each walk evolves in time-steps $\frac{1}{M}$ according to a sequence of transition probability matrices describing lazy walk steps according to the sequence $\pi_1,\ldots,\pi_M\in\fakepair(n)$ repeating. Starting the walks at $Z_0=x$ and $Z_0'=y$, let $T(Z_t,Z_t')\in\frac{1}{M}\N$ be the random variable indicating the first time that the walks ``$\pi$-meet'', which we define as the first time $T=T(Z_t,Z_t')\in\frac{1}{M}\N_{>0}$ such that $\{Z_T,Z_T'\}\in\pii{TM}$ (recall $\pii{TM}$ is a partition of $\{1,\ldots,n\}$), with the index on $\pii{TM}$ taken modulo $M$.
Equivalently, $T=T(Z_t,Z_t')$ is the first time such that $\{Z_T,Z_T'\}$ have just passed through the same beamsplitter/phaseshifter. 
We define the high-probability $\pi$-meeting time $\tmeet(\varepsilon)$ as
\begin{align}\label{eqn:tmeet}
\tmeet(\varepsilon):=\min\{s\ge0:\P_{xy}[T(Z_t,Z_t')>s]\le\varepsilon\,\,\,\forall x,y\},
\end{align}
where the subscript $xy$ in $\P_{xy}$ indicates that the walks start at $Z_0=x$ and $Z_0'=y$, 
and where $t,s\in\frac{1}{M}\N$.
Note that, if the circuit geometry were not connected, then $\tmeet(\varepsilon)=\infty$.

\begin{figure}[htb]
\begin{tikzpicture}[scale=.6]
\def\pdepth{4} % depth (actually depth-1)
\def\bdepth{5} % brick layer depth
\def\xpath{0,1,2,3,3,3} % x path coordinates
\def\ypath{-1,-1,-1,-2,-2,-1} % y path coordinates
% other options
\def\mrad{.8mm} % vertex radius
\def\xcolor{blue} % x color
\def\ycolor{red} % y color
\def\bcolor{gray!60} % brick color
\def\brickmargin{.35cm} % size of bricks
\def\timeshift{3.5} % amount to shift time axis to the left
\def\pathwidth{1.8} % line width for x, y paths
\def\yshift{6} % shift for y starting point
% define tree
\def\tree{\begin{scope}[xshift=-1cm,yshift=-1cm]
\foreach \height in {0,...,\bdepth}{
\foreach \lr in {0,...,\height}{
% draw bricks
\draw[\bcolor,fill=none,xshift=-\brickmargin,yshift=-\brickmargin+0.5cm] (2*\lr-\height, -\height) rectangle ++({1cm+2*\brickmargin},{2*\brickmargin});
}}
\foreach \height in {0,...,\pdepth}{
\foreach \lr in {0,...,\height}{
% Left step edges
\draw (2*\lr-\height,-\height)--++(-1,-1);
\draw (2*\lr-\height,-\height)--++(0,-1);
% right step edges
\draw (2*\lr-\height+1,-\height)--++(1,-1);
\draw (2*\lr-\height+1,-\height)--++(0,-1);
% draw nodes
\draw[fill=black] (2*\lr-\height,-\height) circle (\mrad);
\draw[fill=black] (2*\lr-\height+1,-\height) circle (\mrad);
}
}
\end{scope}}
%
% draw x tree
\draw[] (0,0)--++(-1,-1);
\draw[] (0,0)--++(0,-1);
\draw[\xcolor,fill=\xcolor] (0,0) circle (\mrad);
\node[above,\xcolor] at (0,0) {$x$};
\tree
% draw y tree
\begin{scope}[xshift={\yshift cm}]
\draw[] (0,0)--++(-1,-1);
\draw[] (0,0)--++(0,-1);
\draw[\ycolor,fill=\ycolor] (0,0) circle (\mrad);
\node[above,\ycolor] at (0,0) {$y$};
\tree
\end{scope}
% draw time axes on left
\tikzmath{\tdepth = (\pdepth+1)/2;} % height of axis
\tikzmath{\tshift= -\pdepth-\timeshift+.75;} % left shift
\draw[->] (\tshift,0) --++ (0,-\pdepth-2);
\node[above] at (\tshift,0) {$t$};
% draw iteration depth on right
\draw[->] (-\tshift+\yshift,0) --++ (0,-\pdepth-2);
\node[above] at (-\tshift+\yshift,1mm) {Iteration};
\foreach \t[evaluate={\tint=int(2*\t); \rlabel=int(2*\t-1);}] in {0,0.5,...,\tdepth}{
\draw[xshift={\tshift cm}] (2mm,-2*\t)--++(-4mm,0) node[left] {$\t$};
\draw[xshift={-\tshift cm+\yshift cm}] (-2mm,-2*\t)--++(4mm,0) node[right] {$\tint$};
% pi^L pi^R labels
\ifnum\tint > -1
\ifodd\tint
\node[left,xshift=-1mm,yshift=-0.25cm] at (-\tshift+\yshift,-2*\t) {$\pi^R$};
\else \node[left,xshift=-1mm,yshift=-0.25cm] at (-\tshift+\yshift,-2*\t) {$\pi^L$};
\fi
\fi
}
\def\xprev{0}
\def\yprev{0}
\foreach \x[count=\i, remember=\x as \xprev] in \xpath {
\draw[line width=\pathwidth,\xcolor] (\xprev,-\i+1)--(\x,-\i);}
\foreach \y[count=\i, remember=\y as \yprev] in \ypath {
\draw[line width=\pathwidth,xshift={\yshift cm},\ycolor] (\yprev,-\i+1)--(\y,-\i);}
\end{tikzpicture}
\caption{Tree illustration of all possible paths, starting at $x$ or $y$, in the brickwall circuit. The beamsplitter structure is superimposed on the tree to make indexing the path and $\pi$-meeting time easier. The two specific bolded color paths $\pi$-meet at time $t=2.5$ where they have just passed through the same beamsplitter/phaseshifter.
}\label{fig:meet}
\end{figure}
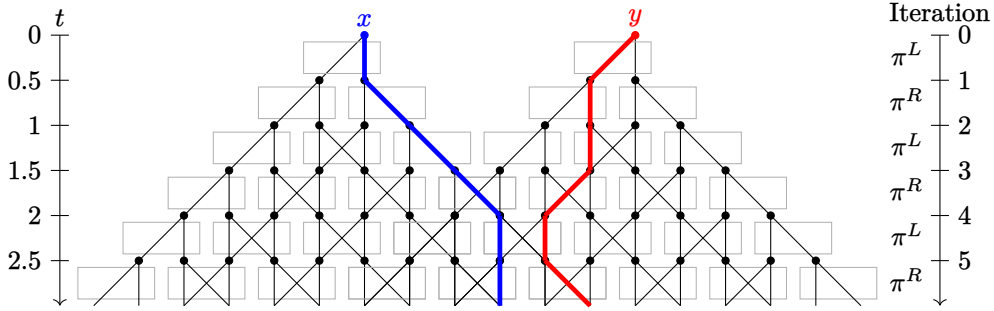

\begin{ex}\label{ex:meet}
For a sequence of graphs $G_n$ on $n$ vertices, let 
\begin{align}
\begin{aligned}
\tmixs&:=\tmix(o(1/n))\in\N\cup\{\infty\},\quad\text{ for some fixed choice of $o(1/n)$ function},\\
\tmeets&:=\tmeet(o(1/n^2))\in\frac{1}{M}\N\cup\{\infty\},\quad\text{ for some fixed choice of $o(1/n^2)$ function}.
\end{aligned}
\end{align}
\begin{enumerate}

\item For the brickwall geometry, we can compare $\tmeets$ and $\tmixs=\tmix(1/n^\alpha)$, $\alpha>1$,  using a simple argument. Suppose $Z_0=x>y=Z_0'$. Because the walks are in 1D, if at any time $Z_t\le Z_t'$, then the walks must have $\pi$-met (see e.g.~Figure~\ref{fig:meet}). At $\tmixs$, the walks are nearly equally likely to be anywhere, and so there is 
a probability $\frac{1}{2}+O(1/{n^{\min(1,\alpha-1)}})$ that $Z_{\tmixs}\le Z_{\tmixs}'$.
If we repeat this check at $2\tmixs,3\tmixs,\ldots$, then by time $t=(2+\delta)\tmixs \log_2 n$ for any $\delta>0$, there is probability $1-O(1/n^{2+\delta})$
that we have seen $Z_t\le Z_t'$, which means the walks have $\pi$-met. Using Example~\ref{ex:bmix} to take $\tmixs=\tmix(1/n^\alpha)=O(n^2\log n)$, this implies we can take $\tmeets=\tmeet(O(1/n^{2+\delta}))=O(n^2\log^2n)$.

We expect it is likely one of the $\log n$ factors in the bound could probably be removed by going through a more careful estimate, such as adapting \cite[Proposition 14.5]{aldous-fill-2014} or \cite[Proposition B.9]{kanade2023coalescence} to the brickwall walk.

\item For a general geometry, we can always have the bound $\tmeets\le (2+\delta)n\log(n)\,\tmixs$ using a similar type of simple argument.
If the walks have not $\pi$-met by time $\tmixs=\tmix(1/n^\alpha)$, $\alpha>1$, then they have probability at least $\frac{1+O(n^{-(\alpha-1)})}{n}$ of (exactly) meeting at time $2\tmixs$ since the transition probabilities for a $\tmixs$ time-step are all $\frac{1+O(n^{-(\alpha-1)})}{n}$. Thus
\begin{align}\label{eqn:tmeetmix}
\P[T(Z_t,Z_t')>\beta\tmixs]&\le\left(1-\frac{1+O(n^{-(1-\alpha)})}{n}\right)^{\beta},
\end{align}
which is $O(1/n^{2+\delta})$ for $\beta=(2+\delta)n\log n$. 
In general using this estimate will probably lead to a non-optimal bound in Lemma~\ref{lem:dec} and Theorem~\ref{thm:lowerbound} below, but we do not expect that the relation between $\tmeets$ and $\tmixs$ can be greatly improved in the general case due to comparison with hitting times. 
Based on results for similar, but reversible, random walks \cite{aldous1991meeting,aldous-fill-2014,kanade2023coalescence}, we expect that $\tmeets$ can be bounded in terms of the first hitting times. This could give better bounds than \eqref{eqn:tmeetmix}, particularly for slower mixing classical walks,
but for very fast mixing walks, the hitting times can be much larger than mixing times.

\end{enumerate}
\end{ex}

\subsection{Main lower bound results and consequences}\label{subsec:lower-main}

\begin{lem}[decoupling]\label{lem:dec}
Let $U=\prod_{i=d}^1U^{(i)}$ with each independent step $U^{(i)}$ a product of $M$ independent layers, $U^{(i)}\vsim\prod_{j=M}^1\ui{j}$, as introduced in Section~\ref{subsec:bosonrw}.
Let $\tmixs=\tmix(o(1/n))$ for some fixed $o(1/n)$ function be a mixing time for both the associated classical walk and its reversal\footnote{If we use Corollary~\ref{cor:mixsrw}, the mixing time bounds are the same since we compare both to the symmetrized walk. Also recall that $\tmixs\in\N$, while $\tmeets\in\frac{1}{M}\N$.}, and let $\tmeets=\tmeet(o(1/n^2))$ for some fixed $o(1/n^2)$ function be a high-probability meeting time for both the classical walk and its reversal.
Then for depth $d\ge\tmeets+ \tmixs$, 
\begin{align}
\E[|U_{\alpha x}|^2|U_{\alpha y}|^2]&\ge\frac{1}{3n^2}(1-o(1)),\label{eqn:alphamix}\\
\E[|U_{x\alpha}|^2|U_{y\alpha}|^2]&\ge\frac{1}{3n^2}(1-o(1)),\label{eqn:alphamix2}
\end{align}
for any $x,y,\alpha\in\{1,\ldots,n\}$, and where the error term $o(1)$ is uniform in $x,y,\alpha$, as $d,n\to\infty$.
\end{lem}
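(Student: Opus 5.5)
We aim to compute the fourth moment $\E[|U_{x\alpha}|^2|U_{y\alpha}|^2]$ by peeling off layers one at a time, as in the proof of Lemma~\ref{lem:prodexp}, but now tracking two paths at once. Write $U=\prod_{K}\ui{K}$ as a product of its $dM$ single layers. Expanding $|U_{x\alpha}|^2|U_{y\alpha}|^2$ produces sums over four index sequences. Averaging one layer $\ui{j}$ uses only the second and fourth moments of $1\times1$ and $2\times2$ Haar unitaries. When the two tracked indices currently sit in different blocks of $\pi_j$, the two factors are independent. Each is a pair $u\bar u$ with equal indices, so each contributes the lazy-walk probability $\spi{j}$. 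Hence, as long as the two "particles" (started from $\alpha$, moving backward through the layers toward $x,y$, or the reverse depending on which of \eqref{eqn:alphamix}, \eqref{eqn:alphamix2} we treat) have not $\pi$-met, the expectation factorizes into the product of two independent classical walk transition probabilities.

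The plan is to stop the iteration at the first $\pi$-meeting time $T$. Concretely, I would write
\[
\E[|U_{x\alpha}|^2|U_{y\alpha}|^2]\ \ge\ \sum_{s\le \tmeets}\ \sum_{a,b}\P_{\alpha\alpha\text{-type}}[\ldots]\cdot(\text{moment at meeting})\cdots ,
\]
more precisely as follows. Condition on the history up to $T$ and on the positions $\{Z_T,Z_T'\}=\{a,b\}$ forming a block of $\pii{TM}$. At that layer the relevant quantity is the Haar $2\times2$ fourth moment, e.g.\ $\E|u_{1 1}|^2|u_{21}|^2=1/6$ and $\E|u_{11}|^4=1/3$. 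This is compared with the product of second moments $1/4$, and the worst ratio is what produces the constant $1/3$. I would argue lower-bound-wise rather than exactly. After the meeting layer, the remaining expectation is $\E[|Q_{x a}|^2|Q_{y b}|^2]$-type terms (with $a,b$ possibly equal), where $Q$ is the product of the remaining $\ge\tmixs$ steps.

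Since it is hard to evaluate $\E[|Q_{xa}|^2|Q_{yb}|^2]$ exactly after meeting, I would instead lower bound using a second-moment-type inequality. One route is to drop the troublesome correlated terms, all of which are nonnegative. Another is to apply Cauchy--Schwarz in the reverse direction, using $\sum_a |Q_{xa}|^2=1$ (unitarity), to reduce to quantities like $\E\sum_{a,b\in\text{block}}|Q_{xa}|^2|Q_{yb}|^2$, which by the same peeling can be handled.

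The key simplification is mixing. The remaining depth is at least $\tmixs$, and since $\tmixs$ is a mixing time for the walk and its reversal, Lemma~\ref{lem:aftermixing} makes every relevant transition probability equal to $\frac1n(1+o(1))$, uniformly in the meeting location and time. The total is then a sum over meeting configurations, weighted by their probabilities, times a constant $\ge \frac13\cdot\frac1{n^2}(1-o(1))$. The meeting probability by $\tmeets$ is $1-o(1/n^2)$; since every contribution is nonnegative, discarding the non-meeting event (and any event where the target is not reached) only costs a lower-bound loss. The $o(1/n^2)$ threshold is needed so that non-meeting paths, which would factor as two independent walks each contributing $\approx 1/n$, are not required. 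In fact those would also contribute $\approx1/n^2$, so they can simply be bounded below as well.

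The main obstacle is the bookkeeping at and after the meeting layer. Once the two paths pass through the same beamsplitter, the four index sequences no longer pair up diagonally. The off-diagonal Weingarten terms (e.g.\ $\E u_{11}u_{22}\bar u_{12}\bar u_{21}=-1/12$) have signs, so positivity is not automatic. I would handle this by stopping exactly at the first meeting layer and computing that single layer's contribution explicitly as a sum of nonnegative terms in $|Q|^2$ entries. For the remaining factors, I would work with the pre-meeting independent-walk structure, where the factorization into $\spi{j}$ probabilities holds exactly, and run the argument from the end at which the walks start together, so that the mixing portion lies after meeting. Verifying that the resulting constant is at least $1/3$ in all cases ($a=b$ self-pairing versus genuine $2\times2$ blocks) is the delicate computation.
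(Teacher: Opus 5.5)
Your outline matches the paper's skeleton: exact independent-walk factorization while the two tracked indices lie in different blocks, stop at the first $\pi$-meeting, use the remaining depth $\ge\tmixs$ to make the end factors $\frac{1+o(1)}{n}$, and use $\tmeets$ so that almost all mass stops in time. But there is a genuine gap at the decisive step, and the orientation you describe is the one in which that step cannot be carried out.

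The paper peels from the end carrying the two \emph{distinct} indices, so that the shared index $\alpha$ sits at the far end. For \eqref{eqn:alphamix}, $\E[|U_{\alpha x}|^2|U_{\alpha y}|^2]$, it strips the first-applied layers with walks started at $x$ and $y$. Then \eqref{eqn:alphamix2} is the same statement for $U^T$, which is why the hypotheses are imposed on the reversed walk too. In this orientation all four entries of the remaining product $Q$ lie in row $\alpha$, and the signed Weingarten term collapses. For instance, when $\{x,y\}$ is a block, $\frac13\delta_{\ell\ell'}\delta_{mm'}-\frac16\delta_{\ell m'}\delta_{m\ell'}$ contracted against $\E[Q_{\alpha\ell}Q_{\alpha m}\bar Q_{\alpha\ell'}\bar Q_{\alpha m'}]$ gives exactly $\frac16\sum_{\ell,m}\E[|Q_{\alpha\ell}|^2|Q_{\alpha m}|^2]$. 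So \eqref{eqn:22cases} is an exact recursion with nonnegative factors $1,\frac13,\frac16$. There is no sign problem and no delicate case analysis. At the first meeting one discards the $\ell\ne m$ terms, leaving $\ge\frac13\min_\ell\E|Q_{\alpha\ell}|^4\ge\frac13\min_\ell(\E|Q_{\alpha\ell}|^2)^2$ by Jensen. Here $\E|Q_{\alpha\ell}|^2$ is a classical walk probability, equal to $\frac{1+o(1)}{n}$ after $\tmixs$. The constant $\frac13$ is $2\times\frac16$ from the two diagonal terms, not a ratio against $\frac14$.

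In your orientation the particles are started at $\alpha$, so two walks started at the same site have already met. You run ``from the end at which the walks start together,'' and the post-meeting terms are $\E[|Q_{xa}|^2|Q_{yb}|^2]$ with distinct rows $x\ne y$. Taking $a=b$ leaves $\E[|Q_{xa}|^2|Q_{ya}|^2]$, which is the original problem on a shorter circuit. No Jensen-type or second-moment inequality bounds this below by $\E|Q_{xa}|^2\,\E|Q_{ya}|^2$: that positive correlation is exactly what the lemma asserts, and Cauchy--Schwarz only gives upper bounds on $\E[XY]$.

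If you instead keep peeling in that direction, each meeting produces cross terms $\pm\frac16\E|\sum_\ell Q_{x\ell}\bar Q_{y\ell}|^2$. These can only be dropped or absorbed by Cauchy--Schwarz, which leaves weight $\frac16$ per successor pair versus the independent-walk weight $\frac14$. That is a loss of a factor $\frac23$ per meeting. Walks that have met once typically meet many more times (e.g.\ in 1D), so the bound degenerates.

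The missing idea is therefore twofold. First, orient the peeling so that $\alpha$ is the common index at the far end. Second, at the first meeting keep only the terms where both walks occupy the same site, so that Jensen applies to a single entry $|Q_{\alpha\ell}|^2$. After that, only second moments, i.e.\ the classical walk, are needed.
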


We will give the proof of Lemma~\ref{lem:dec} in Section~\ref{subsec:decoupling}.
The $o(1)$ error terms in \eqref{eqn:alphamix} and \eqref{eqn:alphamix2} can be made explicit by choosing explicit functions in $\tmixs$ and $\tmeets$.
Lemma~\ref{lem:dec} will be used to show the following theorem. To state part (ii) of the theorem, recall that jointly distributed random variables $(X,Y)$ are said to be a coupling 
of two distributions $\mu$ and $\nu$ if $X\dsim\mu$ and $Y\dsim\nu$.
\begin{thm}[lower bounds]\label{thm:lowerbound}
Let $U=\prod_{i=d}^1 U^{(i)}$ be as in Lemma~\ref{lem:dec}.
\begin{enumerate}[(i)]
\item For depth $d\ge \tmeets+\tmixs$  
and subsystem $\Gamma\subset\{1,\ldots,n\}$,
\begin{align}
\E S_2(U)&\ge \frac{|\Gamma|(n-|\Gamma|)}{3n}\log(\cosh(2s))(1-o(1)).
\end{align}

\item Let $W_{\dist,p}(\mu,\nu):=\inf\{ (\E[\dist(X,Y)^p])^{1/p}:(X,Y)\text{ is a coupling of }(\mu,\nu)\}$ denote the $L^p$ Wasserstein distance between probability measures $\mu$ and $\nu$ on a metric space $(M,\dist)$. Let $\mathcal{H}_n$ denote unit-normalized Haar measure on $\U(n)$. Consider the Hilbert--Schmidt norm $\|\cdot\|_\hs$ on $\U(n)$ and $p=2$, and let 
\begin{align}\label{eqn:tau2def}
\tau_2(\varepsilon):=\inf\{d\in\N:W_{\hs,2}(\mu K^d,\mathcal H_n)\le\varepsilon\text{ for all probability measures $\mu$ on $\U(n)$}\},
\end{align}
where $K$ is the transition kernel for a single step $U^{(i)}$. 
Then there is a numerical constant $C$ so that 
\begin{align}\label{eqn:tau2}
\begin{aligned}
\tau_2(\varepsilon)&\le C(\tmeets+\tmixs)n\log(n/\varepsilon).
\end{aligned}
\end{align}
\end{enumerate}
\end{thm}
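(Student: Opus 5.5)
For part (i), I plan to start from the series formula \eqref{eqn:s2}, $S_2(U)=\sum_{\ell\ge1}\frac{\tanh^{2\ell}(2s)}{2\ell}(|\Gamma|-\Tr W^\ell)$. Let $k=|\Gamma|$. The eigenvalues $\lambda_i$ of $VV^\dagger$ lie in $[0,1]$, so $\lambda_i^\ell\le\lambda_i$ and $|\Gamma|-\Tr W^\ell\ge |\Gamma|-\Tr W$. Summing the series then gives
\[
S_2(U)\ge(|\Gamma|-\Tr W)\sum_{\ell}\frac{\tanh^{2\ell}(2s)}{2\ell}=(|\Gamma|-\Tr W)\log\cosh(2s).
\]
This step uses $-\frac12\log(1-\tanh^2)=\log\cosh$. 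It remains to lower bound $\E(|\Gamma|-\Tr W)$. As in \eqref{eqn:VF}, rows of $UU^T$ are unit vectors, so
\[
|\Gamma|-\Tr W=\sum_{x\in\Gamma}\sum_{y\notin\Gamma}|\langle x|UU^T|y\rangle|^2 .
\]
The argument of Proposition~\ref{prop:expbound} applies verbatim to general geometries: pull off the first layer of $U^{(1)}$ and use that its entries come from independent Haar blocks. This gives $\E|\langle x|UU^T|y\rangle|^2=\sum_\alpha\E|U_{x\alpha}|^2|U_{y\alpha}|^2$, which is \eqref{eqn:uut-alpha}. Lemma~\ref{lem:dec}, \eqref{eqn:alphamix2}, bounds each summand below by $\frac{1}{3n^2}(1-o(1))$ uniformly, so $\E|\langle x|UU^T|y\rangle|^2\ge\frac{1}{3n}(1-o(1))$. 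Summing over the $|\Gamma|(n-|\Gamma|)$ pairs gives (i). This part is routine once the decoupling lemma is available.

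For part (ii), the plan is a contraction argument in the spirit of \cite{oliveira2009convergence}. Take two arbitrary initial unitaries $X_0,Y_0$ (with $Y_0$ Haar, so that $Y$ stays Haar). Evolve them by coupled steps and show that over a block of $d_0=\tmeets+\tmixs$ steps the coupling contracts the expected squared Hilbert--Schmidt distance by a factor $1-c/n$:
\[
\E\|X_{d_0}-Y_{d_0}\|_\hs^2\le(1-c/n)\,\E\|X_0-Y_0\|_\hs^2 .
\]
Iterating $Cn\log(n/\varepsilon)$ blocks, and using the trivial initial bound $\|X_0-Y_0\|_\hs^2\le 4n$, yields \eqref{eqn:tau2}. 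Since $W_2$ is convex and $\mathcal H_n$ is invariant, it suffices to treat point masses.

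For the contraction I would couple the two evolutions by applying the \emph{same} random circuit $G$ of depth $d_0$ on the left: $X\mapsto GX$, $Y\mapsto GY$. This keeps $\|GX-GY\|_\hs=\|X-Y\|_\hs$, so it gives no contraction by itself. I therefore plan to modify the coupling inside individual $2\times2$ blocks, in the spirit of the Kac-walk couplings of Oliveira. When a gate acts on rows $\{a,b\}$, rotate the $Y$-gate by a $U(2)$ element chosen to align the $2\times2$ row-block of $Y$ toward that of $X$ (a polar-type alignment). Because the gate is Haar, this is still a valid coupling. The step decreases $\E\|X-Y\|_\hs^2$ by an amount comparable to the contribution of rows $a,b$ that is ``misaligned'', and that contribution is a degree-four quantity in the random circuit entries. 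Lemma~\ref{lem:dec} is what guarantees, after $d_0$ steps, that every pair of rows is mixed with weight $\gtrsim 1/n^2$. Summing over the $\sim n$ gates of a layer should then show a fixed fraction $c/n$ of $\|X-Y\|_\hs^2$ is removed in expectation.

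The main obstacle is making this block-alignment contraction quantitative. One must show that after decoupling, the expected per-block distance reduction is a constant fraction of the total distance divided by $n$, uniformly in $X_0,Y_0$. This requires expressing $\|X-Y\|_\hs^2=2n-2\Re\Tr(X^\dagger Y)$ and controlling how the random circuit redistributes the ``singular directions'' of $X^\dagger Y$ across the $2\times2$ blocks, with Lemma~\ref{lem:dec} (including \eqref{eqn:alphamix} for the reversed walk) supplying the needed fourth-moment lower bounds.
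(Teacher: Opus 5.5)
Your part (i) is correct and is the paper's argument: $\lambda_i^\ell\le\lambda_i$ reduces everything to $\E(|\Gamma|-\Tr W)$, which \eqref{eqn:uut-alpha} and \eqref{eqn:alphamix2} bound from below.

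Part (ii) has a genuine gap. The contraction estimate, which you flag as the ``main obstacle,'' is the entire proof, and your route is missing two ingredients.

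First, you aim for a \emph{global} contraction $\E\|X_{d_0}-Y_{d_0}\|_\hs^2\le(1-c/n)\|X_0-Y_0\|_\hs^2$ for arbitrary $X_0,Y_0$. The paper proves only a \emph{local} estimate, $\E\|X-Y\|_\hs^2\le\bigl(1-\frac{1-o(1)}{6n}\bigr)r^2+O(r^3)$ for $r=\|x-y\|_\hs\to0$. It then invokes \cite[Theorem 3, Corollary 1]{oliveira2009convergence}, which upgrades local contraction on the length space $\U(n)$ (diameter $2\sqrt n$) to global $W_{\hs,2}$-contraction. Locality is what makes Lemma~\ref{lem:dec} usable. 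Write $yx^\dagger=\id+h+O(r^2)$ with $h$ skew-Hermitian and $h'=QhQ^\dagger$. The squared distance after diagonal phase alignment is then $\|h'\|_\hs^2-\sum_j|h'_{jj}|^2+O(r^3)$, whose expectation is a fourth moment of the circuit entries. For distant points, your polar-alignment gain (the nuclear norm of $X_{ab}Y_{ab}^\dagger$ minus its real trace, with $X_{ab}$ denoting rows $a,b$) is not polynomial in the circuit entries. Lemma~\ref{lem:dec} gives no control of it, and nothing in your plan bounds it.

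Second, even locally, aligning only after the mixing circuit cannot work uniformly in $h$. We have
\[
\E|h'_{\alpha\alpha}|^2=\sum_{j,k}\bigl(h_{jj}\bar h_{kk}+|h_{jk}|^2-\delta_{jk}|h_{jj}|^2\bigr)\E\bigl[|Q_{\alpha j}|^2|Q_{\alpha k}|^2\bigr].
\]
The diagonal part of $h$ enters through the quadratic form $\sum_{j,k}h_{jj}\bar h_{kk}\E[|Q_{\alpha j}|^2|Q_{\alpha k}|^2]$, whose coefficients have indefinite sign. The entrywise lower bounds of Lemma~\ref{lem:dec} say nothing about such a form, so for, e.g., a traceless diagonal $h$ (no off-diagonal mass) there is no provable gain.

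The paper fixes this with a case split:
\begin{enumerate}[(a)]
\item If $\sum_j|h_{jj}|^2\ge(\frac13-\delta)\frac1n\|h\|_\hs^2$, align \emph{before} the circuit with $V_2=\operatorname{diag}(e^{h_{jj}})$.
\item Otherwise, bound the cross terms by Cauchy--Schwarz and $\E|Q_{\alpha j}|^2=\frac{1+o(1)}{n}$, giving $\E|h'_{\alpha\alpha}|^2\ge\frac{\delta-o(1)}{n^2}\|h\|_\hs^2$. Then align \emph{after} $Q$ with $V_1=\operatorname{diag}(e^{h'_{jj}})$, absorbed into one extra independent step $U^{(d+1)}$ by diagonal-unitary invariance.
\end{enumerate}
Taking $\delta=1/6$ balances the two cases.

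Your $2\times2$ polar alignment would also bring in mixed moments such as $\E[Q_{aj}\bar Q_{bj}\bar Q_{ak}Q_{bk}]$, which Lemma~\ref{lem:dec} does not control; diagonal phase alignment is all that is needed.
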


\begin{rmk}\phantomsection\label{rmk:lower}
\begin{enumerate}[itemsep=1mm]
\item For the brickwall circuit and $|\Gamma|=k\propto n$, the depth in part (i) can be taken to be $O(n^2\log^2n)$ since $\tmeets$ and $\tmixs$ can be taken $O(n^2\log^2n)$ by Examples~\ref{ex:bmix} and \ref{ex:meet}. This bound is sharp up to logarithmic factors due to the upper bound in Theorem~\ref{thm:up}(ii).
The bound \eqref{eqn:tau2} in part (ii) for the brickwall circuit can be taken to be $\tau_2(\varepsilon)=O(n^3\log^2 n\log(n/\varepsilon))$. We do not know if this is sharp; however we do expect that $\tau_2(\varepsilon)$, which involves closeness to Haar measure, could possibly occur at a later order time than when entanglement simply reaches $\Theta(k)\log\cosh(2s)$.

\item Theorem~\ref{thm:lowerbound} suggests that if we allow geometries with non-spatially local beamsplitters, we can try to construct linear optical circuits whose R\'enyi-2 entropy grows faster than that of the brickwall circuit. While there are classical walks which mix rapidly in time $O(\log n)$ \cite{levin2009book}, the meeting times can be much larger, such as $n\log n$. This would be better than the brickwall circuit bound of order $n^2\log^2n$, but it leaves open the question of fast $O(\log n)$ entanglement generation.
One set-up we can consider is parallelized Kac's walk on $\U(n)$, involving random application of $2$-mode beamsplitters with phaseshifters, with the beamsplitters for each layer chosen independently according to a random pairing of $\{1,\ldots,n\}$ into sets of size two. The projected version of parallelized Kac's walk onto the sphere $S_\C^{n-1}$ was studied in \cite{lu2024quantum,lu2025parallel} in the context of quantum pseudorandom unitaries, where it was shown to reduce the sphere mixing time by a factor of $n$ compared to non-parallelized Kac's walk.
While total variation mixing of the walk on the full group $\U(n)$ is more complicated, $\varepsilon$-closeness in $L^2$ Wasserstein distance to Haar measure on $\U(n)$ at depth $\Omega(n\log(n)\log(n/\varepsilon))$ for the parallelized walk should follow by the same proof method as for usual Kac's walk given in \cite{oliveira2009convergence}, with sharpness up to logarithmic factors. 
This depth would still be fairly large in $n$, but $L^2$ Wasserstein closeness is also a much stricter requirement than just $\Theta(k)$ entanglement.
\end{enumerate}
\end{rmk}

We will give the proof of Theorem~\ref{thm:lowerbound} in Section~\ref{subsec:proof-lower}.
Theorem~\ref{thm:lowerbound}(ii) gives a bound on when $\E S_2(U)$ is close to the Haar value by comparing to the $L^1$ Wasserstein distance $W_{\hs,1}(\mu,\nu)$, which is bounded above by $W_{\hs,2}(\mu,\nu)$ and has the dual characterization
\begin{align}\label{eqn:w1}
W_{\hs,1}(\mu,\nu)&=\sup\left\{\int_{\U(n)} f\,d(\mu-\nu):f:\U(n)\to\R\text{ is 1-Lipschitz w.r.t. } \|\cdot\|_\hs\right\}.
\end{align}
The Lipschitz constant for $S_2(U)$ with respect to the Hilbert--Schmidt or Frobenius norm is $L\le 2\sqrt{k}\sinh^2(2s)$ as calculated in Lemma~\ref{lem:lipschitz}. 
Letting $\E_{\mathcal U\dsim\mathrm{Haar}(n)}S_2(\mathcal U)$ denote the Haar value for the R\'enyi-2 entropy, we then have
\begin{cor}[closeness to Haar values]\label{cor:haar}
Let $U=\prod_{i=d}^1U^{(i)}$ as in Lemma~\ref{lem:dec} or Theorem~\ref{thm:lowerbound}.
For $d\ge \tau_2(\varepsilon)$ as defined in \eqref{eqn:tau2},
\begin{align}
&|\E S_2(U)-\E_{\mathcal U\dsim\mathrm{Haar}(n)}S_2(\mathcal U)| \le 2\sqrt{k}\sinh^2(2s)\varepsilon,\label{eqn:es}\\
\text{and}\quad&\P[|S_2(U)-\E_{\mathcal U\dsim\mathrm{Haar}(n)}S_2(\mathcal U)|>x]\le \frac{8k^{3/2}\varepsilon\log(\cosh(2s))\sinh^2(2s)+48\sinh^4(2s)\frac{k}{n}}{x^2}.\label{eqn:ps}
\end{align}
Taking e.g.~$\varepsilon=1/n$, then there is a constant $C$ so that for depth
\begin{align}\label{eqn:s2-typicality}
d\ge C(\tmeets+\tmixs)n\log n,
\end{align}
there is weak typicality of $S_2(U)$ with respect to Haar averages.\footnote{that is, $\lim\limits_{n\to\infty}\P\left[\left|\frac{S_2(U)}{\E_{\mathcal U\dsim\mathrm{Haar}(n)} S_2(\mathcal U)}-1\right|<\epsilon\right]=1$ for any $\epsilon>0$}

We also have estimates for convergence of moments of $U$ to Haar moments. Letting $f_m(U):=\prod_{j=1}^mU^\pm_{a(j)b(j)}$, where $U^\pm$ indicates the term may be $U$ or $\bar U$ and $a(j),b(j)\in\{1,\ldots,n\}$, then
\begin{align}\label{eqn:fm}
|\E f_m(U)-\E_{\mathcal U\dsim\mathrm{Haar}(n)}f_m(\mathcal U)|&\le \varepsilon,
\end{align}
for depth $d\ge C(\tmeets+\tmixs)n\log(nm/\varepsilon)$ for a constant $C$.
\end{cor}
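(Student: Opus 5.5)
The plan is to read all three conclusions off the definition of $\tau_2(\varepsilon)$ in Theorem~\ref{thm:lowerbound}(ii), which is assumed. For $d\ge\tau_2(\varepsilon)$, and taking $\mu=\delta_{I}$, there is a coupling $(X,Y)$ with $X\dsim U$, $Y\dsim\mathcal H_n$ and $\E\|X-Y\|_\hs^2\le\varepsilon^2$. Hence $W_{\hs,1}\le W_{\hs,2}\le\varepsilon$.

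For \eqref{eqn:es}, I would use the dual formula \eqref{eqn:w1} with $f=S_2/L$, where $L\le2\sqrt k\sinh^2(2s)$ is the Lipschitz constant from Lemma~\ref{lem:lipschitz}. This gives $|\E S_2(U)-\E_{\mathcal U\dsim\mathrm{Haar}(n)}S_2(\mathcal U)|\le L\varepsilon$ directly.

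For \eqref{eqn:ps}, I would apply Chebyshev/Markov to the second moment, using the same coupling. Write
\[
S_2(X)-\E_{\mathcal U}S_2(\mathcal U)=a+b,\qquad a:=S_2(X)-S_2(Y),\quad b:=S_2(Y)-\E_{\mathcal U}S_2(\mathcal U),
\]
and use $\E(a+b)^2\le2\E a^2+2\E b^2$.
\begin{itemize}
\item For $a$, I would not square the Lipschitz bound. Instead I would use $|a|\le 2\sup S_2\le 2k\log\cosh(2s)$, which follows from \eqref{eqn:s2-tanh}, together with $|a|\le L\|X-Y\|_\hs$. This gives $\E a^2\le 2k\log\cosh(2s)\,L\,\E\|X-Y\|_\hs\le 4k^{3/2}\log\cosh(2s)\sinh^2(2s)\,\varepsilon$. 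Doubling produces the first term of \eqref{eqn:ps}.
\item For $b$, which is just the Haar variance of $S_2$, I would invoke concentration of Haar measure on $\U(n)$ from Appendix~\ref{sec:concentration}. Its subgaussian tail $\P[|f-\E f|>t]\le Ce^{-cnt^2/L^2}$, integrated, gives $\Var\le C'L^2/n=O(k\sinh^4(2s)/n)$. Tracking the appendix's constants should give $\Var\le 24\sinh^4(2s)k/n$, hence the $48$.
\end{itemize}

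For weak typicality, set $\varepsilon=1/n$ and $x=\epsilon\,\E_{\mathcal U}S_2(\mathcal U)$. By Theorem~\ref{thm:lowerbound}(i) in the Haar limit (or the known Page-curve values), $\E_{\mathcal U}S_2(\mathcal U)\ge ck\log\cosh(2s)$. The right side of \eqref{eqn:ps} is then $O(k^{-1/2}n^{-1}+n^{-1})\epsilon^{-2}\to0$. The depth condition \eqref{eqn:s2-typicality} comes from \eqref{eqn:tau2} with $\log(n/\varepsilon)=2\log n$.

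For the moment bound \eqref{eqn:fm}, I would telescope the product $\prod_j X^\pm_{a(j)b(j)}-\prod_jY^\pm_{a(j)b(j)}$. Since all entries have modulus $\le1$, this gives $|f_m(X)-f_m(Y)|\le\sum_j|X_{a(j)b(j)}-Y_{a(j)b(j)}|\le m\|X-Y\|_\hs$. So $f_m$ is $m$-Lipschitz, and $|\E f_m(U)-\E f_m(\mathcal U)|\le mW_{\hs,1}$. Applying \eqref{eqn:tau2} with $\varepsilon/m$ in place of $\varepsilon$ gives the stated depth $C(\tmeets+\tmixs)n\log(nm/\varepsilon)$.

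The main obstacle I expect is the variance step for $b$: getting the explicit constant $48$ depends on the exact form of the concentration inequality in the appendix. The $a$-term also needs the sup bound rather than $L^2\varepsilon^2$, to match the stated $k^{3/2}\varepsilon$ form.
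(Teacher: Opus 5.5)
Your proposal is correct and follows the paper's argument. The paper gets the same two contributions by applying $W_{\hs,1}$-duality to the Lipschitz function $F(U)=|S_2(U)-\E_{\mathcal U\dsim\mathrm{Haar}(n)}S_2(\mathcal U)|^2$, with Lipschitz constant $4\|S_2\|_\infty\cdot 2\sqrt k\sinh^2(2s)$, in place of your explicit coupling split. It treats weak typicality (via the $\Theta(k)$ Haar value from \cite{iosue2023page}) and the $m$-Lipschitz moment bound exactly as you do.

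The one point to watch is that your split $(a+b)^2\le 2a^2+2b^2$ doubles the Haar-variance term. You therefore need $\Var_{\mathrm{Haar}}S_2\le 6L^2/n$. That bound comes from the MGF/log-Sobolev form underlying Theorem~\ref{thm:haar-concentration}; integrating the stated tail bound only gives $24L^2/n$. The paper's route keeps the variance with coefficient $1$.
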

\begin{proof}[Proof of Corollary~\ref{cor:haar}]
Equation~\eqref{eqn:es} follows immediately from \eqref{eqn:w1} and the bound on the Lipschitz constant. For \eqref{eqn:ps}, let $S_{2,H}:=\E_{\mathcal U\dsim\mathrm{Haar}(n)}S_2(\mathcal U)$, and define $F(U):=|S_2(U)-S_{2,H}|^2$. Then $F$ is $8k^{3/2}\log(\cosh(2s))\sinh^2(2s)$-Lipschitz, as
\begin{align*}
|F(U)-F(V)|&\le|S_2(U)^2-S_2(V)^2|+2S_{2,H}|S_2(U)-S_2(V)|\\
&\le 4\|S_2\|_\infty|S_2(U)-S_2(V)|\\
&\le 8k^{3/2}\log(\cosh(2s))\sinh^2(2s)\|U-V\|_\hs. \numberthis
\end{align*}
Then for $d\ge \tau_2(\varepsilon)$ and $U=\prod_{i=d}^1U^{(i)}$, using \eqref{eqn:w1} we have 
\begin{align}
|\E F(U)-\E_{\mathcal U\dsim\mathrm{Haar}(n)}F(\mathcal U)|&\le \varepsilon8k^{3/2}\log(\cosh(2s))\sinh^2(2s).
\end{align}
Noting that $\E_{\mathcal U\dsim\mathrm{Haar}(n)}F(\mathcal U)=\Var_{\mathcal U\dsim\mathrm{Haar}(n)}(S_2(\mathcal U))$, we then obtain
\begin{align*}
\P[|S_2(U)-\E_{\mathcal U\dsim\mathrm{Haar}(n)}S_2(\mathcal U)|>x]&\le\frac{\E|S_2(U)-S_{2,H}|^2}{x^2}\\
&\le \frac{\varepsilon8k^{3/2}\log(\cosh(2s))\sinh^2(2s)+\Var_{\mathcal U\dsim\mathrm{Haar}(n)}(S_2(\mathcal U))}{x^2},
\end{align*}
and \eqref{eqn:ps} follows from using the variance proxy $\sigma^2=6L^2/n$ with $L=2\sqrt{k}\sinh^2(2s)$ for $S_2(\mathcal U)$ with $\mathcal U$ Haar-random, as expressed in \eqref{eqn:concentration}. Weak typicality follows from using that the Haar-random expectation values of $S_2$ scale as $\Theta(k)$ \cite{iosue2023page}.

The moments bound \eqref{eqn:fm} follows from checking that the function $f_m$ is $m$-Lipschitz with respect to the Hilbert--Schmidt norm.
\end{proof}

For the brickwall geometry, \eqref{eqn:s2-typicality} becomes $d\ge Cn^3\log^3n$ for some constant $C>0$. 
For the moments of $U$, since $m$th moments of an $n\times n$ Haar random unitary are of order $O(n^{-m/2})$, we will typically want to take $\varepsilon=o(n^{-m/2})$. For brickwall unitaries we then have $o(n^{-m/2})$-closeness of $m$th moments to Haar moments by depth $d\ge\Omega(m n^3\log^3n)$.

\subsection{Decoupling proof}\label{subsec:decoupling}
\begin{proof}[Proof of Lemma~\ref{lem:dec}]
For this proof, we will consider the discrete-time expanded classical random walk $\spl{t}$, indexed by $t\in\frac{1}{M}\N$, whose transition matrix $\pt{t}$ for the step at time $t\in \frac{j}{M}+\N$ is given by $\E|\ui{j}|^2$. 
We also introduce notation $\udm{r}{d}$ to denote the depth $d$ matrix $U$ but with the first $r-1$ layers in the circuit drawing (which are the last $r-1$ of the factors $\ui{j}$ in the product for $U$) removed, e.g.~for $r-1< M$,
\begin{align}\label{eqn:Usd}
\udm{r}{d}&= U^{(d)}\cdots U^{(2)}\prod_{j=M}^{r}\ui{j},
\end{align}
so that the first step applied in $\udm{r}{d}$ is $\ui{r}$.
From Lemma~\ref{lem:prodexp}, the corresponding classical walk for $\udm{r}{d}$, which we denote by $\splm{r}{t}$, is the same as for $U$, except that it starts by walking through the $r$th layer in the circuit instead of the first layer. The first nonzero time $\frac{1}{M}$ for the walk $\splm{r}{t}$ corresponds to time $\frac{r}{M}$ for $\spl{t}$. We then have
\begin{align}
\P_\alpha[\splm{r}{d-\frac{r-1}{M}}=\ell]&=\E|\udm{r}{d}_{\ell\alpha}|^2.
\end{align}

To estimate $\E[|U_{\alpha x}|^2|U_{\alpha y}|^2]$, or more generally $\E[|\udm{r}{d}_{\alpha x}|^2|\udm{r}{d}_{\alpha y}|^2]$, we start by pulling out the right-most matrix $\ui{r}$ in \eqref{eqn:Usd}, which is independent of the other factors, to get
\begin{align*}
\E[|\udm{r}{d}_{\alpha x}|^2|\udm{r}{d}_{\alpha y}|^2]&=\E\sum_{\ell,m,\ell',m'=1}^n\udm{r+1}{d}_{\alpha \ell}\ui{r}_{\ell x}\,\budm{r+1}{d}_{\alpha \ell'} \bui{r}_{\ell'x}\udm{r+1}{d}_{\alpha m}\ui{r}_{m y}\,\budm{r+1}{d}_{\alpha m'}\bui{r}_{m'y}\\
&=\sum_{\ell, m, \ell',m'=1}^n\E[\udm{r+1}{d}_{\alpha \ell}\udm{r+1}{d}_{\alpha m}\budm{r+1}{d}_{\alpha \ell'}\budm{r+1}{d}_{\alpha m'}]\E[\ui{r}_{\ell x}\ui{r}_{m y}\bui{r}_{\ell'x}\bui{r}_{m'y}].\numberthis\label{eqn:dmhalf}
\end{align*}
To evaluate the expectation value over the $\ui{r}$ terms, recall that each matrix $\ui{r}$ is constructed from $1\times 1$ or $2\times 2$ independent Haar random blocks spread out over singlets or pairs of indices according to a permutation $\pi_r\in\fakepair(n)$. 
In a slight abuse of notation, we will let $\pii{r}$ denote both the associated partition in $\fakepair(n)$ and the function $\pii{r}(x)$ which maps $x$ to the set $\{x,\pi_r(x)\}$.
Then we obtain from fourth moment calculations for the $1\times1$ and $2\times 2$ Haar unitaries (which are all independent),
\begin{align}
\E[\ui{r}_{\ell x}\ui{r}_{m y}\bui{r}_{\ell' x}\bui{r}_{m'y}]&=\oneb_{\substack{\ell,\ell'\in\pii{r}(x)\\m,m'\in\pii{r}(y)}}\begin{cases}
\delta_{\ell\ell'mm'},&x=y\text{ and }|\pii{r}(x)|=1\\
\frac{1}{6}(\delta_{\ell\ell'}\delta_{mm'}+\delta_{\ell m'}\delta_{m \ell'}),&x=y\text{ and }|\pii{r}(x)|=2\\
\frac{1}{3}\delta_{\ell\ell'}\delta_{mm'}-\frac{1}{6}\delta_{\ell m'}\delta_{m \ell'},&\{x,y\}\in\pii{r},\;x\ne y\\
2^{-(\delta_{|\pii{r}(x)|,2}+\delta_{|\pii{r}(y)|,2})}\delta_{\ell\ell'}\delta_{mm'},&\{x,y\}\not\in\pii{r}
\end{cases}.
\end{align}
This implies
\begin{multline}\label{eqn:22cases}
\E[|\udm{r}{d}_{\alpha x}|^2|\udm{r}{d}_{\alpha y}|^2]
\\=\sum_{\substack{\ell_1\in\pii{r}(x)\\m_1\in\pii{r}(y)}}\E[|\udm{r+1}{d}_{\alpha\ell_1}|^2|\udm{r+1}{d}_{\alpha m_1}|^2]
\times\begin{cases}
1,&x=y\text{ and }|\pii{r}(x)|=1\\
\frac{1}{3},&x=y\text{ and }|\pii{r}(x)|=2\\
\frac{1}{6},&\{x,y\}\in\pii{r},\;x\ne y\\
2^{-(\delta_{|\pii{r}(x)|,2}+\delta_{|\pii{r}(y)|,2})},&\{x,y\}\not\in\pii{r}\\
\end{cases}.
\end{multline}

In the first three cases, when $\{x,y\}\in\pii{r}$ so that $\pii{r}(x)=\pii{r}(y)$, the sum over $\ell_1$ and $m_1$ is bounded below by the sum over just the terms where $\ell_1=m_1$. 
Since for $|\pii{r}(x)|=2$, there are 2 possible values for $\ell_1=m_1$, we obtain in any of the first three cases of \eqref{eqn:22cases},
\begin{align*}
\E[|\udm{r}{d}_{\alpha x}|^2|\udm{r}{d}_{\alpha y}|^2]&\ge \frac{1}{3}\min_{\ell_1\in\pii{r}(x)}\E[|\udm{r+1}{d}_{\alpha\ell_1}|^4]\\
&\ge \frac{1}{3}\min_{\ell_1\in\pii{r}(x)}(\E|\udm{r+1}{d}_{\alpha\ell_1}|^2)^2
\ge
\min_\ell \frac{1}{3}\P_{\ell}[\splm{r+1}{d-\frac{r}{M}}=\alpha]^2,\numberthis\label{eqn:4bound}
\end{align*}
where we used Jensen's inequality in the second inequality.
If the total walk time $d-\frac{r}{M}$ in the walk $\splm{r+1}{d-\frac{r}{M}}$ is larger than the classical mixing time $\tmixs$, then this will be $\approx\frac{1}{3n^2}$.
More precisely, recalling that $\tmixs:=\tmix(o(1/n))\in\N$ and using Lemma~\ref{lem:aftermixing}, for $t\ge\tmixs$, 
\begin{align*}
\bigg|\P_\ell[\spl{t}=\alpha]^2-\frac{1}{n^2}\bigg|\le \bigg|\P_\ell[\spl{t}=\alpha]-\frac{1}{n}\bigg|\left(\bigg|\P_\ell[\spl{t}=\alpha]-\frac{1}{n}\bigg|+\frac{2}{n}\right)
&= o\left(\frac{1}{n^2}\right).\numberthis\label{eqn:pmixbound}
\end{align*}
In the last case of \eqref{eqn:22cases} where $\{x,y\}\not\in\pii{r}$, we iterate, pulling out the next layer $\ui{r+1}$ from $\udm{r+1}{d}$. This produces another summation symbol like in \eqref{eqn:22cases}, this time over indices $\ell_2\in\pii{r+1}(\ell_1)$ and $m_2\in\pii{r+1}(m_1)$. 

Starting from $\E[|U_{\alpha x}|^2|U_{\alpha y}|^2]$ which corresponds to \eqref{eqn:dmhalf} with $r=1$, we can associate the iteration process with steps in a pair of random walks $Z_t$ and $Z_t'$ (Figure~\ref{fig:iterate}). One path in this walk starts at $x$, then moves to an $\ell_1\in\pii{1}(x)$, then to an $\ell_2\in\pii{2}(\ell_1)$, and so on. The other path starts at $y$, then moves to an $m_1\in\pii{1}(y)$, then to an $m_2\in\pii{2}(m_1)$, and so on. Each term $\E[|\udm{r+1}{d}_{\alpha\ell_r}|^2|\udm{r+1}{d}_{\alpha m_r}|^2]$ in the resulting sum after $r$ steps corresponds to one of these pairs of paths, going from $x\to\ell_r$ and $y\to m_r$ in $r$ steps.

We continue iterating each term like $\E[|\udm{r+1}{d}_{\alpha\ell_r}|^2|\udm{r+1}{d}_{\alpha m_r}|^2]$ in the sum until we can apply one of the first three cases of \eqref{eqn:22cases}. At that point, we apply \eqref{eqn:4bound} and stop iterating that branch. 
In summary, this process considers all possible pairs of classical walks $(\sigma_1,\sigma_2)$, where each walk is described by the classical walk $\spl{t}$, with $\sigma_1$ starting at $x$, $\sigma_2$ starting and $y$, and both ending at $\alpha$ at time $d$ (Figure~\ref{fig:iterate}). As soon as $\sigma_1$ and $\sigma_2$ coincide enough to use one of the first three cases of \eqref{eqn:22cases}, we apply \eqref{eqn:4bound} at that time and stop iterating that branch.

\begin{figure}[htb]
\begin{tikzpicture}[scale=2,xscale=-1]
\def\height{1mm};
\begin{scope}[yshift=-6.5mm] % bottom axes and labels
% t axis
\def\theight{2mm}
\node[left] at (4,\theight) {$t$};
\draw [->](4,\theight)--++(-4,0);
\foreach \t in {0,...,3}{\draw[yshift=\theight] (4-.25*\t,-.5mm)--++(0,1mm);}
\node[above] at (4-.25,\theight+.5mm) {$\frac{1}{2}$};
\node[above] at (4-.5,\theight+.5mm) {$1$};
\node[above] at (4-.75,\theight+.5mm) {$\frac{3}{2}$};
\node[above] at (4,\theight+.5mm) {$0$};
\draw[yshift=\theight] (0,-.75mm)--++(0,1.5mm) node[above] {$d$};
\draw[yshift=\theight] (1.5,-.5mm)--++(0,1mm);
\draw[decoration={brace,raise=3pt,aspect=.5,amplitude=4pt},decorate,yshift=-.5mm] (1.52,\theight)--(4,\theight);
\node[below] at (2.8,\theight-1mm) {$\tmeets$};
\draw[decoration={brace,raise=3pt,aspect=.5,amplitude=4pt},decorate,yshift=-.5mm] (0,\theight)--(1.47,\theight);
\node[below] at (.6,\theight-1mm) {$\ge\tmixs$};
\end{scope}

% options for tree
\def\mrad{1mm} % vertex radius
\def\xcolor{blue} % x color
\def\ycolor{red} % y color
\def\bcolor{gray!60} % brick color
\def\brickmargin{.35cm} % size of bricks
\def\pathwidth{1.8} % line width for x, y paths
% tree parameters: depth, brick color, shift (w/ units), depth
\newcommand{\tree}[4]{\begin{scope}[rotate=270,scale=.25,xshift=-1cm,yshift=-1cm]
\draw (1,1)--++(-1,-1);
\draw (1,1)--++(0,-1);
\draw[fill=black] (1,1) circle (\mrad);
\foreach \height in {0,...,#1}{
\foreach \lr in {0,...,\height}{
% Left step edges
\draw (2*\lr-\height,-\height)--++(-1,-1);
\draw (2*\lr-\height,-\height)--++(0,-1);
% right step edges
\draw (2*\lr-\height+1,-\height)--++(1,-1);
\draw (2*\lr-\height+1,-\height)--++(0,-1);
% draw nodes
\draw[fill=black] (2*\lr-\height,-\height) circle (\mrad);
\draw[fill=black] (2*\lr-\height+1,-\height) circle (\mrad);
}
}
% draw bricks
\foreach \height in {0,...,#4}{
\foreach \lr in {0,...,\height}{
\draw[draw=#2,fill=none,xshift=-\brickmargin,yshift=-\brickmargin+#3] (2*\lr-\height, -\height) rectangle ++({1cm+2*\brickmargin},{2*\brickmargin});
}}
\end{scope}}

\begin{scope}[xshift=4cm,yshift=1.5cm]
\tree{1}{\bcolor}{0.5cm}{2}
\draw[line width=\pathwidth,color=\xcolor] (0,0)--(-.25,.25)--(-.5,.25)--(-.75,0);
\node[right,color=\xcolor] at (-.75,0) {$\sigma_1$};
\node[left] at (0,0) {$x$};
\end{scope}

\begin{scope}[xshift=4cm,yshift=.5cm]
\tree{1}{\bcolor}{0.5cm}{2}
\draw[line width=\pathwidth,color=\ycolor] (0,0)--(-.25,0)--(-.5,-.25)--(-.75,-.25);
\node[right,color=\ycolor] at (-.75,-.25) {$\sigma_2$};
\node[left,\ycolor] at (0,0) {$y$};
\end{scope}

\node at (2.5,1) {$\cdots$};
\node at (1.4,1) {$\cdots$};
% alpha branch
\begin{scope}[yshift=1cm,rotate=180] 
\tree{2}{\bcolor}{0.5cm}{3}
\end{scope}
\node[right] at (0,1) {$\alpha$};

% draw iteration depth axis
\def\height{2.75cm}
\draw[->] (4,\height)--++(-1,0);
\node[left] at (4,\height) {Iterations};
\foreach \t in {0,...,3}{
\draw[yshift=\height] (4-.25*\t,-.5mm)--++(0,1mm) node[above] {$\t$};
}
\begin{scope}[xshift=0cm]
\node[xshift=1.8mm,below] at (4-.05,\height-.5mm) {$\pii{1}$};
\node[xshift=2.5mm,below] at (4-.27,\height-.5mm) {$\pii{2}$};
\node[xshift=2.5mm,below] at (4-.55,\height-.5mm) {$\pii{3}$};
\end{scope}
\end{tikzpicture}
\caption{Illustration for brickwall iteration. 
We start by peeling off unitaries $\ui{1},\ui{2},\ldots$ from the right side of $U$, which corresponds to advancing a pair of random walks starting at $x$ and $y$ in the left of the above circuit illustration.
We need the resulting random walk paths $\sigma_1$ and $\sigma_2$ to $\pi$-meet in order to apply \eqref{eqn:4bound} and stop the iteration. With high probability over the choice of paths, this happens by time $\tmeets$. By assumption we took $d$ large enough so that $\udmnom{\tmixs}{d}$ still has depth at least the classical mixing time $\tmixs$. This allows us to apply the classical mixing property \eqref{eqn:pmixbound}, and to handle the conditioning that the walks end at $\alpha$: if we still have at least time $\tmixs$ to go, then we know the probability is $\frac{1+o(1)}{n}$ that a random walk, regardless of its current location, ends at $\alpha$.}\label{fig:iterate}
\end{figure}

Throughout this process, we need to keep track of the fourth moment factors from the right side of \eqref{eqn:22cases} in each iteration. We stop the iteration as soon as we reach any of the first three cases, so we only need to consider the last case of \eqref{eqn:22cases}. In this case, the iteration of a single term $\E[|\udm{j+1}{d}_{\alpha\ell_{j}}|^2|\udm{j+1}{d}_{\alpha m_{j}}|^2]$  generates 
\begin{align}
|\pii{j}(\ell_{j}) \times \pii{j}(m_{j})| =2^{(\delta_{|\pii{j}(\ell_{j})|,2}+\delta_{|\pii{j}(m_{j})|,2})}
\end{align}
new terms.
This number is conveniently exactly the reciprocal of the fourth moment factor produced in the right side of \eqref{eqn:22cases}.
If we ignore the condition that $\sigma_1$ and $\sigma_2$ end at $\alpha$ for now, this fourth moment factor thus gives the correct normalization to view the iteration process as a pair of independent random walks according to $\spl{t}$, as we explain in more detail below. 
In what follows, we will use the assumption $d\ge\tmeets+\tmixs$ to show we can effectively ignore the requirement that $\sigma_1$ and $\sigma_2$ end at $\alpha$.

Consider a pair of paths $(\sigma_1,\sigma_2)$ such that $(\sigma_1)_0=x$, $(\sigma_2)_0=y$, and $(\sigma_1)_d=(\sigma_2)_d=\alpha$. 
Let $T(\sigma_1,\sigma_2):=\min\{t\in\frac{1}{M}\N_{>0}:\{(\sigma_1)_t,(\sigma_2)_t\}\in\pii{Mt}\}$ be the first time the paths $\pi$-meet, equivalently the first time the paths have just passed through the same beamsplitter/phaseshifter. 
If we stop the iteration of a pair of paths $(\sigma_1,\sigma_2)$ at time $T(\sigma_1,\sigma_2)$ to apply the bound \eqref{eqn:4bound}, the factor due to the $2^{-(\delta_{|\pii{r}(x)|,2}+\delta_{|\pii{r}(y)|,2})}$ terms is exactly
$\P[(Z_t,Z_t')=(\sigma_1,\sigma_2)\,\forall\,0\le t\le T(\sigma_1,\sigma_2)]$ where $Z_t,Z_t'$ are independent copies of the random walk, with $Z_t$ starting at $x$ and $Z_t'$ at $y$, but with no conditioning on the ending value at time $d$.
To relate this probability to independent walks $\tilde Z_t,\tilde Z_t'$ which are conditioned to end at $\alpha$, consider $d-T\ge\tmixs$. Then letting the subscript in $\Pxy$ indicate starting point $(x,y)$ for a pair of random walks, write (recall $t\in\frac{1}{M}\N$),
\begin{align*}\label{eqn:condalpha}
\Pxy&[(\tilde Z_t,\tilde Z_t')=(\sigma_1,\sigma_2)\,\forall\,0\le t\le T]\\
&=\frac{\Pxy[(Z_t,Z_t')=(\sigma_1,\sigma_2)\,\forall\,0\le t\le T,\text{ and }Z_d=Z_d'=\alpha]}{\Pxy[Z_d=Z_d'=\alpha]}\quad\text{[by conditioning]}\\
&=\frac{\Pxy[Z_d=Z_d'=\alpha|Z_T=(\sigma_1)_T,Z_T'=(\sigma_2)_T]\Pxy[(Z_t,Z_t')=(\sigma_1,\sigma_2)\,\forall\,0\le t\le T]}{\Pxy[Z_d=Z_d'=\alpha]}\quad\text{[Markov property]}\\
&=\frac{\frac{1+o(1)}{n^2}\Pxy[(Z_t,Z_t')=(\sigma_1,\sigma_2)\,\forall\,0\le t\le T]}{\frac{1+o(1)}{n^2}}\quad\text{[for $d-T\ge\tmixs = \tmix(o(1/n))$]}\\
&=\Pxy[(Z_t,Z_t')=(\sigma_1,\sigma_2)\;\;\forall\,0\le t\le T](1+o(1)).\numberthis
\end{align*}
To apply the Markov property in the third line above, we noted that the individual steps of the walks $Z_t,Z_t'$ only depend on $t\mod 1$ and the current location of the walk, but not on previous locations of the walk.

Next, to handle the counting in the iteration process, define equivalence classes $[(\sigma_1,\sigma_2)]\in\mathcal E$ over the set of all pairs of paths $(\sigma_1,\sigma_2)$ which start at $(x,y)$ and end at $(\alpha,\alpha)$ as follows. Two pairs $(\sigma_1,\sigma_2)$ and $(\sigma_1',\sigma_2')$ are in the same equivalence class if they have the same first $\pi$-meeting time $T$ and are equal up to that time. 
We apply the iteration procedure \eqref{eqn:22cases} until we can use the bound \eqref{eqn:4bound}, which occurs when $\frac{r}{M}=T(\sigma_1,\sigma_2)$. 
Since we stop the iteration process at $T(\sigma_1,\sigma_2)$, we only ever consider a single representative $[(\sigma_1,\sigma_2)]$ from its equivalence class.
Using \eqref{eqn:condalpha} and \eqref{eqn:pmixbound}, we then obtain for $d\ge T+\tmixs$,
\begin{align*}
\E[&|U_{\alpha x}|^2|U_{\alpha y}|^2]\\
&\ge\sum_{[(\sigma_1,\sigma_2)]\in\mathcal E}\oneb_{T(\sigma_1,\sigma_2)\le d-\tmixs}\Pxy[(Z_t,Z_t')=(\sigma_1,\sigma_2)\,\forall\,0\le t\le T]\frac{1}{3}\min_\ell\P_\ell[\splmnom{T+\frac{1}{M}}{d-T(\sigma_1,\sigma_2)
,r}=\alpha]^2\\
&\ge\sum_{[(\sigma_1,\sigma_2)]\in\mathcal E}\oneb_{T(\sigma_1,\sigma_2)\le d-\tmixs}\Pxy[(\tilde Z_t,\tilde Z_t')=(\sigma_1,\sigma_2)\,\forall\,0\le t\le T]\frac{1-o(1)}{3n^2}. \numberthis
\end{align*}
To convert the sum over equivalence classes to one over pairs $(\sigma_1,\sigma_2)$, note that for any equivalence class $[(\sigma_1,\sigma_2)]$,
\begin{equation}
\P[(\tilde Z_t,\tilde Z_t')=(\sigma_1,\sigma_2)\,\forall\,0\le t\le T]=
    \sum_{(\sigma_1',\sigma_2')\in[(\sigma_1,\sigma_2)]} 
    \P[(\tilde Z_t,\tilde Z_t')=(\sigma_1',\sigma_2')\,\forall\,0\le t\le d],
\end{equation}
by using that $\oneb_{[(\sigma_1,\sigma_2)]}=\sum_{(\sigma_1',\sigma_2')\in[(\sigma_1,\sigma_2)]}\oneb_{(\sigma_1',\sigma_2')}$.
Applying this, we can sum over both the equivalence classes and the elements of the equivalence classes to get
\begin{align*}
\E[|U_{\alpha x}|^2|U_{\alpha y}|^2]&\ge \sum_{\substack{\sigma_1:x\to\alpha\\\sigma_2:y\to\alpha}}\oneb_{T(\sigma_1,\sigma_2)\le d-\tmixs}\Pxy[(\tilde Z_t,\tilde Z_t')=(\sigma_1,\sigma_2)\,\forall\,0\le t\le d]\frac{1-o(1)}{3n^2}\\
&= \Pxy[T(\tilde Z_t,\tilde Z_t')\le d-\tmixs]\frac{1-o(1)}{3n^2}.\numberthis\label{eqn:decp}
\end{align*}

It remains to estimate $\P[T(\tilde Z_t,\tilde Z_t')\le d-\tmixs]$. 
As before, let $Z_t,Z_t'$ be independent random walks starting at $x$ and $y$ respectively, with no end time condition. Then applying the definition of conditional probability followed by Cauchy--Schwarz, we obtain a bound on the complementary event
\begin{align*}
\Pxy[T(\tilde Z_t,\tilde Z_t')> d-\tmixs]
&=\Pxy[T( Z_t,  Z_t')>d-\tmixs| Z_d= Z_d'=\alpha]\\
&\le \frac{\Pxy[T(Z_t,Z_t')>d-\tmixs]^{1/2}}{\Pxy[Z_d=Z_d'=\alpha]^{1/2}}.\numberthis\label{eqn:tbound}
\end{align*}
The denominator is $\frac{1+o(1)}{n}$ for $d\ge \tmixs$.
By definition of $\tmeets=\tmeet(o(1/n^2))$, we have
\begin{align}
\Pxy[T(Z_t,Z_t')>\tmeets]&=o(1/n^2),
\end{align}
so that \eqref{eqn:tbound} is $o(1)$ if $d-\tmixs\ge\tmeets$.
Applying this to \eqref{eqn:decp} we thus obtain for $d\ge\tmeets+\tmixs$,
\begin{align}
\E[|U_{\alpha x}|^2|U_{\alpha y}|^2]&\ge \frac{1-o(1)}{3n^2},
\end{align}
proving \eqref{eqn:alphamix}.

The second inequality \eqref{eqn:alphamix2} follows by considering $U^T$ instead of $U$, since we took $\tmixs$ and $\tmeets$ to work for both the forward and reversed classical random walks.
\end{proof}

\subsection{Proof of Theorem~\ref{thm:lowerbound}}\label{subsec:proof-lower}

\begin{proof}[Proof of Theorem~\ref{thm:lowerbound}(i)]
Let the subsystem be $\Gamma\subset\{1,\ldots,n\}$ with size $|\Gamma|=k$, and let $P_\Gamma:\C^n\to\C^\Gamma$ be the $k\times n$ projection matrix onto the $k$ coordinates in $\Gamma$. Applying Lemma~\ref{lem:dec} to the equation \eqref{eqn:uut-alpha}, for $d\ge\tmeets+\tmixs$ and any $x,y$ we obtain
\begin{align}
\E|\langle x|UU^T|y\rangle|^2&=\E \sum_{\alpha=1}^n|U_{x\alpha}|^2|U_{y\alpha}|^2\ge \frac{1-o(1)}{3n}.
\end{align}
Then for $W_\Gamma=V_\Gamma V_\Gamma^\dagger$ with $V_\Gamma=P_\Gamma UU^T P_\Gamma^T$,
\begin{align*}
\E \Tr W_\Gamma=\E\|V_\Gamma\|_\hs^2&=|\Gamma|-\sum_{x\in\Gamma}\sum_{y\not\in\Gamma}\E|\langle x|UU^T|y\rangle|^2\\
&\le |\Gamma|-\sum_{x\in\Gamma}\sum_{y\not\in\Gamma}\frac{1-o(1)}{3n}\\
&\le |\Gamma|-\frac{k(n-k)}{3n}(1-o(1)).\numberthis
\end{align*}
Since the eigenvalues $\lambda_i$ of $W_\Gamma$ are between 0 and 1, for any $\ell\ge1$,
\begin{align}
\E \Tr W_\Gamma^\ell&=\sum_{i=1}^{|\Gamma|} \E\lambda_i^\ell \le \sum_{i=1}^{|\Gamma|} \E\lambda_i=\E \Tr W_\Gamma.
\end{align}
Then from \eqref{eqn:s2} we get for $d\ge\tmixs+\tmeets$,
\begin{align}
\E S_2(U)&\ge\frac{|\Gamma|(n-|\Gamma|)}{3n}\log(\cosh(2s))(1-o(1)).
\end{align}
\end{proof}

Next we prove Theorem~\ref{thm:lowerbound}(ii). 
This will be an application of the coupling method of \cite{oliveira2009convergence} which is related to the path coupling argument of \cite{bubley1997path}. We note that \cite{oliveira2009convergence}'s coupling result was also applied in the proofs of approximate unitary $t$-designs for qudit circuits in \cite{brandao2016local,haferkamp2022random}.
Section 5.2 of \cite{oliveira2009convergence} proves $L^2$ Wasserstein convergence for Kac's random walk \cite{kac1956foundations} extended to the unitary group $\U(n)$. 
This walk on $\U(n)$ is the walk which chooses two coordinate indices $i,j$ at random, then applies a random $2\times2$ Haar matrix on the subspace $\operatorname{span}(e_i,e_j)$. In terms of linear optical networks, we can think of Kac's random walk as a sequential circuit where in each step, we choose random modes $i,j$, and apply a random beamsplitter plus phaseshifters across those two modes. To apply the coupling method, one needs to show that the walk is locally contracting \emph{on average}. Due to the uniform random choice of $i,j$ in Kac's random walk, this property holds on average as shown in \cite[\S5.2]{oliveira2009convergence}.
However, for a specific non-random beamsplitter geometry, such as the brickwall geometry, this does not work---essentially this is because each mode $i$ can only interact with $M$ fixed other modes in a single step, rather than any mode $j$ as in Kac's walk.
As a result, we will have to pre-run our circuit until we obtain the decoupling property in Lemma~\ref{lem:dec}, and then we will construct a different coupling to show the local contraction. 

\begin{proof}[Proof of Theorem~\ref{thm:lowerbound}(ii)]
Recall we only work with the Hilbert--Schmidt norm. To prove $L^2$ Wasserstein convergence, we need an estimate on the diameter of $\U(n)$, plus a local contraction estimate of the form $\E[\|X-Y\|_\hs^2]\le (1-\kappa)r^2+O(r^3)$ for some coupling $(X,Y)$ of the chain and small distance $r=\|x-y\|_\hs$ between starting points $x$ and $y$.
The diameter $2\sqrt{n}$ for $\U(n)$ with Hilbert--Schmidt norm is immediate, so we just need the local contraction estimate. 
To obtain this estimate, we will need to work with more than just a single layer $U^{(i)}$. We will need to take a product $U=\prod_i U^{(i)}$ until $U$ is well spread out on average, in particular satisfying the decoupling property of Lemma~\ref{lem:dec}. 

Let $Q=\prod_{i=d}^1U^{(i)}$ with $d\ge \tmeets+\tmixs$ so that Lemma~\ref{lem:dec} applies.
A single step of our walk $L$ on $\U(n)$ will be an application of an independent copy of the depth-$(d+1)$ unitary $U:=U^{(d+1)}Q$, for a step $U^{(d+1)}$ independent of $Q$. Letting $P=K^{d+1}$ be the Markov transition kernel of $L$, the chain $L$ is said to be $\xi$-contracting in the $W_{\hs,2}$ metric if for all probability measures $\mu,\nu$,
\begin{align}
W_{\hs,2}(\mu P,\nu P)\le \xi W_{\hs,2}(\mu,\nu),
\end{align}
for some $\xi<1$. So we want to construct a coupling $(X,Y)$ of one step of the chain which satisfies an average $L^2$ local contraction estimate, i.e.~letting $L_z$ be the chain starting at $z\in\U(n)$, we want to construct a coupling $(X,Y)$ of $(L_x,L_y)$ such that $\E[\|X-Y\|_\hs^2]\le (1-\epsilon)r^2+O(r^3)$, where $r=\|x-y\|_\hs$.

Suppose $x,y\in\U(n)$ with $\|x-y\|_\hs=r$. 
Then as in \cite[\S5.2]{oliveira2009convergence}, $\|yx^\dagger-\id-h\|_\hs=O(r^2)$ for $h=\Pi_T(yx^\dagger-\id)$, with $\Pi_T$ the orthogonal projection onto the tangent space $T=T_{\id}(\U(n))=\{h\in M(n,\C):h=-h^\dagger\}$. This also implies $\|h\|_\hs=r+O(r^2)$.
Choose the coupling
\begin{align}
X=U^{(d+1)}V_1QV_2x,\quad Y=Uy,
\end{align}
where $V_2$ will be a diagonal unitary matrix depending only on $x,y$, and $V_1$ will be a diagonal unitary matrix depending only on $x,y,Q$. Since $U^{(d+1)}$ and $Q$ are made up of products of matrices which consist of independent $1\times1$ and $2\times 2$ Haar-random blocks, their distributions are invariant under multiplication by any fixed diagonal unitary matrix. 
We then see that $U^{(d+1)}V_1\dsim U^{(d+1)}$ after conditioning on $Q$ since $V_1$ depends on $Q$ but not on $U^{(d+1)}$, and also that $U^{(d+1)}V_1QV_2\vsim U^{(d+1)}Q=U$.

If $\sum_{j=1}^n|h_{jj}|^2\ge(\frac{1}{3}-\delta)\frac{1}{n}\|h\|_\hs^2$ for some $\delta>0$ to be chosen later, take $V_1=\id$ and write $\|X-Y\|_\hs^2=\|V_2-yx^\dagger\|_\hs^2$, using that the Hilbert--Schmidt norm is invariant under unitary multiplication. Then taking $V_2=\operatorname{diag}(e^{h_{jj}})_{j=1}^n$, which is unitary since $h$ is skew-hermitian so $h_{jj}$ is purely imaginary, we get
\begin{align*}
\|X-Y\|_\hs^2=\|V_2-yx^\dagger\|_\hs^2&=\|\operatorname{diag}(h_{jj})_{j=1}^n-h\|_\hs^2+O(r^3)\\
&=\|h\|_\hs^2-\sum_{j=1}^n|h_{jj}|^2+O(r^3)\\
&\le\left(1-\frac{1-3\delta}{3n}\right)\|h\|_\hs^2+O(r^3).\numberthis\label{eqn:diag1}
\end{align*}

If instead $\sum_{j=1}^n|h_{jj}|^2<(\frac{1}{3}-\delta)\frac{1}{n}\|h\|_\hs^2$, take $V_2=\id$ to write
\begin{align}
\|X-Y\|_\hs^2&=\|V_1-Q(yx^\dagger)Q^\dagger\|_\hs^2.
\end{align}
We took $Q$ to have a depth $d$ so that for most matrix structures of $yx^\dagger$, on average $h':=QhQ^\dagger$ has entries of roughly comparable size:
For any $\alpha$, write
\begin{align*}
\E|h'(\alpha,\alpha)|^2&=\sum_{j,j',k,k'}\E[Q_{\alpha j}h_{jj'}Q^\dagger_{j'\alpha}\bar Q_{\alpha k'}\bar h_{k'k}\bar Q^\dagger_{k\alpha}]\\
&=\sum_{j,j',k,k'}h_{jj'}\bar h_{k'k}\E[Q_{\alpha j}Q_{\alpha k}\bar Q_{\alpha j'}\bar Q_{\alpha k'}]\\
&=\sum_{j,k}(h_{jj}\bar h_{kk}+|h_{jk}|^2-\delta_{jk}|h_{jj}|^2)\E[|Q_{\alpha j}|^2|Q_{\alpha k}|^2],\numberthis
\end{align*}
where to get the third line we used that we must have $j=j'$ and $k=k'$, or $j=k'$ and $k=j'$ for the expectation in the second line to be nonzero, by the same argument used to obtain \eqref{eqn:uut-alpha}. 
Using decoupling Lemma~\ref{lem:dec} with depth $d\ge \tmeets+\tmixs$, then 
\begin{align*}
\E|h'(\alpha,\alpha)|^2&\ge \|h\|_\hs^2\frac{1-o(1)}{3n^2}-\sum_{j\ne k}|h_{jj}||h_{kk}|\E[|Q_{\alpha j}|^2|Q_{\alpha k}|^2]\\
&\ge \|h\|_\hs^2\frac{1-o(1)}{3n^2}-\left(\sum_{j, k}|h_{jj}|^2\E[|Q_{\alpha j}|^2|Q_{\alpha k}|^2]\right)^{1/2}\left(\sum_{j, k}|h_{kk}|^2\E[|Q_{\alpha j}|^2|Q_{\alpha k}|^2]\right)^{1/2}\\
&=\|h\|_\hs^2\frac{1-o(1)}{3n^2}-\sum_{j=1}^n|h_{jj}|^2\E[|Q_{\alpha j}|^2]\\
&\ge \|h\|_\hs^2\frac{\delta-o(1)}{n^2},\numberthis 
\end{align*}
where we have used the assumption $\sum_{j=1}^n|h_{jj}|^2<(\frac{1}{3}-\delta)\frac{1}{n}\|h\|_\hs^2$ and the property $\E[|Q_{\alpha j}|^2]=\P_j[Z_d=\alpha]=\frac{1+o(1)}{n}$ since the depth of $Q$ is $\ge\tmixs$ to obtain the last line.
Choose $V_1=\operatorname{diag}(e^{h'(j,j)})_{j=1}^n$, which is unitary since $h'=QhQ^\dagger$ is skew-hermitian, and which depends only on $x,y$, and $Q$. Then
\begin{align*}
\E\|X-Y\|_\hs^2&=\E\|\operatorname{diag}(h'(j,j))-h'\|_\hs^2+O(r^3)\\
&=\|h'\|_\hs^2-\sum_{j=1}^n\E|h'(j,j)|^2+O(r^3)\\
&\le \|h\|_\hs^2\left(1-\frac{\delta-o(1)}{n}\right)+O(r^3),\numberthis\label{eqn:diag2}
\end{align*}
where the $o(1)$ term is as $n\to\infty$, separate from the $O(r^3)$ term.
Taking $\delta=1/6$ to optimize \eqref{eqn:diag1} and \eqref{eqn:diag2} gives
\begin{align*}
\E\|X-Y\|_\hs^2&\le \left(1-\frac{1-o(1)}{6n}\right)r^2+O(r^3),\numberthis
\end{align*}
so the chain is $\sqrt{1-\frac{1-o(1)}{6n}}$-locally contracting.

Taking depth $d=\lceil\tmeets+\tmixs\rceil$ to use the estimates above, applying \cite[Theorem 3, Corollary 1]{oliveira2009convergence} with the diameter bound $2\sqrt{n}$ then gives
\begin{align}
\tau_2(\varepsilon)\le (\lceil\tmeets+\tmixs\rceil+1)\,\left\lceil 12n\log\left(\frac{2\sqrt{n}}{\varepsilon}\right)(1+o(1))\right\rceil.
\end{align}
\end{proof}

\section{Circuit complexity}\label{sec:complexity}
In this section, we prove Theorem~\ref{thm:complexity} on robust or approximate circuit complexity for random one-dimensional brickwall linear optical networks.

To prove part (i), we first observe that the constructions in \cite{reck1994experimental,clements2016optimal}, which give linear optical networks implementing any $n\times n$ unitary matrix using $n(n-1)/2$ beamsplitter-phaseshifter gates, imply the following:
Let $U$ be an $n\times n$ unitary matrix such that $U_{ij}=0$ if $i-j>w$, i.e. $U$ has zeros in the bottom left corner triangle of height and width $n-1-w$. Then $U$ can be exactly expressed using $nw-\frac{w(w+1)}{2}$ nearest-neighbor beamsplitter-phaseshifter gates.
This follows from the constructions in \cite{reck1994experimental,clements2016optimal}, which iteratively use a beamsplitter-phaseshifter gate to zero out an entry in the lower left of $U$, starting from the bottom-left corner and proceeding either along rows or along diagonals to zero out a growing region of entries below the main diagonal. In the above case, we start with a lower-left triangle of zeros, so can skip the gates corresponding to those entries and only zero out the $nw-\frac{w(w+1)}{2}$ nonzero entries below the diagonal.

In the case where $U$ is an $n\times n$ depth-$d$ random 1D brickwall linear optical unitary, the matrix $U$ has band width $\Theta(d)$, but only an effective band width $w=O(\sqrt{d\log d})$ with high probability (cf. Section~\ref{sec:avgcase}). We would like to ignore the small entries of $U$ outside this effective band width $w$, and apply the above observation using \cite{reck1994experimental,clements2016optimal} with only band width $w=O(\sqrt{d\log d})$.
However, due to the small but nonzero entries which we ignore in this procedure, we will need to keep track of their cumulative effects to avoid large accumulation of error bounds.

\begin{proof}[Proof of Theorem~\ref{thm:complexity}(i)]
Let $U$ be an $n\times n$ depth-$d$ random 1D brickwall linear optical unitary. 
By Theorem~\ref{thm:boson-rw} and the large deviation estimate \eqref{eqn:Zdalpha},
\begin{align}
\E|\langle x|U|y\rangle|^2&=\P_y[Z_d=x]\le Ce^{-c_1(x-y)^2/d},
\end{align}
which by Markov's inequality implies
\begin{align}
\P[|\langle x|U|y\rangle|>e^{-c_1(x-y)^2/(4d)}]&\le Ce^{-c_1(x-y)^2/(2d)}.
\end{align}
Then for any $\kappa>0$, there is $C_\kappa$ so that with probability at least $1-Cn^2d^{-\kappa}$, we have
\begin{align}\label{eqn:sqrtband}
|\langle x|U|y\rangle|\le d^{-\kappa}\,\text{ for all }|x-y|>C_\kappa\sqrt{d\log d}. 
\end{align}
Let $\varepsilon:=\sqrt{n}d^{-\kappa}$, chosen so the $\ell^2$ norm of the entries outside the $C_\kappa\sqrt{d\log d}$-band in any row is $\le\varepsilon$ in the event above.
To construct a $\tilde U$ with $O(n\sqrt{d\log d})$ gates which approximates $U$, we will essentially ignore the entries of $U$ outside the $C_\kappa\sqrt{d\log d}$-band, using that these are small by \eqref{eqn:sqrtband}.

To this end, perform the zeroing algorithm from \cite{reck1994experimental}, but only zero out entries below the diagonal of $U$ which are inside the effective $w=C_\kappa\sqrt{d\log d}$-band. We start from the bottom row and left-most entry in the effective band and move to the right, always using the matrix entry one column to the right to zero out the current entry $(r,m)$.
Zeroing out the entry at $(r,m)$ is performed by multiplication by an appropriate complex rotation on the $m,m+1$-coordinate plane, which we will denote by $T_{r,m}$ and which represents a beamsplitter-phaseshifter applied to modes $m$ and $m+1$. This mixes the $m$th and $(m+1)$th columns of the current matrix.
The \cite{reck1994experimental} procedure then gives
\begin{align}\label{eqn:reck}
UT_{n,n-w}\cdots T_{n,n-1}T_{n-1,n-w-1}\cdots T_{n-1,n-2}\cdots T_{3,1}T_{3,2} T_{2,1} =: \tilde{D},
\end{align}
which involves $nw-\frac{w(w+1)}{2}$ nearest-neighbor beamsplitter-phaseshifter gates for $w=C_\kappa\sqrt{d\log d}$. 
(For later use, note that each row $r=n-j$ requires at most $w$ applications of $T_{r,m}$'s, whose applications can only affect columns $r-w$ through $r$.)
Unlike the \cite{reck1994experimental} procedure which produces a diagonal unitary matrix on the right of \eqref{eqn:reck}, the unitary matrix $\tilde D$ is in general not diagonal, since we did not zero out the entries of $U$ outside the $C_\kappa\sqrt{d\log d}$-band, and also some of those entries can mix with entries that were previously zeroed out.
We will however show $\tilde D$ is close to diagonal; in particular, we claim
\begin{align}\label{eqn:close-diag}
|\tilde D_{n-j,n-j}|^2&\ge 1-(j+1)\varepsilon^2,\quad j=0,\ldots,n-1.
\end{align}

To show this, we will show the entries off of the diagonal in each row are small. First we observe that the entries to the left of the diagonal in any row satisfy $\sum_{i=1}^{n-j-1}|\tilde D_{n-j,i}|^2\le\varepsilon^2$: 
Right after we zero out the effective $C_\kappa\sqrt{d\log d}$-band in row $r=n-j$, the entries to the left of the diagonal in row $r$ have total $\ell^2$ norm $\le\varepsilon$, since they are zero in the $w=C_\kappa\sqrt{d\log d}$-band and have total $\ell^2$ norm $\le\varepsilon$ outside the $w$-band (these entries outside the $w$-band have not yet been acted on by any $T_{r',m}$).
Subsequent applications of $T_{r',m}$'s to zero out the $w$-band in higher rows $<r$ only affect columns with index $<r$, so they preserve the $\ell^2$ norm\footnote{note the subsequent $T_{r',m}$ may overwrite previously zeroed out entries, but the total $\ell^2$-norm mass to move around in each row is $\le\varepsilon$}, which is $\le\varepsilon$, of entries to the left of the diagonal in row $r$.

Using row normalization and the above observation, we see that for any $\ell=0,\ldots,n-1$,
\begin{align}
|\tilde D_{n-\ell,n-\ell}|^2&\ge 1-\varepsilon^2-\sum_{i=1}^\ell|\tilde D_{n-\ell,n-\ell+i}|^2.
\label{eqn:dmatrix-diag}
\end{align}
For row $r=n-j$, we can estimate (as illustrated in Figure~\ref{fig:diagcancel})
\begin{align*}
\sum_{i=1}^j|\tilde D_{n-j,n-j+i}|^2&\le \sum_{i=1}^{j-1}\left[1-\sum_{k=1}^i|\tilde D_{n-j+k,n-j+i}|^2\right]+|\tilde D_{n-j,n}|^2\quad\text{[by column normalization]}\\
&\le \sum_{i=1}^{j-1}\left[\varepsilon^2+\sum_{i'=1}^{j-i}|\tilde D_{n-j+i,n-j+i+i'}|^2-\sum_{k=1}^{i-1}|\tilde D_{n-j+k,n-j+i}|^2\right]+|\tilde D_{n-j,n}|^2,\numberthis\label{eqn:diagcancel}
\end{align*}
using the diagonal estimate \eqref{eqn:dmatrix-diag} on the $k=i$ ($\le j-1$) terms.
Comparing the terms in the sums over $i,i'$ and $i,k$, we see all terms cancel except for the $i'=j-i$ terms, which gives
\begin{align}
\sum_{i=1}^j|\tilde D_{n-j,n-j+i}|^2&\le (j-1)\varepsilon^2+\sum_{i=1}^{j-1}|\tilde D_{n-j+i,n}|^2+|\tilde D_{n-j,n}|^2\le j\varepsilon^2,
\end{align}
since $|\tilde D_{n,n}|^2\ge1-\epsilon^2$ by \eqref{eqn:dmatrix-diag} and so the $\ell^2$ norm of the rest of the entries in the $n$th column must be $\le\varepsilon$. This proves \eqref{eqn:close-diag}. 

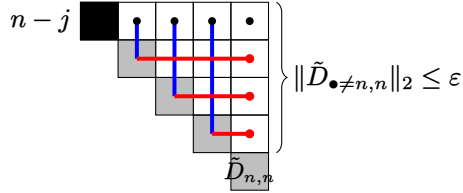
\begin{figure}[htb]
\begin{tikzpicture}[scale=.5]
\def\size{4}
\def\lw{1.5}
\foreach \row in {0,...,\size}{
\foreach \col in {0,...,\row}{
	\ifnum \col<\row
		\draw (-\col,\row) rectangle ++(1,1);
	\else
		\filldraw[fill=gray!50] (-\col,\row) rectangle ++(1,1); % gray diagonal
	\fi
}
}
\foreach \row in {2,...,\size}{\draw[color=blue, line width=\lw] (-\row+1.5,\size+.5) --++ (0,\row-\size-1);}
\foreach \row in {1,...,\numexpr\size-1\relax}{
\draw[color=red, line width=\lw] (-\row+.5,\row+.5) --++ (\row,0);
\filldraw[color=red] (.5,\row+.5) circle (3pt);
}

\foreach \col in {1,...,\size}{
\fill[color=black] (-\col+1.5,\size+.5) circle (3pt); % draw dots for top row
}
\filldraw[black] (-\size,\size) rectangle ++(1,1);
\node at (.5,.5) {{\small $\tilde D_{n,n}$}};
\node[left] at (-\size,\size+.5) {$n-j$};
\draw[decoration={brace,raise=3pt,aspect=.5,amplitude=4pt,mirror},decorate] (1,1)--(1,\size+1);
\node[right,yshift=.5cm, xshift=.2cm] at (1,{\size/2}) {$\|\tilde{D}_{\bullet\ne n,n}\|_2\le\varepsilon$};
\end{tikzpicture}
\caption{Illustration of the bounds and cancellation in \eqref{eqn:diagcancel} for $j=4$. The squares represent matrix elements of $\tilde{D}$, with the diagonal entries of $\tilde{D}$ shown as shaded squares. The bound using column orthonormalization in the first line of \eqref{eqn:diagcancel} is represented in blue vertical lines, and the application of the diagonal estimate \eqref{eqn:dmatrix-diag} on the $k=i$ term is shown in red horizontal lines. After canceling terms, the remaining entries involving $\tilde D$ are all in the last column of $\tilde D$, which, excluding the $(n,n)$ entry, has total $\ell^2$-norm mass at most $\varepsilon$.}\label{fig:diagcancel}
\end{figure}

By applying phaseshifters, we may assume $\tilde D$ has real, nonnegative diagonal entries. 
For a unitary matrix $\tilde D$ with nonnegative diagonal and which satisfies \eqref{eqn:close-diag}, we have
\begin{align}
\|\tilde D-I\|_\hs^2
&\le 2\sum_{j=0}^{n-1}(j+1)\varepsilon^2
=O(n^2\varepsilon^2).
\end{align}
Thus taking $\tilde U=(T_{n,n-w}\cdots T_{n,n-1}T_{n-1,n-w-1}\cdots T_{n-1,n-2}\cdots T_{2,1}\Phi)^{-1}$ in \eqref{eqn:reck}, where $\Phi$ represents phaseshifters so that $\tilde D$ has nonnegative diagonal, we obtain
\begin{align}
\|U-\tilde U\|_\hs &=\|\tilde D\tilde U-\tilde U\|_\hs =\|\tilde D-I\|_\hs=O(n\varepsilon),
\end{align}
which proves \eqref{eqn:u2}.

The consequences \eqref{eqn:diamond} and \eqref{eqn:fidelity} follow from Eqs.~(2) and (4) of \cite{becker2021energy}. Given a Hamiltonian $H$ with ground state energy $0$ and an $E>0$, the energy-constrained (EC) diamond norm \cite{shirokov2018energy,winter2017energy} for a superoperator $\mathscr L_A$ which acts on trace-class operators and preserves self-adjointness, is
\begin{align}\label{eqn:ec-diamond}
\|\mathscr L_A\|_\diamond^{H,E}&:=\sup_{\Psi_{AA'}:\Tr\Psi_AH_A\le E}\|(\mathscr L_A\otimes \operatorname{Id}_{A'})(\Psi_{AA'})\|_1,
\end{align}
where $\|\cdot\|_1$ denotes the trace norm and $\Psi_{AA'}$ is a state on $AA'$ for an ancilla $A'$, and $\Psi_A$ is the reduced state on $A$.
Then \cite{becker2021energy} shows that for unitary channels $\mathscr U(\cdot)=\mathcal U(\cdot)\mathcal U^\dagger$ and $\tilde{\mathscr U}(\cdot)=\tilde{\mathcal U}(\cdot)\tilde{\mathcal U}^\dagger$,
\begin{align}
\|\mathscr U-\tilde{\mathscr{U}}\|_\diamond^{H,E}&=2\sqrt{1-\inf_{\langle\psi|H|\psi\rangle\le E}|\langle\psi|\mathcal U^\dagger\tilde{\mathcal U}|\psi\rangle|^2},
\end{align}
and also that for $N=\sum_{j=1}^n a_j^\dagger a_j$ the total photon number Hamiltonian and $\mathscr U$ and $\tilde{\mathscr U}$ corresponding to passive Gaussian unitaries,
\begin{align}
\|\mathscr U-\tilde{\mathscr U}\|_\diamond^{N,E}&\le C\sqrt{(c+n)(E+1)}\sqrt{\|U-\tilde U\|_\hs},
\end{align}
for explicit numerical constants $C,c$. With \eqref{eqn:u2}, this immediately implies \eqref{eqn:diamond} and \eqref{eqn:fidelity}.
\end{proof}

To prove Theorem~\ref{thm:complexity}(ii), recall that by \cite{reck1994experimental,clements2016optimal}, any $n\times n$ unitary matrix can be implemented exactly using $n(n-1)/2$ beamsplitter-phaseshifters in a 1D brickwall geometry. 
We first verify that with high probability, this order $\Theta(n^2)$ of gates is essentially (up to a logarithmic factor) required for implementing Haar random unitaries approximately in Hilbert--Schmidt norm.

\begin{lem}[Haar approximate circuit complexity]\label{lem:haar-complexity}
Let $\mathcal H_n$ denote the normalized Haar measure on $\U(n)$, and let $0<\delta'<1/2$.
Then there are constants $C,c_1,c_2>0$ such that for any $\delta\le(1/2-\delta')\sqrt{n}$,
\begin{align}
\underset{U\dsim\mathcal H_n}{\P}[\exists V\text{ with $c_1n^2/\log n$ gates s.t. }\|V-U\|_\hs\le\delta]&\le Ce^{-c_2n^2}.
\end{align}
\end{lem}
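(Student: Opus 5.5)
The plan is a standard volume (counting) argument: cover the set of all unitaries implementable with $G=c_1n^2/\log n$ gates by few Hilbert--Schmidt balls, and compare with the Haar measure of a ball of radius about $\sqrt{n}/2$.

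\emph{Step 1: discretize circuits with few gates.} A circuit with $G$ two-mode gates is determined by a choice of mode pair for each gate, at most $n^{2G}$ choices, together with $G$ elements of $\U(2)$. Cover $\U(2)$ by an $\eta$-net (in operator norm) of size at most $(C/\eta)^{4}$. Replacing each gate by its nearest net element changes the product by at most $G\eta$ in operator norm, by the telescoping bound $\|A_1\cdots A_G-B_1\cdots B_G\|\le\sum\|A_i-B_i\|$. This is at most $\sqrt{2}G\eta$ in Hilbert--Schmidt norm, since the difference of two $2$-mode gates has rank $\le2$; in fact each term in the telescoping sum has rank $\le2$, so the whole difference has Hilbert--Schmidt norm $\le\sqrt2G\eta$. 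Choose $\eta=\delta'/(2\sqrt2 G)$, so the error is $\le\delta'\sqrt{n}/2$, or even a constant. The number of discretized circuits is then at most
\[
N\le n^{2G}(CG/\delta')^{4G}=\exp\bigl(O(G\log n)\bigr)=\exp\bigl(O(c_1)n^2\bigr).
\]
Hence if $U$ is $\delta$-close to some $G$-gate circuit, it lies within HS distance $\delta+\delta'\sqrt n/2\le(1/2-\delta'/2)\sqrt n$ of one of these $N$ points.

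\emph{Step 2: Haar measure of a HS ball of radius $r=(1/2-\delta'/2)\sqrt n$.} By left invariance, it suffices to bound $\P_{U\sim\mathcal H_n}[\|U-W\|_\hs\le r]$ uniformly in fixed $W\in\U(n)$. Note $\|U-W\|_\hs^2=2n-2\Re\Tr(W^\dagger U)$, and $W^\dagger U$ is Haar. So the event becomes $\Re\Tr U\ge n-r^2/2\ge n(1-(1/2-\delta'/2)^2/2)\ge n/2$, say, i.e.\ $\Re \Tr U\ge cn$ for a fixed $c>0$. The function $U\mapsto\Re\Tr U$ is $\sqrt n$-Lipschitz in HS norm and has Haar mean $0$. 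Applying the concentration of Haar measure on $\U(n)$ from Appendix~\ref{sec:concentration} (sub-Gaussian with variance proxy of order $L^2/n$) gives
\[
\P[\Re\Tr U\ge cn]\le C\exp\bigl(-c'n\,(cn)^2/(\sqrt n)^2\bigr)=Ce^{-c''n^2}.
\]

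\emph{Step 3: union bound.} The total probability is at most $N\cdot Ce^{-c''n^2}\le C\exp\bigl((O(c_1)-c'')n^2\bigr)$. Choosing $c_1$ small relative to $c''$, which depends on $\delta'$, yields $Ce^{-c_2n^2}$.

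The main point to check is Step 2: the concentration inequality must deliver the $n^2$ exponent, which requires the $1/n$ variance proxy for Lipschitz functions on $\U(n)$ (equivalently the log-Sobolev constant of order $1/n$, e.g.\ by Meckes--Meckes, possibly passing through $\mathrm{SU}(n)$ and handling the determinant phase). The gate-discretization count in Step 1 is routine. The $\log n$ loss in the gate count comes precisely from the $n^{2G}$ choice of mode pairs and the $\log G$ from the net.
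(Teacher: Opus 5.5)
Your argument is correct and is essentially the paper's proof: an $\epsilon$-net count of circuits with $c_1n^2/\log n$ gates (giving $\exp(O(c_1n^2))$ candidates), a union bound, and Haar concentration with variance proxy of order $L^2/n$ from Theorem~\ref{thm:haar-concentration}. The only difference is in the ball estimate. You use $\|U-W\|_\hs^2=2n-2\Re\Tr(W^\dagger U)$ and apply concentration to the mean-zero, $\sqrt n$-Lipschitz function $\Re\Tr(W^\dagger U)$, whereas the paper applies it to the $1$-Lipschitz function $\|V'-U\|_\hs$ after the cruder mean bound $\E\|V'-U\|_\hs\ge\sqrt n/2$. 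Your version is slightly cleaner, and it would even allow $\delta$ up to $(1-\delta')\sqrt{2n}$.
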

\begin{proof}
This will follow from concentration of Haar measure on $\U(n)$ combined with the counting argument of \cite{shannon1949synthesis} and \cite[\S6]{chen2024incompressibility}.
For the counting argument, following \cite[\S6]{chen2024incompressibility}, let $G_\epsilon$ denote an $\epsilon$-net of $\U(2)$ in Hilbert--Schmidt norm, i.e. for any $u\in\U(2)$ there is $g\in G_\epsilon$ such that $\|u- g\|_\hs\le\epsilon$. Since $\U(2)$ is compact and has dimension 4, for bounded $\epsilon\le O(1)$ we can construct $G_\epsilon$ with $\le C\varepsilon^{-4}$ elements. 
Let $M_{G_{\epsilon},R}$ denote all $n\times n$ linear optical unitaries which can be built from at most $R$ all-to-all gates taken from $G_\epsilon$.
Then
\begin{align}
|M_{G_{\epsilon},R}|\le \left(\binom{n}{2}|G_{\epsilon}|\right)^R\le (Cn^2\varepsilon^{-4})^R.
\end{align}

The $\epsilon$-net property implies for any $V$ constructed from $R$ gates, there is $V'\in M_{G_\epsilon,R}$ such that $\|V-V'\|_\hs\le R\epsilon$.
In particular, taking $\epsilon=\frac{\delta'\sqrt{n}}{2R}$, we then have
\begin{align*}
\P_U[\exists V\text{ with $R$ gates s.t. }\|V-U\|_\hs\le\delta]&\le \P_U\bigg[\exists V'\in M_{G_{\epsilon},R}\text{ s.t. }\|V'-U\|_\hs\le\delta+\frac{\delta'\sqrt{n}}{2}\bigg]\\
&\le |M_{G_{\epsilon},R}|\sup_{V'}\P_U[\|V'-U\|_\hs\le{\sqrt{n}}(1-\delta')/2].\numberthis\label{eqn:rprob}
\end{align*}
For a fixed unitary $V'$ and unitary $U$, the function $U\mapsto \|V'-U\|_\hs$ is $1$-Lipschitz and bounded from above by $2\sqrt{n}$.
For $U$ random with $\E U_{ij}=0$, we can estimate
\begin{align*}
\E\|V'-U\|_\hs \ge \frac{1}{2\sqrt{n}}\E\|V'-U\|_\hs^2
&=\frac{1}{2\sqrt{n}}\sum_{i,j=1}^n\E|V'_{ij}-U_{ij}|^2\\
&\ge \frac{1}{2\sqrt{n}}\sum_{i,j=1}^n\E|U_{ij}|^2=\frac{\sqrt{n}}{2},\numberthis\label{eqn:meanhs}
\end{align*}
where we used that $\E|U_{ij}-V'_{ij}|^2\ge\E|U_{ij}-\E U_{ij}|^2=\E|U_{ij}|^2$.
Then using concentration of Haar measure on $\U(n)$ (Theorem~\ref{thm:haar-concentration}), we obtain for any unitary $V'$,
\begin{align*}
\underset{U\dsim\mathcal H_n}{\P}[\|V'-U\|_\hs\le {\sqrt{n}}(1-\delta')/2]&\le \underset{U\dsim\mathcal H_n}{\P}[\E\|V'-U\|_\hs-\|V'-U\|_\hs\ge \delta'\sqrt{n}/2]\\
&\le e^{-cn(\delta'\sqrt{n}/2)^2}= e^{-c'n^2}.\numberthis\label{eqn:haar-exp}
\end{align*}
Inserting the above in \eqref{eqn:rprob} with $R=c_1n^2/\log n$ for an appropriate (small) choice of $c_1$ completes the proof.
\end{proof}

We can obtain a similar property for random finite-depth circuits which converge in Wasserstein distance to Haar measure.
\begin{proof}[Proof of Theorem~\ref{thm:complexity}(ii)]
We follow the same argument as in the proof of Lemma~\ref{lem:haar-complexity}, but we need to replace \eqref{eqn:haar-exp} with an analogous bound for $U$ a depth-$d$ random circuit. Since \eqref{eqn:haar-exp} is a tail bound, we will do this by comparing moments of $\|V'-U\|_\hs$, for a fixed unitary $V'$, using the Wasserstein convergence of $U$ to Haar measure.
We let $\mathcal L(U)$ denote the law of the finite-depth random $U$, and denote expectations and probabilities with respect to this measure using $\E$ and $\P$. The notation $\E_{\mathcal H_n}$ will denote expectation with respect to Haar measure $\mathcal H_n$ on $\U(n)$.

Since $\|V'-U\|_\hs$ is bounded from above by $2\sqrt{n}$ for unitary $V'$ and $U$, we see from the dual characterization \eqref{eqn:w1} of $L^1$ Wasserstein distance, that for $W_{\hs,1}(\mathcal L(U),\mathcal{H}_n)\le\varepsilon$,
\begin{align}
\left|e^{t\E\|V'-U\|_\hs}-e^{t\E_{\mathcal H_n}\|V'-U\|_\hs}\right|&\le \varepsilon t e^{2t\sqrt{n}},
\end{align}
using that $x\mapsto e^{tx}$ is $te^{2t\sqrt{n}}$-Lipschitz on $[0,2\sqrt{n}]$ and that $U\mapsto\|V'-U\|_\hs$ is $1$-Lipschitz.
For $t>0$, the map $U\mapsto e^{-t\|V-U\|_\hs}$ is $t$-Lipschitz, so for $W_1(\mathcal L(U),\mathcal{H}_n)\le\varepsilon$, we can compare the moment-generating functions for $t>0$ as
\begin{align}\label{eqn:mgf}
\left|\E[e^{t(\E\|V'-U\|_\hs-\|V'-U\|_\hs)}] - \E_{\mathcal H_n}[e^{t(\E_{\mathcal H_n}\|V'-U\|_\hs-\|V'-U\|_\hs)}]\right|&\le 2\varepsilon te^{2t\sqrt{n}}.
\end{align}
For $U$ Haar random, we know from concentration of Haar measure (Theorem~\ref{thm:haar-concentration}) that $\|V'-U\|_\hs$ is subgaussian with variance proxy $\sigma^2=6/n$, and so has moment-generating function bounded as $\E_{\mathcal H_n}[e^{t(\E_{\mathcal H_n}\|V'-U\|_\hs-\|V'-U\|_\hs)}]\le e^{ct^2/n}$. We will take $t=\delta' n^{3/2}/(4c)$ and $\varepsilon=\frac{1}{t}e^{ct^2/n-2t\sqrt{n}}=Cn^{-3/2}e^{-c''n^2}$ in \eqref{eqn:mgf}. 
By Theorem~\ref{thm:lowerbound}(ii), an appropriate constant in the depth condition \eqref{eqn:complexity-depth} implies $W_1(\mathcal L(U),\mathcal{H}_n)\le\varepsilon$ for such $\varepsilon$.
Then using that \eqref{eqn:meanhs} still holds since $\E U_{ij}=0$, followed by exponential Markov inequality and \eqref{eqn:mgf}, gives
\begin{align*}
\P[\|V'-U\|_\hs\le\sqrt{n}(1-\delta')/2]&\le \P[\E\|V'-U\|_\hs-\|V'-U\|_\hs
\ge \delta'\sqrt{n}/2]\\
&\le e^{-t\delta'\sqrt{n}/2}\E \big[e^{t(\E\|V'-U\|_\hs-\|V'-U\|_\hs)}\big]\\
&\le e^{-t\delta'\sqrt{n}/2} (2t\varepsilon e^{2t\sqrt{n}}+e^{ct^2/n})=3e^{-c'n^2},\numberthis
\end{align*}
for a $c'>0$, which implies \eqref{eqn:complex2} by the same argument as in the proof of Lemma~\ref{lem:haar-complexity}.
\end{proof}

\appendix

\section{Proof of Theorem~\ref{thm:nonrev}}\label{appendix:sym}

We follow the proof of \cite[Theorems 7.2, 7.3]{chung2005laplacians} to obtain Theorem~\ref{thm:nonrev} on bounding the rate of convergence of a nonreversible Markov chain in terms of its additive symmetrization which has transition matrix given by $\frac{P+P^T}{2}$.

Let $P$ be the (nonreversible) doubly stochastic transition matrix with holding probabilities $P_{xx}\ge\alpha$ for all $x$ and stationary vector $\phi=(1/n,\ldots,1/n)$.
We can write $P=\alpha I+(1-\alpha)Q$ for some doubly stochastic transition probability matrix $Q$, and then $L:=I-\frac{P+P^T}{2}=(1-\alpha)\left[I-\frac{Q+Q^T}{2}\right]$. Let $0=\lambda_0\le\lambda_1\le\cdots\le\lambda_{n-1}$ be the eigenvalues of $L$, so that $\lambda_1$ is the spectral gap of the symmetrized walk described by $\frac{P+P^T}{2}$. We can first estimate for any vector $f\in\C^n$,
\begin{align*}
\frac{\|f P\|_2^2}{\|f\|^2}=\frac{\langle f|PP^T|f\rangle}{\|f\|^2}&=\frac{\langle f|(\alpha+(1-\alpha)Q)(\alpha+(1-\alpha)Q^T)|f\rangle}{\|f\|^2}\\
&=\alpha^2+(1-\alpha)^2\frac{\langle f|QQ^T|f\rangle}{\|f\|^2}-2\alpha(1-\alpha)\langle f|I-(Q+Q^T)/2|f\rangle+2\alpha(1-\alpha)\\
&\le 1-2\alpha\langle f|I-(P+P^T)/2|f\rangle,
\end{align*}
where we used that $\|Q\|=1$ since it is doubly stochastic.
The zero eigenvector of $I-\frac{P+P^T}{2}$ is the stationary vector $\phi=(1/n,\ldots,1/n)$, so we can write $\lambda_1=\inf\limits_{f:\,\sum_x f(x)\phi(x)=0}\frac{\langle f,Lf\rangle}{\|f\|_2^2}$. Then
for $\langle f|\phi\rangle=\sum_x f(x)\phi(x)=0$, 
\begin{align}\label{eqn:lambda1}
\frac{\|fP\|_2^2}{\|f\|_2^2}\le1-2\alpha\lambda_1.
\end{align}
Let $d_2(t):=\max_y\left(\sum_x\frac{(P^t(y,x)-\phi(x))^2}{\phi(x)}\right)^{1/2}$, and note that $\max_y\|P^t(y,\cdot)-\phi\|_\TV\le\frac{1}{2}d_2(t)$ by the Cauchy--Schwarz inequality. We have
\begin{align*}
d_2(t)^2=n\max_y\|\delta_yP^t-\phi\|_2^2&=n\max_y\|(\delta_y-\phi)P^t\|_2^2.
\end{align*}
Since $\langle\delta_y-\phi|\phi\rangle=0$, applying \eqref{eqn:lambda1} $t$ times followed by the estimate $\|\delta_y-\phi\|_2\le1$ yields
\begin{align*}
d_2(t)&\le(1-2\alpha\lambda_1)^{t/2}\sqrt{n}\le e^{-t\alpha\lambda_1+\frac{1}{2}\log n}.\numberthis
\end{align*}
\qed

\section{Lipschitz constant and concentration}\label{sec:concentration}

In this section we review some standard estimates, and apply concentration of measure to answer a question mentioned in \cite{iosue2023page} concerning the fully Haar random case.
\begin{lem}\label{lem:lipschitz}
For any subsystem $\Gamma$ of size $k$, the R\'enyi-2 entropy $S_2(U)$ is $2\sqrt{k}\sinh^2(2s)$-Lipschitz with respect to the Hilbert--Schmidt (or Frobenius) norm on $n\times n$ matrices.
\end{lem}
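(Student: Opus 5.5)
Write $S_2(U)=\frac12\Tr\log\sigma(U)$ with $\sigma(U)=\cosh(2s)I+\sinh(2s)M(U)$, where $M$ is the real symmetric $2k\times 2k$ matrix from \eqref{eqn:mmat}. The plan is to bound the derivative along a path and integrate, rather than work with the series \eqref{eqn:s2}.

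\textbf{Spectral bounds.} First I will show that $\sigma(U)\ge e^{-2s}I$ in the positive semidefinite order. The matrix $M$ is the block matrix built from the complex symmetric $k\times k$ matrix $A=P\bar U\bar U^TP^T$, and $\|A\|_{op}\le1$. The matrix $\begin{pmatrix}\Re A&\Im A\\\Im A&-\Re A\end{pmatrix}$ is unitarily equivalent to the Hermitian matrix $\begin{pmatrix}0&A\\A^\dagger&0\end{pmatrix}$. Its eigenvalues are therefore $\pm$ the singular values of $A$, so $\|M\|_{op}\le1$. This gives $\sigma\ge(\cosh2s-\sinh2s)I=e^{-2s}I$, hence $\|\sigma^{-1}\|_{op}\le e^{2s}$.

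\textbf{Differentiating along a path.} For $U,V\in\U(n)$, set $A_t=(1-t)A(U)+tA(V)$, or more simply interpolate $M$ linearly. Because the bound $\|M\|_{op}\le1$ is preserved under convex combinations, $\sigma_t\ge e^{-2s}I$ holds along the whole path. Then
\[
\frac{d}{dt}S_2=\tfrac12\sinh(2s)\Tr\big(\sigma_t^{-1}(M(V)-M(U))\big),
\]
and Cauchy--Schwarz gives
\[
|S_2(U)-S_2(V)|\le\tfrac12\sinh(2s)\sup_t\|\sigma_t^{-1}\|_{\hs}\,\|M(U)-M(V)\|_{\hs}.
\]
Two estimates remain.

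\begin{enumerate}[(a)]
\item For the first factor, $\|\sigma_t^{-1}\|_\hs\le\sqrt{2k}\,e^{2s}$.
\item For the second factor, $\|M(U)-M(V)\|_\hs=\sqrt2\,\|A(U)-A(V)\|_\hs$. Writing $\bar U\bar U^T-\bar V\bar V^T=\bar U(\bar U-\bar V)^T+(\bar U-\bar V)\bar V^T$ gives $\|A(U)-A(V)\|_\hs\le2\|U-V\|_\hs$.
\end{enumerate}

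Combining these yields a Lipschitz constant of $2\sqrt{k}\sinh(2s)e^{2s}$.

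\textbf{Main obstacle: the constant.} This bound has the right $\sqrt{k}$ dependence, but the stated constant is $2\sqrt k\sinh^2(2s)$. The two differ for small $s$: $\sinh(2s)e^{2s}\sim2s$, whereas $\sinh^2(2s)\sim4s^2$. Matching the stated constant is the part I expect to be hardest. I would first try the form \eqref{eqn:s2-tanh},
\[
S_2=k\log\cosh2s+\tfrac12\Tr\log\big(1-\tanh^2(2s)W\big),\qquad W=VV^\dagger,\ V=A^T .
\]
The derivative is then
\[
-\tfrac12\tanh^2(2s)\Tr\big((1-\tanh^2W)^{-1}\,dW\big).
\]
For the resolvent, $\|(1-\tanh^2(2s)W)^{-1}\|_{op}\le\cosh^2(2s)$, since $\|W\|_{op}\le1$. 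Along a linear path in $V$ we have $\|dW\|_\hs\le2\|dV\|_\hs$ and $\|dV\|_\hs\le2\|U-V\|_\hs$. Using Cauchy--Schwarz with $\|(1-\tanh^2W)^{-1}\|_\hs\le\sqrt k\cosh^2(2s)$ then gives
\[
\tfrac12\tanh^2(2s)\cosh^2(2s)\cdot\sqrt k\cdot4=2\sqrt k\sinh^2(2s),
\]
which is exactly the stated constant. The careful point is to interpolate in $V$ so that $\|V_t\|_{op}\le1$ holds along the path, which follows from convexity of the operator-norm ball. With that in place, this route should yield $2\sqrt{k}\sinh^2(2s)$.
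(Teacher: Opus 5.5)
Your second route is correct and gives exactly $2\sqrt{k}\sinh^2(2s)$. The first, $\sigma$-based route can be dropped, since it only yields the weaker constant $2\sqrt{k}\sinh(2s)e^{2s}$.

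\textbf{How the paper proceeds.} The paper starts from the same formula \eqref{eqn:s2-tanh} but does not differentiate along a path. It writes $S_2(U)=k\log\cosh(2s)+g(\lambda_1,\dots,\lambda_k)$ with $g(x)=\frac12\sum_j\log(1-\tanh^2(2s)x_j)$, where the $\lambda_j$ are the eigenvalues of $VV^\dagger$. It then notes that $g$ is $\frac{\sqrt{k}}{2}\sinh^2(2s)$-Lipschitz on $[0,1]^k$. This is transferred to the matrix level with the Hoffman--Wielandt inequality applied to $VV^\dagger$ and $V'V'^\dagger$. The proof finishes with the same two factors of $2$ you use: $\|VV^\dagger-V'V'^\dagger\|_\hs\le 2\|V-V'\|_\hs$ and $\|V-V'\|_\hs\le 2\|U-U'\|_\hs$.

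\textbf{How your route differs.} You replace Hoffman--Wielandt by three ingredients: $\frac{d}{dt}\Tr\log X_t=\Tr(X_t^{-1}\dot X_t)$, the resolvent bound $\|(1-\tanh^2(2s)W_t)^{-1}\|_{\mathrm{op}}\le\cosh^2(2s)$, and trace Cauchy--Schwarz. Your bound $\sqrt{k}\cosh^2(2s)$ on $\|(1-\tanh^2(2s)W_t)^{-1}\|_\hs$ plays exactly the role of the gradient bound for $g$. The one thing you must supply that the paper does not need is that the path stays where the resolvent bound holds. You handle this correctly by interpolating linearly in $V$ rather than in $U$ and using convexity of the operator-norm unit ball.

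\textbf{Trade-offs.} The paper's argument is more modular: any spectral function whose scalar profile is Lipschitz on $[0,1]^k$ goes through verbatim, and no differentiability questions arise. Yours is more self-contained and elementary, avoiding an external eigenvalue-perturbation lemma.

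\textbf{Two small slips.} First, with $A=P\bar U\bar U^TP^T$ you have $A^T=A$, so $V=\bar A$, not $A^T$. This is harmless, since $VV^\dagger$ and $\overline{VV^\dagger}$ have the same spectrum. Second, you reuse $V$ both for the second unitary and for the $k\times k$ matrix $PUU^TP^T$; rename one of them to avoid the clash.
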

\begin{proof}
Recall from \eqref{eqn:s2-tanh} that $S_2(U)$ can be written in terms of the eigenvalues $(\lambda_j)_j$ of $W=VV^\dagger$ where $V=P_\Gamma UU^TP_\Gamma^T$, as 
\begin{align}
S_2(U)=\Tr\log\cosh(2s)I_k+\frac{1}{2}\sum_{j=1}^k\log(1-\tanh^2(2s)\lambda_j).
\end{align}
Let $g(x_1,\ldots,x_k)=\frac{1}{2}\sum_{j=1}^k\log(1-\tanh^2(2s)x_j)$, which by taking derivatives is  seen to be Lipschitz with respect to the Euclidean norm on $[0,1]^k$, with Lipschitz constant $L=\frac{\sqrt{k}}{2}\frac{\tanh^2(2s)}{1-\tanh^2(2s)}=\frac{\sqrt{k}}{2}\sinh^2(2s)$.
(The Lipschitz constant for $g$ follows directly from $\sum_i |x_i - y_i | = \lVert x-y \rVert_1 \leq \sqrt k \lVert x-y \rVert_2$ and the Lipschitz constant for log defined on the domain $[x_0,\infty)$ being $1/x_0$.) 
Let $\tilde g(VV^\dagger):=g(\lambda_1(VV^\dagger),\ldots,\lambda_k(VV^\dagger))$. For unitary matrices $U$ and $U'$ with corresponding $V=P_\Gamma UU^TP_\Gamma^T$ and $V'=P_\Gamma U'U'^TP_\Gamma^T$, we then have $|S_2(U)-S_2(U')|=|\tilde g(VV^\dagger)-\tilde g(V'V'^\dagger)|$.
To relate the Lipschitz constant of $S_2(U)$ to that of $g$, we use
\begin{lem}[Hoffman--Wielandt]
Let $A,B$ be $n\times n$ self-adjoint matrices, with eigenvalues $\lambda_1^A\le\cdots\le \lambda_n^A$ and $\lambda_1^B\le\cdots\le\lambda_n^B$. Then
\begin{align}
\sum_{i=1}^n(\lambda_i^A-\lambda_i^B)^2&\le \Tr\left[(A-B)^2\right] = \lVert A - B \rVert_\hs^2.
\end{align}
\end{lem}
See e.g.~\cite[Lemma 2.1.19]{AGZ-book} for a proof.
Applying this to $A=VV^\dagger$ and $B=V'V'^\dagger$, we obtain
\begin{align*}
|S_2(U)-S_2(U')|&=|\tilde g(VV^\dagger)-\tilde g(V'V'^\dagger)|\\
&\le \frac{\sqrt{k}}{2}\sinh^2(2s)\|VV^\dagger-V'V'^\dagger\|_\hs\\
&= \frac{\sqrt{k}}{2}\sinh^2(2s)\|VV^\dagger-VV'^\dagger+VV'^\dagger-V'V'^\dagger\|_\hs\\
&\le \sqrt{k}\sinh^2(2s)\|V-V'\|_\hs\\
&\le 2\sqrt{k}\sinh^2(2s)\|U-U'\|_\hs,\numberthis
\end{align*}
where we used that $\|AB\|_\hs\le \|A\|_\mathrm{op}\|B\|_\hs$ and that $V, V',P_k,P_k^T$ all have operator norm $\le1$.
\end{proof}

\subsection{Haar limit typicality}
In the case where $U_n$ is fully Haar random from $\U(n)$, the Lipschitz constant in Lemma~\ref{lem:lipschitz} and concentration properties of the Haar measure on $\U(n)$ can be used to answer a question about weak vs strong typicality of $S_2(U_n)$ mentioned in \cite{iosue2023page}. 
\begin{thm}[concentration of Haar measure]\label{thm:haar-concentration}
Let $F:\U(n)\to\R$ be $L$-Lipschitz with respect to the Hilbert--Schmidt norm. Then for $t>0$,
\begin{align}\label{eqn:concentration}
\P[F(U_n)\ge \E F(U_n)+t]&\le e^{-nt^2/(12L^2)}.
\end{align}
\end{thm}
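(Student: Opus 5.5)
This is the standard Gromov--Milman concentration on $\U(n)$. The route I would take is the Bakry--Émery/log-Sobolev approach, reducing to $\mathrm{SU}(n)$ where Ricci curvature is positive, and then handling the central $\U(1)$ factor separately. The target is exactly the constant $12$.

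\emph{Step 1: reduction to $\mathrm{SU}(n)\times\U(1)$.} Use the map
\[
(V,\theta)\mapsto e^{i\theta}V, \qquad \mathrm{SU}(n)\times[0,2\pi/n)\to\U(n).
\]
This map is onto, and it pushes the product of Haar measure on $\mathrm{SU}(n)$ and uniform measure in $\theta$ forward to Haar measure on $\U(n)$. Both $V$ and $\theta$ are measured in Hilbert--Schmidt norm. For the $\theta$ direction this gives $\|e^{i\theta}V-e^{i\theta'}V\|_\hs\le\sqrt n|\theta-\theta'|$. Consequently, if $F$ is $L$-Lipschitz on $\U(n)$, then $(V,\theta)\mapsto F(e^{i\theta}V)$ satisfies:
\begin{itemize}
\item it is $L$-Lipschitz in $V$ for fixed $\theta$;
\item it is $L\sqrt{n}$-Lipschitz in $\theta$ on an interval of length $2\pi/n$.
\end{itemize}

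\emph{Step 2: log-Sobolev on each factor.} On $\mathrm{SU}(n)$ with the Hilbert--Schmidt (Killing-proportional) metric, the Ricci curvature equals $n/2$. By Bakry--Émery, Haar measure on $\mathrm{SU}(n)$ therefore satisfies a log-Sobolev inequality with constant $c=2/n$ (in the normalization $\mathrm{Ent}(f^2)\le 2c\,\E|\nabla f|^2$).

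For the circle factor, use the standard log-Sobolev inequality for uniform measure on an interval or circle of length $\ell=2\pi/n$. Its constant is of order $\ell^2/\pi^2\sim 4/n^2$. After rescaling by the $\sqrt n$ Lipschitz factor, this gives an effective constant of order $4/n$ as well.

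Tensorization of log-Sobolev inequalities then gives a log-Sobolev inequality for the product measure with constant $\max$ of the two, which is $O(1/n)$. The point to check here is that the resulting constant is at most $6/n$. That is, I need the combined sub-Gaussian variance proxy to be $\sigma^2\le 6L^2/n$, since $e^{-t^2/(2\sigma^2)}$ with $\sigma^2=6L^2/n$ gives exactly $e^{-nt^2/(12L^2)}$.

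\emph{Step 3: Herbst argument.} A log-Sobolev inequality with constant $c$ implies, via Herbst, that for $L$-Lipschitz $F$
\[
\E e^{\lambda(F-\E F)}\le e^{c\lambda^2L^2/2}.
\]
Chebyshev's exponential inequality, optimized in $\lambda$, then gives the tail bound $\P[F\ge\E F+t]\le e^{-t^2/(2cL^2)}$. With $c\le 6/n$ this is $e^{-nt^2/(12L^2)}$, which is \eqref{eqn:concentration}.

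\emph{Main obstacle.} The delicate step is bookkeeping the constant for the $\U(1)$ factor, so that the final constant is $12$ rather than something worse. If that becomes messy, I would simply cite the known result. This is Theorem 5.17 of Meckes, \emph{The Random Matrix Theory of the Classical Compact Groups}, which states precisely this bound for $\U(n)$ with the Hilbert--Schmidt metric and the constant $12$, and whose proof is the coupling and tensorization argument above.
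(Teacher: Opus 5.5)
The paper gives no argument of its own for this statement. It only cites Meckes' Theorems 5.5 and 5.16: Herbst's argument plus the log-Sobolev constant $6/n$ for Haar measure on $\U(n)$. Your scheme is the standard proof behind that citation: write $U=e^{i\theta}V$ with $\theta\in[0,2\pi/n)$ and $V\in\mathrm{SU}(n)$, use Bakry--\'Emery with $\mathrm{Ric}=n/2$ on $\mathrm{SU}(n)$ and the interval inequality in $\theta$, tensorize, then apply Herbst. So the route is right.

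The problem is the step you flag yourself, which is exactly where the $12$ is decided, and as you phrase it, it does not close. Suppose you tensorize to the single constant $\max(4/n,2/n)=4/n$ for the product metric, with $\theta$ rescaled by $\sqrt n$. Step~1 only tells you that $G(\theta,V)=F(e^{i\theta}V)$ is $L$-Lipschitz in each rescaled variable separately. That gives $|\nabla G|^2\le 2L^2$, hence variance proxy $8L^2/n$ and the tail $e^{-nt^2/(16L^2)}$, which is weaker than the statement.

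Either of two small additions fixes this.

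(a) Keep the two Dirichlet energies separate, as subadditivity of entropy allows. Applied to $e^{\lambda G/2}$ this gives $\mathrm{Ent}(e^{\lambda G})\le 2\cdot\frac{4}{n^2}\E|\partial_\theta e^{\lambda G/2}|^2+2\cdot\frac{2}{n}\E|\nabla_V e^{\lambda G/2}|^2$. With $|\partial_\theta G|\le\sqrt n\,L$ and $|\nabla_V G|\le L$, the right side is $\le\frac{\lambda^2}{2}\cdot\frac{6L^2}{n}\E e^{\lambda G}$. Running Herbst directly on the product space (legitimate because the pushforward is Haar) then gives exactly $e^{-nt^2/(12L^2)}$. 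The sum $4/n+2/n$ is where the $6/n$ in Meckes' table comes from.

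(b) Alternatively, note that $d(e^{i\theta}V)=e^{i\theta}V\,(i\,d\theta\,I+V^{-1}dV)$, and $V^{-1}dV$ is traceless, hence Hilbert--Schmidt orthogonal to $iI$. So the parametrization is a local isometry for the metric $n\,d\theta^2+\|dV\|_\hs^2$. Then $|\nabla G|\le L$, and the max form of tensorization gives constant $4/n$, which is stronger than needed.

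On the fallback citation: the $\U(n)$-specific constant $12$ comes from combining Theorems 5.16 and 5.5, which is what the paper cites. The product-of-groups concentration theorem in that section is stated uniformly over the classical groups, with an $(n-2)$-dependent exponent, so it is not ``precisely this bound.''
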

For a proof, see e.g.~{\cite[Theorems 5.5, 5.16]{meckes2019book}}.
The bound \eqref{eqn:concentration} is a subgaussian tail bound, and implies that $F(U_n)$ is a \emph{subgaussian random variable}, which is equivalent to other properties such as control on the growth of moments (see e.g.~\cite[Prop. 2.5.2]{vershynin2018book}).
The variance proxy from \eqref{eqn:concentration} is $\sigma^2=6L^2/n$ for $L=2\sqrt{k}\sinh^2(2s)$, which is similar to (though must be larger than) the exact variance calculated in \cite{iosue2023page}.
Theorem~\ref{thm:haar-concentration} then implies for any $t>0$,
\begin{align}\label{eqn:subgaussian}
\P\left[\left|S_2(U_n)-\E S_2(U_n)\right|>t\right] &\le 2e^{-cnt^2/(k\sinh^4(2s))}.
\end{align}
Combined with the size of $\E S_2(U_n)$, this can be used to recover typicality results from \cite[\S4]{iosue2023page}. In \cite{iosue2023page}, it was shown that for $k=rn$, $r\in(0,1)$ fixed, and fixed squeezing parameter $s$, that there is weak typicality: for any $\varepsilon>0$,
\begin{align}
\lim_{n\to\infty}\P[|S_2(U)-\E S_2(U)|<\varepsilon\E S_2(U)]=1.
\end{align}
Existence or absence of strong typicality, which would be $\lim_{n\to\infty}\P[|S_2(U)-\E S_2(U)|<\varepsilon]=1$ (convergence in probability), was left open in this case.
We can use the control of higher moments implied by \eqref{eqn:subgaussian} to show absence of strong typicality.
For fixed $r$ and $s$, the variance $\sigma^2$ as determined in \cite{iosue2023page} is constant.
Intuitively, the only way to have constant variance and strong typicality for $Y_n:=S_2(U_n)-\E S_2(U_n)$ is if $Y_n$ takes very large values to cancel out the vanishing (by strong typicality assumption) probability $\P[|Y_n|>\varepsilon]$. But this forces $Y_n$ to have large moments, which will be incompatible with the subgaussian moment property. Uniformly bounded fourth moment (or any $2+\delta$ moment) is enough to rule out strong typicality: We have
\begin{align*}
\sigma^2\leftarrow\E|Y_n|^2&=\E[|Y_n|^2\oneb_{|Y_n|^2\le \sigma^2/2}]+\E[|Y_n|^2\oneb_{|Y_n|^2> \sigma^2/2}]\\
&\le \frac{\sigma^2}{2}+(\E|Y_n|^4)^{1/2}\P[|Y_n|>\sigma/\sqrt{2}]^{1/2},
\end{align*}
where we applied Cauchy--Schwarz to obtain the fourth moment bound.
Then if the limiting variance $\sigma^2$ is nonzero and $\limsup \E|Y_n|^4<\infty$ (which is implied by the subgaussian tail bound since the variance proxy is a constant),
\begin{align}
\liminf\P[|Y_n|>\sigma/\sqrt{2}]&\ge \frac{\sigma^4}{4\limsup\E|Y_n|^4}>0,
\end{align}
so $Y_n$ is not strongly typical.

\vspace{2mm}
\noindent
\textbf{Acknowledgments.}
L.S. thanks Sarah Miller for discussions on boson sampling.
V.G. was sponsored by the Army Research Office under Grant Number W911NF-23-1-0241. J.T.I.~and A.V.G.~acknowledge support from the U.S.~Department of Energy, Office of Science, Accelerated Research in Quantum Computing, Fundamental Algorithmic Research toward Quantum Utility (FAR-Qu). J.T.I.~and A.V.G.~were also supported in part by the ARL (W911NF-24-2-0107), ONR MURI, DoE ASCR Quantum Testbed Pathfinder program (award No.~DE-SC0024220), NSF QLCI (award No.~OMA-2120757), NSF STAQ program, AFOSR MURI,  DARPA SAVaNT ADVENT, and NQVL:QSTD:Pilot:FTL. J.T.I.~and A.V.G.~also acknowledge support from the U.S.~Department of Energy, Office of Science, National Quantum Information Science Research Centers, Quantum Systems Accelerator (award No.~DE-SCL0000121). Y.-X.W.~acknowledges support from a QuICS Hartree Postdoctoral Fellowship.

\bibliographystyle{amsalpha_edit}
\bibliography{entanglement.bib}

\end{document}